\documentclass{article}
\usepackage[english]{babel}
\usepackage[letterpaper,top=2cm,bottom=2cm,left=3cm,right=3cm,marginparwidth=1.75cm]{geometry}
\usepackage{amsmath,amsthm}
\usepackage{mathabx}
\usepackage{graphicx}
\usepackage[colorlinks=true, allcolors=blue]{hyperref}
\usepackage{algorithm2e}
\RestyleAlgo{boxed}
\LinesNumbered
\usepackage[mathscr]{euscript}
\usepackage{xspace}
\usepackage{makecell}
\usepackage{soul}

\usepackage{thmtools}
\usepackage{thm-restate}

\SetKwProg{Fn}{Function}{:}{}
\SetKwFunction{FDFSFix}{Fix}
\SetKwFunction{FInsert}{Insert}
\SetKwFunction{FRemove}{Remove}
\SetKwFunction{FDynamic}{Dynamic-LLL}
\SetKwInOut{KwEnsure}{Ensure}
\SetKwInOut{KwRequire}{Require}
\SetKwInOut{KwInput}{Input}
\SetKwInOut{KwOutput}{Output}

\usepackage[capitalize, nameinlink]{cleveref}
\usepackage[section]{placeins}
\usepackage{soul}
\crefname{theorem}{Theorem}{Theorems}
\Crefname{lemma}{Lemma}{Lemmas}
\Crefname{invariant}{Invariant}{Invariants}
\Crefname{claim}{Claim}{Claims}
\Crefname{observation}{Observation}{Observations}
\Crefname{algorithm}{Algorithm}{Algorithms}
\Crefname{figure}{Figure}{Figures}
\Crefname{condition}{Condition}{Conditions}

\newtheorem{theorem}{Theorem}[section]
\newtheorem{lemma}[theorem]{Lemma}

\newtheorem{definition}[theorem]{Definition}

\newtheorem{observation}[theorem]{Observation}
\newtheorem{claim}[theorem]{Claim}
\newtheorem{condition}{Condition}

\newtheorem{remark}{Remark}
\def\AIS{\text{AIS}}

\def\calF{\mathcal{F}}

\def\Omg{\Omega}
\def\vbl{\mathsf{vbl}}
\def\vio{\mathsf{violate}}
\def\eps{\varepsilon}
\def\blank{\mathsf{Blank}}
\def\tO{\tilde{O}}

\newcommand{\algRecolor}{\textsc{Recolor}\xspace}
\newcommand{\algCompare}{\textsc{Compare}\xspace}
\newcommand{\algResample}{\textsc{Resample-and-Find}\xspace}
\newcommand{\algCheck}{\textsc{Check-Flaw}\xspace}

\newcommand{\TCheck}{\mathcal{T}_\text{check}\xspace}
\newcommand{\TResample}{\mathcal{T}_\text{resample}\xspace}
\newcommand{\TCompare}{\mathcal{T}_\text{compare}\xspace}

\newcommand{\LLL}{Lovász Local Lemma\xspace}
\newcommand{\algOneShot}{\textsc{One-Shot-Coloring}\xspace}
\newcommand{\alp}{\alpha}

\newcommand{\E}{\mathbf{E}}

\newcommand{\prob}[1]{\Pr \left( #1 \right)}

\newcommand{\EE}[2][]{%
  \if\relax\detokenize{#1}\relax
    \E\left[#2\right]%
  \else
    \E_{#1}\left[#2\right]%
  \fi
}
\newcommand{\rb}[1]{\left(#1\right)}

\DeclareMathOperator{\poly}{poly}
\DeclareMathOperator{\polylog}{polylog}
\DeclareMathOperator{\Color}{Color}
\newcommand{\eqdef}{\stackrel{\text{\tiny\rm def}}{=}}
\def\InSet{\mathrm{In}}
\def\Inc{\text{In}} % Why do we have two commands for incoming??

\newcommand{\cC}{\mathcal{C}}
\newcommand{\cF}{\mathcal{F}}
\newcommand{\cL}{\mathcal{L}}

\newcommand{\Gfinal}{G_{\textsc{final}}}
\newcommand{\cCfinal}{\cC_{\textsc{final}}}
\newcommand{\Deltafinal}{\Delta_{\textsc{final}}}

\title{Dynamic Construction of the Lov\'asz Local Lemma}
\author{Bernhard Haeupler\thanks{
        INSAIT, Sofia University ``St.~Kliment Ohridski'' and ETH Zürich,
        \texttt{bernhard.haeupler@insait.ai}.
        Partially funded by the Ministry of Education and Science of Bulgaria's support for INSAIT as part of the Bulgarian National Roadmap for Research Infrastructure and through the European Research Council (ERC) under the European Union's Horizon 2020 research and innovation program (ERC grant agreement 949272).} \and 
Slobodan Mitrović\thanks{University of California, Davis. Emails: \texttt{\{smitrovic, sramach, wsheu\}@ucdavis.edu}. S. Mitrovi\'c was supported by Google Research Scholar. S. Mitrovi\'c, S. Ramachandran, and W.-H. Sheu were supported by NSF Faculty Early Career Development Program No.~2340048. Part of this work was done at INSAIT. Partially funded by the Ministry of Education and Science of Bulgaria's support for INSAIT as part of the Bulgarian National Roadmap for Research Infrastructure.} \and
Srikkanth Ramachandran\footnotemark[2] \and
Wen-Horng Sheu\footnotemark[2] \and
Robert Tarjan\thanks{Princeton University, \texttt{ret@princeton.edu}. Partially supported by a gift from Microsoft. Part of this work was done at INSAIT. Partially funded by the Ministry of Education and Science of Bulgaria's support for INSAIT as part of the Bulgarian National Roadmap for Research Infrastructure.}
}

\date{}

\begin{document}
\maketitle

\begin{abstract}
This paper proves that a wide class of
local search algorithms extend \emph{as is} to the fully dynamic setting with an adaptive (and even clairvoyant)
adversary, achieving an amortized $\tilde{O}(1)$ number of local-search steps per update.

\medskip

A breakthrough by Moser (2009) introduced the witness-tree and entropy compression techniques for analyzing local resampling processes for the Lovász Local Lemma. 
These methods have since been generalized and expanded to analyze a wide variety of local search algorithms that can efficiently find solutions to many important local constraint satisfaction problems. 
These algorithms either extend a partial valid assignment and backtrack by unassigning variables when constraints become violated, or they iteratively fix violated constraints by resampling their variables. 
These local resampling or backtracking procedures are incredibly flexible, practical, and simple to specify and implement. Yet, they can be shown to be extremely efficient on static instances, typically performing only (sub)-linear number of fixing steps.  
The main technical challenge lies in proving conditions that guarantee such rapid convergence.

\smallskip

This paper extends these convergence results to fully dynamic settings, where an adaptive adversary may add or remove constraints. 
We prove that applying the same simple local search procedures to fix old or newly introduced violations leads to a total number of resampling steps near-linear in the number of adversarial updates. 

\smallskip

Our result is very general and yields several immediate corollaries. 
For example, letting $\Delta$ denote the maximum degree, for a constant $\eps$ and $\Delta = \text{poly}(\log n)$, we can maintain a $(1+\eps) \Delta$-edge coloring in $\text{poly}(\log n)$ amortized update time against an adaptive adversary.
The prior work for this regime has exponential running time in $\sqrt{\log n}$ [Christiansen, SODA '26].
%we can maintain a $\Delta + \tilde{O}(\sqrt{\Delta})$-edge coloring in $(\Delta \log n)^{O(\log \Delta)}$ work per edge insertion 
Also, our result implies an $O(\frac{\Delta}{\ln \Delta})$ vertex coloring in a triangle-free dynamic graphs in $\tilde{O}(\Delta^3)$ update time.
Moreover, our approach yields a fully dynamic algorithm for a good local scheduling in the style of Leighton-Maggs-Rao under adversarial path additions and deletions.
%\stodo{Be a bit more explicit about what improvements we make.}
\end{abstract}

\newpage

\tableofcontents

\newpage
\section{Introduction}

The \LLL (LLL) is a powerful probabilistic, non-constructive tool that guarantees the existence of a wide variety of ``flawless'' combinatorial structures. 
These structures are described as simple objects satisfying several imposed constraints.
For example, the $k$-coloring of a graph $G$ can be described as a simple function $f : V(G) \rightarrow [k]$ that satisfies $m$ constraints, namely $f(u) \neq f(v)$ for every edge $e \in E(G)$. 
The \LLL is useful in scenarios where it is easy to satisfy exactly one of the constraints, but hard to determine whether all of them can be simultaneously satisfied. 
Indeed, in our example, it is extremely simple to find a coloring that does not violate only a single edge, but notoriously hard to determine if a valid coloring exists. 
This tool was invented in 1974 by Erdős and László~\cite{LLL74} to obtain bounds on the chromatic number of hypergraphs.
Since then, this tool has found a wide-range of applications, including in obtaining bounds on Ramsey numbers~\cite{spencer1977asymptotic}, graph colorings~\cite{LLL74, alon1991parallel}, satisfiability of $k$-CNF formulas~\cite{beck1991algorithmic}, analysis of communication networks~\cite{alon1989complexity}, path selection on expander graphs~\cite{broder1997static}, packet routing~\cite{Leighton1994,feige2002improved}, Latin transversals~\cite{erdos1991lopsided}, hypergraph partitioning~\cite{leighton2001new}, and even integer programming~\cite{srinivasan2006extension}.
Over time, many refinements and generalizations have been developed, such as the asymmetric and lopsided versions~\cite{spencer1977asymptotic,erdos1991lopsided}, which further broadened the scope of the lemma.
%\stodo{\cite{srinivasan2006extension} could be a quite interesting application to open up.}

The LLL is a foundational \emph{existential} result, guaranteeing that a collection of undesirable events can be simultaneously avoided under suitable probabilistic dependencies.
A large body of follow-up work has focused on the \emph{algorithmic} LLL, that is, developing efficient procedures to construct a flawless object whose existence is guaranteed by the lemma.
The first breakthrough was due to Beck~\cite{beck1991algorithmic}, who provided an algorithmic interpretation of the lemma for sparse $k$-CNF formulas by partially resampling violated clauses. 
Since then, numerous refinements and extensions were developed~\cite{alon1991parallel,molloy1998further,czumaj2000coloring,czumaj2000new,srinivasan2008improved}.
This progression culminated in the seminal Moser–Tardos (MT) framework~\cite{moser2009constructive,moser2010constructive}, whose remarkably elegant resampling algorithm gave the first fully constructive proof of the Lovász Local Lemma.
In particular, the MT framework consists of a simple procedure that fixes flaws one by one through resampling -- \emph{possibly introducing new flaws} in the process -- until all flaws are eliminated. 
% Since then, the entropy compression and witness tree techniques of Moser et al. have led to very general-purpose recipes proving that a wide variety of simple resampling and back-tracking processes terminate in (sub-)linear number of resamplings. 
%\stodo{Check whether someone else before proposed this resampling scheme.}
Given the wide applicability and importance of the LLL, we ask:
\begin{center}
    \emph{How efficiently can the algorithmic LLL be made dynamic?}
\end{center}
Obtaining fast dynamic algorithms is typically much harder than designing fast static ones.  
Even when small update times are achievable under an \emph{oblivious adversary} -- one that fixes the sequence of updates before the computation begins -- achieving comparable performance under an \emph{adaptive adversary} -- one that chooses each update based on the algorithm’s previous outputs -- is far more challenging.  
Moreover, there are now provable separations between the achievable complexities in these two models.  
For a collection of such examples, we refer the reader to \cref{sec:our-results} and to~\cite{beimel2022dynamic,ben2022framework,kaplan2021separating,bernstein2025separations} and references therein.
% \stodo{Make bibtex \href{https://arxiv.org/abs/2510.20341}{https://arxiv.org/abs/2510.20341}}\rtodo{done.}

Furthermore, in numerous applications, one algorithm invokes others as subroutines -- for example, a dynamic algorithm may rely on a (dynamic) edge-coloring procedure.  
Even if these subroutine invocations are not adversarial \emph{per se}, they are often generated \emph{adaptively}, as they depend on the current state and intermediate computation of the calling algorithm.  
In such cases, it is highly beneficial for the underlying subroutine to remain efficient under an adaptive adversary.  
Consequently, developing general tools such as the algorithmic LLL that are efficient even under adaptive updates can unlock a wide range of new dynamic applications.

% To give several illustrative examples, in the static setting, a proper $\Delta + 1$ vertex coloring can be computed in near-linear time by a simple greedy procedure.  
% In contrast, dynamic coloring remains much harder.  
% Bhattacharya et al.~\cite{bhattacharya2018new} gave a randomized algorithm maintaining a valid $\Delta + 1$ coloring in $O(\log \Delta)$ expected amortized update time, but their analysis holds only against an \emph{oblivious} adversary.
% The same work also provides a deterministic algorithm for maintaining a $(1 + o(1))\Delta$ coloring in $\poly \log(\Delta)$ amortized update time.
% More recently, Behnezhad et al.~\cite{behnezhad2025adaptive} obtained the first $\Delta + 1$ vertex coloring against an adaptive adversary with $\tO(n^{8/9})$ update time.  
% Subsequent work further improved this to $\tO(n^{2/3})$ under the same adaptive model~\cite{flin2025adaptive}.  

Perhaps surprisingly, we show that dynamic variants of the LLL can be maintained with only poly-logarithmic amortized changes per update\footnote{The actual update time is application-dependent due to the overhead of finding and fixing flaws.}, and even under a \emph{clairvoyant adversary}, which is a more powerful variant of adaptive adversaries that have access to the random bits used by the algorithm.
Specifically, under $q$ updates -- insertions and deletions -- the total number of resampling steps is $\tO_{\psi,\eps}(q + m + \log |\Omega|)$ even in the most general LLL setting currently known~\cite{achlioptas2019beyond}, where $m$ denotes the number of bits sufficient to represent any update sequence of length $q$, and $\Omega$ is the finite set of all objects.
% Note that $\log |\Omega|$ is the number of bits required to encode an object, and is often the output size of a search problem.
% Observe that the $O(m)$ term is needed for any algorithm that reads the entire the update sequence. Our bounds essentially are equivalent to the bounds in the static setting.
% \footnote{Because of the slack-ness in the constraints of the \LLL, it might be possible to avoid reading the entire sequence in certain applications.}
Here, we hide with the $\tilde{O}_{\psi,\eps}(\cdot)$ notation parameters of the LLL instance. In most typical applications of the LLL, these parameters are not significant. 
\footnote{See \cref{cond:general} for description of these parameters. For most LLL instances, the parameter $\psi$ upper bounds (up to constant factors) the probability that a randomly selected object has a given flaw. The parameter $\eps$ specifies that the instance satisfies the LLL condition with a slack of $\eps$. Similar parameterizations have been used in \cite{moser2010constructive,HaeuplerH17,achlioptas2019beyond}.}

\subsection{Our results}
\label{sec:our-results}

\paragraph{Dynamic \LLL.} 
The main contribution of our work is to formulate a \emph{dynamic} version of the \LLL and establish that the LLL tools can be used to obtain efficient dynamic algorithms for the \textit{strongest adversaries} in an almost black-box way.
Our dynamic formulation allows each flaw to be added or removed. When a flaw is added to removed, the set of objects that are considered flawless changes. An algorithm for the dynamic LLL is required to output a flawless object after \textit{every} update.
Our framework applies to any LLL instance that fits into the most general LLL setting currently known \cite{achlioptas2019beyond}.
This includes problem instances where the existence of a flawless object is guaranteed by the classic \LLL or its variants such as \textit{entropy compression} \cite{molloy2019list}.

\begin{theorem}[Informal version of \cref{thm:dyn-full}] \label{thm:informal}
    Consider an LLL instance $\cL$. Let $q$ be the number of updates to $\cL$ and $m$ be the number of bits needed to represent $q$ updates.
    Then, a flawless object for $\cL$ can be maintained dynamically under an adaptive adversary in total of $\tO_{\psi, \eps}(q + m + \log |\Omega|)$ flaw resamples with probability $1 - 2^{-q}$.
\end{theorem}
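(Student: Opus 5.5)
The plan is to reduce the dynamic process to a single static-style entropy-compression (forward-looking) argument in the Achlioptas--Iliopoulos--Sinclair framework, run over the whole update sequence at once. The algorithm is the obvious one. We maintain a current object $\sigma \in \Omega$. After each adversarial update (adding or removing a flaw), we run the same local search: repeatedly pick a present flaw according to a fixed priority rule and address it with its resampling distribution, until no flaw is present. Removing a flaw can only shrink the set of present flaws, so deletions cost nothing directly. Their only effect is to change the flaw set $\cF_t$ of the instance at time $t$, and each $\cF_t$ must satisfy the LLL condition (\cref{cond:general}) with slack $\eps$ and parameter $\psi$.

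The key step is to bound the probability that the total number of resamplings over all $q$ updates is at least $T$, by a counting/encoding argument on the entire trajectory. We concatenate the trajectories of all phases into one trajectory of length $T$, namely the sequence of (flaw addressed, resulting state) pairs. Following \cite{achlioptas2019beyond}, the probability of any fixed sequence of addressed flaws is bounded by a product of charges $\gamma_f$ times the measure ratio. This sequence is the witness ``forest'' or, in the forward-looking view, the list of sets of flaws newly introduced at each step.

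To reconstruct the flaw sequence we encode the following:
\begin{itemize}
\item the update sequence, in $m$ bits;
\item the final state, in $\log|\Omega|$ bits (this replaces the initial-state term in the static analysis);
\item for each phase, the set of flaws present immediately after the update. Here the key observation is that after phase $t-1$ ends the object is flawless for $\cF_{t-1}$. So the only flaws present at the start of phase $t$ are the single inserted flaw, or none at all for a deletion, which costs $O(1)$ bits per update;
\item for each resampling, the subset of flaws it introduces, which is charged via $\psi$ and the LLL condition exactly as in the static case.
\end{itemize}

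Summing over encodings, the slack-$\eps$ condition makes each resampling step contribute a factor $(1-\eps)$ (or $2^{-\Omega(\eps)}$). Hence $\Pr[\text{$\ge T$ steps}] \le 2^{m + O(q) + \log|\Omega| + O(\log(1/\psi))\cdot(\cdots)}\,(1-\eps)^{T}$, which is at most $2^{-q}$ once $T = \tO_{\psi,\eps}(q+m+\log|\Omega|)$.

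Adaptivity, even clairvoyance, is handled by taking a union bound over the adversary's choices. Since the adversary is a function of the randomness, its update sequence is determined by the random bits. The encoding therefore includes the actual update sequence, and its $m$ bits are exactly what we pay for enumerating all possible update sequences. A clairvoyant adversary cannot beat a bound that already holds simultaneously for all $2^m$ sequences.

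The main obstacle is that the AIS charge analysis uses a fixed measure $\mu$ and a fixed flaw set, while here the flaw set changes between phases. The concatenated trajectory must be consistent with a single product-of-charges bound. I would handle this by working in the union flaw set $\bigcup_t \cF_t$ but restricting each step's ``introduced flaws'' to those in the current $\cF_t$. I would then verify that the distortion/measure-ratio terms telescope across phase boundaries, since the state does not change at an update. Ideally the telescoping leaves only the ratio at the start and end, which is bounded by the $\log|\Omega|$ term. I would also need the charges for $\cF_t$ to be dominated by the union-graph charges, which follows if the LLL condition is assumed uniformly for every instance the adversary may produce.
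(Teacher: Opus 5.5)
Your proposal is correct and is essentially the paper's argument. You use the per-update breakage forest whose only possible root is the inserted flaw (because the state is flawless between updates), the AIS bound $\Pr[\phi]\le(\max_\sigma\theta(\sigma)/\mu(\sigma))\prod_v\gamma^{S_v}_{v}$ summed via the $\psi$-weighted generating process to get $(1-\eps)^s\prod_i(1+\psi_i(F_i))/\min_S\Psi(S)$, and a union bound over the $2^m$ update sequences for adaptive and clairvoyant adversaries. The obstacle you flag needs no union flaw set (whose charges the per-instance conditions would not control anyway), since the inductive bound $\Pr[\sigma_t=\tau,\dots]\le C\prod_{i\le t}\gamma^{S_{v_i}}_{v_i}\,\mu(\tau)$ may use at step $i$ the charges of whichever instance is active: the paper assumes $\mu$ is the same across instances, updates do not move the state, and the $\psi$'s cancel within each phase because every phase ends with no pending flaws, leaving only the roots' $\psi$ and the pending set at truncation (your $O(\log(1/\psi))\cdot(\cdots)$ term, at most $q\log(1/\psi_{\min})$).
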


%% \btodo{I wanted to have a comparison with AIS's running time, but their refined time bound in Remark 2.4 can really avoid q resamplings in some extreme example (acyclic coloring)}
% In the static setting \cite{achlioptas2019beyond}, the number of resampling steps required to compute \emph{one} flawless object for $q$ flaws is upper bounded by $\tO_{\psi, \eps}(q + T)$, where $T = O(\log|\Omega|)$ is an additive factor depending on the instance.
% Our result matches this time bound up to a $\tO_{\psi, \eps}(1)$ factor.
% In many LLL settings, $m = \tO(q)$, and the number of resampling steps in \cref{thm:informal} matches, up to poly-logarithmic factors, the number of resamplings needed in the static setting for computing \textit{one} flawless object for $q$ flaws.
% In many LLL settings \footnote{\cite{achlioptas2019beyond} had a more refined time bound in Remark 2.4. However, for most LLL settings considered}, , applying the result for static LLL \cite{achlioptas2019beyond}[Theorem 2.4] 
\cref{thm:informal} should be compared with \cite{achlioptas2019beyond}'s result in the static setting.
\cite[Theorem 2.4]{achlioptas2019beyond}\footnote{This theorem is the main result of \cite{achlioptas2019beyond}. In the same paper, \cite{achlioptas2019beyond} also gave a more refined time bound in Remark 2.4, which provides improved running time for certain LLL instances. For most LLL settings considered, this refined bound still has an $\Omg(q)$ or $\Omg(\log |\Omega|)$ factor.} showed that finding \emph{one} flawless object with respect to a static set of $q$ flaws requires $\tO_{\psi, \eps}(q + T)$ resampling steps, where $T \geq \log_2 |\Omega|$ is a factor depending on the LLL instance.
When $m = \tO(q)$, \cref{thm:informal} achieves the same upper bound for maintaining a flawless object \emph{dynamically} over $q$ updates.

In many LLL settings, $\Omega$ is a product space on $n$ variables, each flaw is defined on at most $k$ variables, each flaw shares variables with at most $d$ other flaws, $m$ is either $\Theta(q)$ or $\Theta(qk)$, and each resampling step can be implemented in $\poly(dk)$ time.
In such settings, the above theorem implies an $\tO(\poly(dk))$ amortized update time.
This complexity depends only on the number of variables involved in a flaw and all its neighboring flaws, rather than on the entire input size.

% \cref{thm:informal} implies new dynamic algorithms for several combinatorial problems.
This paper also presents details of several dynamic algorithms that are obtained by our framework based on different variants of the LLL.
The update time for each specific application is summarized as follows.

% Hence, \cref{thm:application}(A1) gives an $\tO(\poly(k, D))$ amortized update time with an exponential tail.

% \begin{theorem}[Informal]
%     Consider an
% \end{theorem}

\paragraph{$k$-CNF (\cref{sec:CNF}).}
Given a CNF in which each clause has exactly $k$ distinct variables, the $k$-CNF problem asks whether the formula is satisfiable.
This problem is a standard application of the \LLL.
Define the \emph{dependency degree} $d$ as the maximum number of clauses that a clause shares variables with.
The \LLL implies that if $d < 2^k / e - 1$, then the formula is satisfiable;
in addition, \cite{moser2010constructive}'s framework can find such an assignment in polynomial time.
By applying our framework, we extend this result to the dynamic setting, where a clairvoyant adversary can add or remove clauses, and $d < (1 - \eps) \cdot (2^k / e - 1)$ is satisfied at any moment for some constant $\eps$.
The amortized update time of our algorithm is polynomial in $dk$.
% For this problem, $\Omega$ is a product space of $n$ binary variables, each flaw (clause) is defined on $k$ variables, and each flaw shares variables with at most $d$ other flaws.

\begin{theorem}[Dynamic $k$-CNF]
    There is an algorithm that maintains a satisfying assignment of a $k$-CNF formula with $d < (1 - \eps) \cdot (2^k / e - 1)$ under insertions and deletions of clauses, using $\tO(dk^2)$ amortized update time with probability at least $1 - 2^{-q}$ against a clairvoyant adversary.
\end{theorem}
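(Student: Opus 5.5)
The plan is to derive the theorem as a direct instantiation of the general dynamic framework, the formal version of \cref{thm:informal} (\cref{thm:dyn-full}). After that, only the per-step implementation cost remains to be bounded. The algorithm is the Moser--Tardos one run dynamically. When a clause is inserted, we add it to the flaw set and check whether the current assignment violates it. When a clause is deleted, we simply remove it; deletions never create violations. While some clause is violated, we pick one and resample its $k$ variables uniformly at random. The object space is $\Omega=\{0,1\}^n$ with the uniform product measure. A clause is a flaw of probability $2^{-k}$, and resampling it is the standard MT action, which is regenerating in the sense of \cite{achlioptas2019beyond}.

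First we check the hypotheses of \cref{cond:general}. We set $x_C = e/(d+1)$, or use the symmetric form. Then $d<(1-\eps)(2^k/e-1)$ gives $e\,2^{-k}(d+1) \le 1-\eps'$ for a constant $\eps'$, so the LLL condition holds with constant slack at every moment. Hence $\psi$ and $\eps$ are constants, and the $\tO_{\psi,\eps}$ notation hides only polylogarithmic factors. Next we bound the encoding length $m$. Each update specifies $k$ literals, which takes $O(k\log n)$ bits, so $m=O(qk\log n)$. The quantity $\log|\Omega|=n$ needs care. Only variables that occur in some inserted clause are ever touched, so we restrict $\Omega$ to those variables, at most $qk$ of them, and treat every other variable as fixed at an arbitrary value. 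Alternatively, we sample variables lazily on first use. Either way, \cref{thm:dyn-full} gives $\tO(qk)$ total resamplings with probability $1-2^{-q}$, against a clairvoyant adversary.

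Second, we bound the cost of one resampling step. We maintain for each variable the list of clauses containing it, and keep a set of violated clauses. Resampling a clause changes $k$ variables, so we re-evaluate every clause sharing one of these variables, at most $d$ of them, at $O(k)$ cost each. This gives $O(dk)$ time per resampling. An insertion or deletion costs $O(k)$, plus $O(d)$ list maintenance. Combining the two steps, the total time is $\tO(qk)\cdot O(dk)=\tO(qdk^2)$, that is, $\tO(dk^2)$ amortized per update.

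I expect the main obstacle to be the $\log|\Omega|$ and $m$ terms: we must ensure they are $\tO(qk)$ and not $\Theta(n)$. This is the point that requires the lazy or restricted variable space. We also need to confirm that \cref{thm:dyn-full} permits the active variable set to grow over time. If it does not, I would fix $\Omega$ as the cube over all variables appearing in the whole update sequence. Since the adversary is clairvoyant, this set can depend on the randomness, so it is cleaner to argue via lazy sampling: a fresh uniform bit drawn on first access is distributed identically to a pre-sampled one, so the entropy-compression count is unaffected.
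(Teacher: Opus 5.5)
Your argument is correct, but it takes a different route from the paper. The paper never invokes \cref{thm:dyn-full} for this theorem. Instead it feeds the instance into the AIS-based black box \cref{thm:application}(A2), as follows.
\begin{itemize}
\item By \cref{lem:prod}, the MT charges are $\gamma_i=p(C_i)$.
\item The local union bound is converted to \cref{cond:general} with $\psi_i=e\,p(C_i)$, so $\psi_{\min}=e2^{-k}$ and $\psi_{\max}=e$.
\item The start state is an arbitrary fixed assignment, flaws are chosen by a fixed permutation through a priority queue, and $\zeta=(2n^k)^q$.
\end{itemize}
With $\TResample=\tO(k\Delta)$, $\TCheck=O(k)$ and $\TCompare=O(1)$ (its $\Delta$ is your $d$), this gives $\tO(n+qk)$ resamplings and total time $\tO((n+qk)k\Delta)$.

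You instead use the MT forward-witness bound. Its uniformly random start leaves no $\log|\Omega|$ term at all, and you realize that start by lazy sampling. So your worry about restricting or growing $\Omega$ is unnecessary. Lazy sampling \emph{is} the uniform start over all $n$ variables, and the clairvoyant adversary is covered by the union bound over the $2^m$ sequences.

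This gives you a cleaner result. The paper's total carries an additive $nk\Delta$ term, coming from $\log|\Omega|=n$ for a deterministic start plus $O(n)$ initialization. That term is dropped when the paper states $\tO(k^2\Delta)$ amortized, which is only justified when $qk=\tilde\Omega(n)$. Your bound of $\tO(qk)$ resamplings at $O(dk)$ each has no such term. In exchange, the paper's route is the same black box it reuses for all its other applications, and it accepts arbitrary start states and general charges.

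Two small repairs. First, the asymmetric parameter should be $x_C=1/(d+1)$, not $e/(d+1)$. With $e/(d+1)$ you get $x_C(1-x_C)^d\approx e^{1-e}/(d+1)$, which is too small, and $x_C$ exceeds $1$ when $d\le 1$. Your symmetric fallback is fine. Second, to make the forward-witness step airtight, select violated clauses by a fixed priority order rather than arbitrarily, as the paper's implementation does. A fixed order is what lets the witness determine the order of resamplings, and hence which table entries each node reads.
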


\paragraph{Triangle-free coloring (\cref{sec:triangle-free}).}
% Let $\Delta$ be the maximum degree.
% We achieve an $\tO(\Delta^3)$ amortized update time.
Let $G = (V, E)$ be a graph with maximum degree $\Delta$.
The \emph{list-chromatic number} of $G$ is the smallest positive integer $\ell$ satisfying the following:
For any assignment of color lists of size $\ell$ to each vertex, it is possible to obtain a proper coloring by giving each vertex a color from its list.
It is not hard to show that the list-chromatic number of any graph is $\Delta + 1$.
On triangle-free graphs, a series of works \cite{borodin1977upper,catlin1978bound,lawrence1978coveing,johansson1996asymptotic,pettie2015distributed,molloy2019list} improved this bound to $O(\frac{\Delta}{\ln \Delta})$, which is asymptotically optimal.
We provide a brief review in \cref{sec:related-work}.
The state-of-the-art result \cite{molloy2019list} is proven by an entropy compression argument, which is a variant of algorithmic LLL.
% Their proof is constructive, implying an algorithm that finds an $O(\frac{\Delta}{\ln \Delta})$-list-coloring in polynomial time.
Our framework is applicable to this variant, implying a fully dynamic algorithm that maintains an $O(\frac{\Delta}{\ln \Delta})$-list-coloring under edge updates.
To the best of our knowledge, no previous dynamic algorithm for this problem is known.
% We extend this result to the fully dynamic setting, obtaining an algorithm that maintains an $O(\frac{\Delta}{\ln \Delta})$-list-coloring under edge updates.
% Our algorithm applies to all ranges of $\Delta$.

\begin{theorem}[Triangle free coloring]
    An $O(\frac{\Delta}{\ln \Delta})$-list-coloring can be maintained dynamically on a triangle-free graph while supporting edge insertion and deletion updates in $\tO(\Delta^3)$ amortized update time with probability at least $1 - 2^{-q}$ against a clairvoyant adversary.
\end{theorem}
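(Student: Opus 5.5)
The plan is to obtain the theorem as a corollary of the general dynamic framework (\cref{thm:informal}, formally \cref{thm:dyn-full}). The input to that framework will be the algorithmic (entropy-compression) proof of Molloy~\cite{molloy2019list}, rewritten in the language of flaws and actions of \cite{achlioptas2019beyond}.

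\textbf{The static instance.} The object space $\Omega$ consists of partial colorings: each vertex either receives a color from its list $L(v)$, with $|L(v)| = C\Delta/\ln \Delta$, or is left $\blank$. There are two flaw types.
\begin{itemize}
\item \emph{Conflict flaws} $f_{uv}$: one per edge $uv$ whose endpoints share a color.
\item \emph{Bad-vertex flaws} $B_v$: $v$ is blank and too few colors of $L(v)$ remain available to it, or too many blank neighbors see a given color.
\end{itemize}
These are Molloy's conditions, which ensure that a final greedy/Hall-type completion succeeds. Addressing a flaw means resampling the colors, or blanks, of the vertices in a radius-$2$ neighborhood with Molloy's distribution.

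\textbf{Verifying \cref{cond:general}.} I would check that the charges of \cite{achlioptas2019beyond} satisfy the condition with a constant slack $\eps$ when $C$ is a large enough constant, and read off the parameter $\psi$. This largely re-derives the concentration bounds from \cite{molloy2019list}, now phrased as charges. Triangle-freeness enters exactly here: in a triangle-free graph the neighbors of $v$ form an independent set, so the events concerning available colors at $v$ depend on independent choices. Since \cref{cond:general} is a per-flaw local condition, it remains valid after every edge update, as long as the graph stays triangle-free with maximum degree $\Delta$.

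\textbf{Modeling the updates.} Inserting or deleting an edge $uv$ changes only the flaws of $u$, $v$, and their neighbors, i.e.\ $O(\Delta)$ flaws. So each edge update translates into $O(\Delta)$ flaw insertions or deletions, and the update sequence is encoded in $m = O(q\log n)$ bits. The resulting bound is $\tO(q\Delta + m + n\log\Delta)$ resamplings. I would charge the $\log|\Omega| = O(n\log \Delta)$ term to initialization, or avoid it by starting with an empty graph and all vertices blank. The amortized count of flaw addressings is then $\tO(\Delta)$ per update. I also need the final coloring to be output after every update. For this I plan to keep the greedy completion implicit, or to repair only the affected region in $O(\Delta^2)$ time.

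\textbf{Cost per resampling, and the main obstacle.} A resampling touches $O(\Delta^2)$ vertices within distance $2$. Rechecking which flaws changed costs an additional $O(\Delta)$ per touched vertex, which gives $O(\Delta^3)$ per step. Multiplying by the number of steps would give $\tO(\Delta^4)$, so the accounting needs care. I expect to get the stated $\tO(\Delta^3)$ by charging the $\Delta$ factor from flaw insertions only to cheap checks: an inserted flaw that is not violated costs $O(\Delta^2)$ to verify, and actual violated-flaw resamplings are bounded by $\tO(q)$ rather than $\tO(q\Delta)$. To support this, I would add auxiliary counters per vertex and color so that checking $B_v$ takes $O(\Delta)$ time.

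The main obstacle is verifying \cref{cond:general} for Molloy's scheme, whose state is not a plain product space and whose flaw $B_v$ is nonlocal in its availability counts. I would follow Molloy's entropy-compression count and translate it directly into the charge/\,$\psi$ bounds.
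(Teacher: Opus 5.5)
\textbf{Gap 1: the $\tO(\Delta^3)$ bound does not follow.} The step meant to give $\tO(\Delta^3)$ fails, and the cause is your choice of flaws. Your bad-vertex flaws read the availability lists of neighbors (Molloy's $T_{v,c}$-type conditions). So, as you note, an edge update changes $\Theta(\Delta)$ flaws, and the framework receives $\Theta(q\Delta)$ flaw updates.

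The resampling bounds you would invoke are \cref{thm:gen-dynLLL2}, or \cref{thm:application}(A2) with its $q\log\frac{1+\psi_{\max}}{\psi_{\min}}$ term. Both are linear in the number of flaw updates, not in the number of updated flaws that happen to be violated. They therefore give $\tO(q\Delta)$ resamplings, and at $\Theta(\Delta^3)$ each, $\tO(\Delta^4)$ amortized.

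Your remedy, that only $\tO(q)$ resamplings actually occur, is asserted but not proved. Nothing in the witness analysis charges only violated insertions. A clairvoyant adversary is also free to push many of the $\Theta(\Delta)$ modified flaws over threshold at once.

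Your radius-$2$ resampling makes this worse. It is not Molloy's move, and with list-reading flaws it affects flaws up to distance $4$, so even the $O(\Delta^3)$ per-step estimate is unjustified.

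\textbf{The missing idea.} Use flaws that depend only on $N_v$. With $\ell = 6\Delta/\ln\Delta$ and $L=\Delta^{0.7}$, the paper takes
$B_v=\{\sigma : |L_v(\sigma)|<L\}$ and $Z_v=\{\sigma : v \text{ has at least } L-1 \text{ blank neighbors}\}$.
Both are addressed by recoloring each $u\in N_v$ uniformly from $L_u(\sigma)$, blank included, with $B$-flaws prioritized. Then:
\begin{itemize}
\item An edge update $uv$ replaces only $B_u,Z_u,B_v,Z_v$, i.e.\ eight flaw updates. The intermediate flaw sets are subsets of static instances, so \cref{obs:subset-convergence} applies.
\item The reversibility bound $\sum_{\sigma\in\InSet_{Z_v}(\tau)}\rho_{Z_v}(\sigma,\tau)\le\Pr[\algRecolor(\tau,v)\in Z_v]$, together with Molloy's Lemmas~7 and~15, gives all charges at most $\Delta^{-3}$. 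Hence $\psi=2\Delta^{-3}$ and $\eps=1/4$ (\cref{lem:tri-convergence}).
\item (A2) then yields $\tO(n+q)$ resamplings at $\tO(\Delta^3)$ each.
\end{itemize}

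\textbf{Gap 2: no proper coloring is output after each update.} Under Molloy's $T_{v,c}$-type conditions, coloring the blank vertices is itself a Local Lemma step, not a greedy one. You would need a second, composed dynamic LLL whose cost you have not accounted for. Keeping the completion ``implicit'' does not output a proper coloring.

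With $Z_v$ as above, every blank vertex has at least $L-1$ non-blank available colors and at most $L-2$ blank neighbors, so completion is greedy. Phase~2 recolors each vertex whose color or list changed, in $\tO(\Delta)$ time per vertex. A resampling changes $O(\Delta)$ colors and $O(\Delta^2)$ lists, so this adds only $\tO(\Delta^3)$ per resampling.

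\textbf{Conflict flaws.} These play no role in Molloy's scheme, since his moves keep the partial coloring proper. If you intend to resample them, they need their own charge analysis. A plain recoloring would fail the LLL condition with $\ell<\Delta$ colors: each monochromatic-edge event has probability about $\ln\Delta/\Delta$, and $\Theta(\Delta)$ such events depend on it.
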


\paragraph{LMR scheduling (\cref{sec:LMR-scheduling}).}
The Leighton–Maggs–Rao (LMR) \cite{Leighton1994} scheduling result addresses the problem of routing packets through a network.
The network is modeled as a directed graph $G=(V,E)$, where vertices represent processors and edges represent communication links.
The input is a collection of paths, each corresponding to a packet that must be routed from a source vertex to a destination vertex along a specified path in the network.
Two critical parameters of this scenario are:
\begin{itemize}
    \item  Congestion ($C$): The maximum number of paths that traverse any single edge in the network.
    \item  Dilation ($D$): The maximum length of any of these paths.
\end{itemize}

The problem is to determine an efficient schedule that delivers all packets along their prescribed routes while minimizing the total completion time, assuming that each edge can transmit at most one packet per time step.
Clearly, both $C$ and $D$ are lower bounds on the completion time. Leighton, Maggs, and Rao \cite{Leighton1994} proved that any set of paths with congestion $C$ and dilation $D$ can be scheduled in $O(C+D)$ time steps.
The proof of this result is rather intricate.
Later, Rothvoss~\cite{rothvoss2013simpler} presented a greatly simpflied proof.
% Their proof can be made constructive by \cite{moser2009constructive}'s framework.

The result of \cite{rothvoss2013simpler} is obtained by composing several levels of LLL instances.
Each individual level fits into the algorithmic LLL of \cite{moser2009constructive}, but the set of flaws is defined based on a flawless object of previous levels.
Our framework is applicable to such composition of LLL instances, implying a dynamic algorithm for the scheduling problem.
The amortized update time of our algorithm is $\tO(1)$ when $C$ and $D$ are constants.
To the best of our knowledge, no previous dynamic algorithm for this problem is known.
% Their result is obtained by composing several \emph{levels} of LLL instances.
% In their scheme, the universe $\Omega$ is a product space on $k \cdot \poly(C, D)$ variables, where $k$ is the number of paths.
% The variables are partitioned into $L$ sets $V_1 \cup V_2 \cup \dots \cup V_L$, one for each level.
% Each level $\ell \in [1, L]$ is associated with a flaw set.
% Their construction iterates from level $1$ to $L$, and for each level $\ell$, applies the LLL to find a flawless assignment for $V_\ell$.

% The arguments on how to get to the $O(C+D)$ result are rather involved.
% Later, a drastic simplification by Rothvoss~\cite{rothvoss2013simpler} gives a beautiful proof that fits into the Moser-Tardos framework.
% We extend \cite{rothvoss2013simpler}'s result to the dynamic setting.
% The amortized update time of our algorithm is $\tO(1)$ when $C$ and $D$ are constants.

\begin{theorem}[Leighton-Maggs-Rao scheduling]
    An asymptotically optimal schedule can be maintained over a dynamically changing set of paths with dilation at most $D$ and congestion at most $C$ within $(C + D + \log(|E| + q))^{O(\log \log (C + D))}$ amortized update time with probability $1 - 2^{-q}$ against a clairvoyant adversary.
\end{theorem}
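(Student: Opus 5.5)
The plan is to follow Rothvoss's proof~\cite{rothvoss2013simpler}, which builds the schedule through $L = O(\log^* (C+D))$ or $O(\log\log(C+D))$ levels of refinement. Start by inserting a random delay in $[0, C]$ for every path, so the congestion on each edge within each time window of length $\polylog(C+D)$ is small. Then repeatedly refine each window: add further random delays inside it, so that the (congestion, dilation) pair $(C_\ell, D_\ell)$ becomes $(C_{\ell+1}, D_{\ell+1})$ with $C_{\ell+1}, D_{\ell+1} = \polylog(C_\ell + D_\ell)$. This continues until the parameters are constant, and a final greedy step then gives a schedule of length $O(C+D)$.

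The state is a product space. Level $\ell$ contributes, for each path, a delay variable per current frame. The flaws at level $\ell$ are events of the form ``edge $e$ has relative congestion exceeding $(1+\eps_\ell)$ times its expected value in some frame.'' Each such flaw depends on the variables of at most $\poly(C_\ell, D_\ell)$ paths and shares variables with at most $\poly(C_\ell, D_\ell)$ other flaws. Rothvoss's Chernoff computations give exactly the symmetric LLL condition, with constant slack, which we can absorb into $\eps$.

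The dynamic algorithm applies the dynamic LLL (\cref{thm:dyn-full}) level by level, treating the whole construction as a composition of LLL instances. When the adversary inserts or deletes a path, we update the level-$1$ flaw set: only the flaws on edges of that path, at most $D \cdot \poly(C,D)$ of them, are added or removed. We then run the resampling procedure at level $1$. Every resampling of a level-$\ell$ variable, and every level-$\ell$ flaw that appears or disappears, changes the definition of the relevant level-$(\ell+1)$ flaws, because the frames at level $\ell+1$ are determined by the level-$\ell$ solution. So each level-$\ell$ resampling is charged as $\poly(C_\ell, D_\ell)$ flaw updates to the level-$(\ell+1)$ instance.

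The total work then telescopes. Suppose level $\ell$ receives $q_\ell$ updates described by $m_\ell$ bits. By \cref{thm:dyn-full} it performs $\tO(q_\ell + m_\ell + \log|\Omega_\ell|)$ resamplings with probability $1 - 2^{-q_\ell}$, and hence generates $q_{\ell+1} \le \poly(C_\ell + D_\ell)\cdot \tO(q_\ell + m_\ell)$ updates. Each update is encoded with $O(\log(|E| + q))$ bits, which names an edge and a time, so the $\log(|E|+q)$ term enters here. Multiplying across the levels gives a blowup of $(C + D + \log(|E|+q))^{O(1)}$ per level. Over $O(\log\log(C+D))$ levels this yields the stated $(C + D + \log(|E|+q))^{O(\log\log(C+D))}$ amortized bound. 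A union bound over the levels keeps the overall success probability at $1 - 2^{-q}$, since $q_\ell \ge q$. The final greedy level costs $\poly(C_L + D_L)$ per changed frame.

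The main obstacle is justifying that the \emph{adaptive} updates fed into level $\ell+1$ fit \cref{thm:dyn-full}. The flaws at level $\ell+1$ are chosen by the algorithm's own randomness at level $\ell$. This is precisely why we need the clairvoyant-adversary guarantee: the lower levels together with the true adversary can be viewed as a clairvoyant adversary for level $\ell+1$. A second concern is that the LLL condition must hold at every moment, including for the flaws induced at level $\ell+1$. I would ensure this by noting that Rothvoss's dependency bounds hold for \emph{any} flawless level-$\ell$ object, not only a random one, and these are the only objects we ever build on.
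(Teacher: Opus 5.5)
Your architecture matches the paper's: Rothvoss's levels, each maintained by its own dynamic Moser--Tardos instance, changed lower-level variables re-emitted as $\poly(D)$ flaw replacements at higher levels, and a per-level blow-up of $\poly(D)\cdot\log(|E|+q)$ compounded over $\Theta(\log\log D)$ levels. The gap is in how you resolve what you call the main obstacle.

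You make level $\ell+1$ robust by a union bound over its own input sequences. You count these as $2^{m_{\ell+1}}$, with each update encoded by $O(\log(|E|+q))$ bits naming an edge and a time. That count is wrong, because the name $(e,t)$ does not determine the flaw. $F_{\ell+1}(e,t)$ is the event that the expected load at $(e,t)$, conditioned on the \emph{current} $\sigma_0,\dots,\sigma_\ell$, is at least $\lambda_{\ell+1}$. Both this set of states and its variable set change with the lower-level assignment and with the set of paths. The variable set here is the waiting times of the packets that can still reach $e$ at time $t$. So two runs with the same name sequence but different lower-level histories are different LLL instances. Applying \cref{thm:dyn-full} (or (A2) of \cref{thm:application}) with your $m_{\ell+1}$ therefore union-bounds over too small a family.

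To repair your route, the sequence must carry the determining information, for instance as the paper's dummy updates:
\begin{itemize}
\item each inserted path, at $O(D\log|E|)$ bits;
\item the values of newly created lower-level variables;
\item each lower-level variable change, at $O(\log(qD))$ bits.
\end{itemize}
This is still $\poly(D)\log(|E|+q)$ bits per lower-level resampling, so your final bound survives once this is done.

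The paper avoids the encoding altogether. For a \emph{fixed} path sequence, the flaw updates fed to level $\ell$ depend only on that sequence and on the random strings of levels $0,\dots,\ell-1$. They are therefore independent of level $\ell$'s own randomness, so every level faces an oblivious sequence. (A1) then applies per level with $\delta\approx 2^{-(q+D)}|E|^{-Dq}$, and a union bound covers the $O(L)$ per-level events. A single union bound over the $\zeta=O((2|E|^D)^q)$ path sequences then gives the clairvoyant guarantee. In your language: after conditioning on the lower-level randomness, level $\ell+1$ has at most $\zeta$ possible inputs, and that is the count you should use.

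Three smaller points.
\begin{itemize}
\item The construction you sketch (random delays, polylog frames, $\log^*$ levels, greedy finish) is the original LMR recursion. On its own it loses a $2^{O(i)}$ factor after $i$ levels. The paper instead uses Rothvoss's dissection into blocks of length $D^{2^{-\ell}}$ with waiting times $D_\ell^{1/4}$. That has $\Theta(\log\log D)$ levels and needs no greedy step, since the last level's waiting times may be arbitrary.
\item Level-$(\ell+1)$ flaws depend on all of $\sigma_0,\dots,\sigma_\ell$ and on the inserted path itself. Each change must therefore be pushed to every higher level, not only the next one.
\item The union over $L$ levels costs a factor $L$ in the failure probability, so each level needs a slightly smaller $\delta$.
\end{itemize}
The last two only affect constants in the exponent.
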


\paragraph{$(1+\eps) \Delta$-edge coloring (\cref{sec:edge-coloring}).}
Given a graph with maximum degree $\Delta$, the $(1+\eps)\Delta$-edge-coloring problem asks for an assignment of colors ${1, 2, \ldots, (1+\eps)\Delta}$ to the edges so that no two incident edges share the same color.
The celebrated theorem of Vizing~\cite{vizing1964estimate} guarantees that every graph can be edge-colored with at most $\Delta + 1$ colors.
However, in many algorithmic settings, obtaining a \emph{near-optimal} $(1+\eps)\Delta$-edge coloring is considerably simpler than achieving the exact optimum of $\Delta + 1$ colors.
A long line of work focused on designing efficient algorithms for this problem, in both static and dynamic setting, e.g., see ~\cite{duan2019dynamic,solomon2020improved,christiansen2023power,bhattacharya2024nibbling,bernshteyn2024linear,elkin2024deterministic,bhattacharya2024arboricity,christiansen2024sparsity,assadi2025faster,dhawan2025fast,sadeh2026beyond,christiansen2026deterministic} and references therein.
For a constant $\eps$ and $\Delta = \poly(\log n)$, we obtain a dynamic algorithm, against an adaptive adversary, with the amortized update time of only $\poly(\log n)$.
The prior work for this regime has exponential running time in $\sqrt{\log n}$ \cite{christiansen2026deterministic}.
See \cref{sec:related-work} for a review of prior results.

\begin{theorem}
\label{thm:edge-coloring}
    Let $G$ be a graph on $n$ vertices with maximum degree $\Delta$.
    Given a parameter $\eps \in \omega\rb{\frac{\log^{2.5} \Delta}{\sqrt{\Delta}}}$, there exists a fully dynamic randomized algorithm that under $q$ updates maintains $(1+\eps) \Delta$-edge coloring on $G$ in $(\Delta \cdot (\log q + \log n))^{O(\log 1/\eps)}$ amortized update time, even against an adaptive adversary.
\end{theorem}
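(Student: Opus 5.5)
We will obtain \cref{thm:edge-coloring} by plugging a multi-level LLL construction of a $(1+\eps)\Delta$-edge coloring into the dynamic framework of \cref{thm:informal} (formally \cref{thm:dyn-full}). The framework also covers compositions of LLL instances, as in the LMR application. The static template is the standard iterated ``nibble''/random-partition approach. Split the palette and the graph recursively: at each level, every edge picks one of two groups uniformly at random. The flaws at a vertex $v$ are the events that the number of edges at $v$ going to group $i$ deviates from half its current degree by more than $\lambda\sqrt{\Delta_\ell\log\Delta_\ell}$, where $\Delta_\ell$ is the current maximum degree. Each such flaw depends on the random choices of the $\Delta_\ell$ edges at $v$. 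It shares variables with $O(\Delta_\ell^2)$ other flaws and has probability $\Delta_\ell^{-\Omega(\lambda^2)}$, so the symmetric LLL condition holds with constant slack once $\lambda$ is a large enough constant.

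After $L=O(\log 1/\eps)$ levels the degrees drop to $\Delta_L\approx \Delta/2^L=\Theta(\eps^{-O(1)})$, up to multiplicative error $\prod_\ell(1+O(\sqrt{\log\Delta_\ell/\Delta_\ell}))$. On each leaf subgraph we use a disjoint palette of $\Delta_L+1$ colors and maintain a Vizing coloring (or any coloring) by brute-force recoloring of the local component, which costs $\poly(\Delta_L)$ per update. The total number of colors is $2^L(\Delta_L+1)\le(1+\eps)\Delta$. This bound holds provided the accumulated error, which is dominated by the deepest levels, stays $O(\eps)$. That requirement, together with the stopping degree, is where the hypothesis $\eps\in\omega(\log^{2.5}\Delta/\sqrt\Delta)$ is used. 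We will first fix the constants and choose $L$ so that both the error term and the leaf overhead of $2^L$ extra colors are at most $\eps\Delta/2$.

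The dynamic algorithm maintains all levels simultaneously. An edge insertion or deletion at level $\ell$ adds or removes the flaws at its two endpoints, and we resample as in \cref{thm:informal}. The difficulty is that resampling an edge's group at level $\ell$ moves that edge between subinstances at level $\ell+1$. To the lower level this looks like one deletion and one insertion. Hence the number of updates seen at level $\ell+1$ is the number of resampled variables at level $\ell$. By \cref{thm:informal} this is $\tO(\Delta\cdot(\text{updates at level }\ell))$, since each flaw resample touches $O(\Delta)$ variables, plus the $\log|\Omega|$ and encoding terms, which are $O(\log n+\log q)$ per update. Iterating over $L$ levels gives total work $(\Delta(\log n+\log q))^{O(L)}=(\Delta(\log q+\log n))^{O(\log 1/\eps)}$ amortized per update. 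The $1-2^{-q}$ tail from each level survives a union bound over $L$ levels. Adaptivity is fine because the framework already holds against clairvoyant adversaries, and updates generated by upper levels are simply adversarial from the perspective of lower levels.

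The main obstacle is the invariant that every lower-level instance satisfies the LLL condition at every moment, including between resampling steps. While level $\ell$ is still fixing flaws, the degrees it passes down may temporarily exceed the bound assumed by level $\ell+1$. My first attempt is to process levels top-down and fully stabilize level $\ell$ before forwarding its net changes. I would also define the flaw at level $\ell+1$ relative to a fixed target degree $\Delta_{\ell+1}$, rather than the actual degree, with slack $\eps$ absorbing small overshoots. Bounding the blow-up of the forwarded changes needs the amortized resample count per update, not the worst case, and I would verify that the $m$ term (bits per update, here $O(\log n)$) does not compound beyond the polylogarithmic factor at each level.
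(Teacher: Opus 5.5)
There is a genuine gap, and it is in the leaf step. You claim that a $(\Delta_L+1)$-edge coloring of each leaf subgraph can be maintained by ``brute-force recoloring of the local component'' in $\poly(\Delta_L)$ time. Nothing in that justification bounds the component: a connected component of a leaf subgraph can have $\Theta(n)$ vertices, and the standard Vizing repair (a fan plus an alternating path) may recolor a path of length $\Theta(n)$. Moreover, if such a routine were available, you could take $L=0$ and get $(\Delta+1)$ colors in $\poly(\Delta)$ time with no LLL at all. So the leaf step is not a finishing touch; it carries the whole difficulty of the theorem, and in fact more.

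Your parameters are also inconsistent. Halving $L=O(\log 1/\eps)$ times leaves $\Delta_L\approx\eps^{O(1)}\Delta$, not $\eps^{-O(1)}$. Reaching leaf degree $\eps^{-O(1)}$ takes $\log_2(\eps^{O(1)}\Delta)$ levels, i.e.\ $\Theta(\log\Delta)$ for constant $\eps$, and your own cascade bound would then give exponent $O(\log\Delta)$ rather than $O(\log 1/\eps)$. Stopping after $O(\log 1/\eps)$ levels instead only produces leaf instances that are again near-optimal edge-coloring problems of polynomial degree.

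The underlying obstruction is that degree splitting divides the degree but multiplies the number of disjoint palettes. At every depth $2^L\cdot\Delta_L\approx\Delta$, so any constant-factor waste at the leaves (e.g.\ greedy) costs $\Theta(\Delta)$ extra colors. What is missing is a reduction whose residue is a \emph{single} graph of degree $O(\eps\Delta)$ that may be colored wastefully.

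The paper gets this from the nibble of \cite{chang2017complexityarxiv} (\cref{theorem:chang-coloring}). There are $t_\eps=O(\log 1/\eps)$ phases of \algOneShot, each a dynamic LLL instance whose flaws enforce \eqref{eq:deg-i-ub}--\eqref{eq:phi-i-lb}. Together they color most edges from a common palette of at most $(1+\eps)\Delta$ colors (with $\eps=\eps'/12$), leaving $\Gfinal$ with maximum degree below $\Deltafinal=\eps'\Delta/12$. One more LLL instance then colors $\Gfinal$ from $11\Deltafinal$ fresh colors: each bad event has probability $1/(11\Deltafinal)$ and fewer than $4\Deltafinal$ dependencies.

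The $t_\eps+1$ instances are processed level by level. By \cref{thm:gen-dynLLL2}, level $i$ makes $s_i=O(q_i\log(\Delta q_i)+q_i\log n)$ resamplings, so forwarded updates grow by a $\poly(\Delta)\cdot(\log q+\log n)$ factor per phase. That is where the exponent $O(\log 1/\eps)$ comes from, and the hypothesis on $\eps$ is inherited from \cite{chang2017complexityarxiv}. Your handling of cascading updates and adaptivity is compatible with this; it is the decomposition itself that has to change.
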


%\paragraph{Comparison with prior work.}
% To the best of our knowledge, all prior work on the algorithmic LLL is restricted to the static setting.
% We initiate the study of the dynamic LLL and showed that, even against an adaptive adversary, one can obtain resampling algorithms whose overall time complexity matches the static setting up to polylogarithmic factors.
% Our result is very general and immediately yields new dynamic algorithms for several problems solvable by the algorithmic LLL.

% For the problems of $k$-CNF under the LLL constraints, triangle-free coloring, and LMR scheduling, we are not aware of any previous dynamic algorithms.
% Hence, our framework implies the first dynamic algorithms with polylogarithmic amortized update time when the key problem parameters (e.g., $k$ and $d$ in the CNF problem) are constants.

\def\cR{\mathcal{R}}
\def\cA{\mathcal{A}}

\subsection{Related work} \label{sec:related-work}
Our dynamic LLL framework builds on the work of \cite{achlioptas2019beyond,moser2010constructive}. Moser and Tardos \cite{moser2010constructive}'s breakthrough proof-strategy for analyzing resampling algorithms inspired a flurry of results in the field, e.g., see \cite{HaeuplerH17} and references therein. Among those, the work of Achlioptas, Iliopoulos, and Sinclair \cite{achlioptas2019beyond} provides a very general framework by which almost all conceivable resampling algorithms for different applications of the LLL can be analyzed. 

\paragraph{Triangle-free coloring.}
The \textit{chromatic number} of $G$ is the minimum number of colors required to properly color $G$.
Clearly, the chromatic number is always smaller than the list chromatic number of the same graph.
It is not hard to prove that the chromatic number of any graph is at most $\Delta + 1$.
For odd cycles and cliques, this upper bound is the best possible.
Brook's famous theorem \cite{brooks1941colouring} further showed that every other connected graph admits a $\Delta$-coloring.
Vizing \cite{vizing1968some} asked whether the theorem can be improved on triangle-free graphs. 
A series of works \cite{borodin1977upper,catlin1978bound,lawrence1978coveing} answer the question affirmatively, resulting in an upper bound of $\frac{2}{3}(\Delta + 2)$ on triangle-free graphs.
This result is later improved by Johansson \cite{johansson1996asymptotic}, who proved that the list-chromatic number of any triangle-free graph is at most $\frac{9\Delta}{\ln \Delta}$ for all sufficiently large $\Delta$.
The leading constant was improved to 4 by Pettie and Su \cite{pettie2015distributed}, and later to $1+o(1)$ by Molloy \cite{molloy2019list}.
This bound is optimal up to a factor of $2+o(1)$ \cite{molloy2019list}.

\paragraph{$(1+\eps) \Delta$-edge coloring (\cref{sec:edge-coloring}).}
% For instance, in the centralized setting, the classical greedy approach outputs $2 \Delta - 1$ coloring; this is the case as an edge is incident to at most $2 \Delta - 2$ other edges.
The problem of dynamically maintaining near-optimal edge colorings has been studied extensively in recent years.
See \cref{tab:coloring} for an overview.
Barenboim and Maimon~\cite{barenboim2017fully} gave a deterministic fully dynamic algorithm achieving an $O(\Delta)$-edge coloring with $\tO(\sqrt{\Delta})$ worst-case update time.
Duan, He, and Zhang~\cite{duan2019dynamic} obtained a randomized algorithm that maintains a $(1+\eps)\Delta$-edge coloring in $O(\log^7 n / \eps^2)$ amortized update time. The algorithm assumes that $\Delta = \Omega(\log^2 n / \eps^2)$ and $\eps$ is a fixed constant.
Christiansen~\cite{christiansen2023power} later achieved a $(1+\eps)\Delta$-edge coloring in $O(\log^9 n \log^6 \Delta / \eps^6)$ time per update without restrictions on $\Delta$.
More recently, Bhattacharya, Costa, Panski, and Solomon~\cite{bhattacharya2024nibbling} gave a randomized algorithm with $O(\poly(1/\eps))$ amortized update time for $(1+\eps)\Delta$-edge coloring, provided that $\Delta \ge (\log n / \eps^4)^{\Theta((1/\eps)\log(1/\eps))}$.
The algorithms of \cite{duan2019dynamic,christiansen2023power,bhattacharya2024nibbling} only work against an oblivious adversary.
% The algorithms of \cite{duan2019dynamic,christiansen2023power,bhattacharya2024nibbling} achieve a $\poly(\log n, 1/\eps)$ or even $\poly(1/\eps)$ update time, but they only work against an oblivious adversary.
Finally, Christiansen~\cite{christiansen2026deterministic} obtained a fully dynamic \emph{deterministic} algorithm maintaining a $(1+\eps)\Delta$-edge coloring with amortized update time $\exp\rb{\tilde O\rb{\sqrt{\log n} \cdot \polylog 1/\eps}}$, which is $n^{o(1)}$ when $\eps^{-1} = 2^{O(\log^{0.49} n)}$. Our algorithm is faster when $\Delta = e^{o(\sqrt{\log n})}$ and $\eps = \omega\rb{\frac{\log^{2.5} \Delta}{\sqrt{\Delta}}}$.
 % The little-o here cannot be taken out because I don't know the constants in both algorithms.
%\cite{barenboim2017fully}'s algorithm has a $\tO(\sqrt{\Delta})$ update time, but maintains only an $O(\Delta)$-edge coloring.

% We obtain a new dynamic algorithm by extending the static (distributed) algorithm of \cite{chang2018complexity} to the dynamic setting.
% For constant $\eps$ and $\Delta = \poly(\log n)$, the amortized update time of our approach is only $\poly(\log n)$.
\begin{table}[ht]
\centering
\begin{tabular}{|l|c|c|c|c|c|}
\hline
\textbf{Ref.} & \textbf{Update time} & \makecell{\textbf{Lower bound}\\\textbf{on $\eps$}} & \makecell{\textbf{Lower bound}\\\textbf{on $\Delta$}} & \textbf{Det.} & \textbf{Adversary} \\
\hline
% \cite{barenboim2017fully}\ $\dagger \dagger$ & $\tO\rb{\sqrt{\Delta}}$ & -- & -- & Y & Adaptive \\
% \hline
\cite{duan2019dynamic} & $O\rb{\log^7(n)/\eps^2}$ & constant $\eps$ & $ \Omega(\log^2 n / \eps^2)$ & N & Oblivious \\
\hline
\cite{christiansen2023power} & $O\rb{\log^9 n \log^6 \Delta / \eps^6}$ & -- & -- & N & Oblivious \\
\hline
\cite{bhattacharya2024nibbling} & $O(\poly(1/\eps))$ & any $\eps > 0$ & $\rb{\tfrac{\log n}{\eps^4}}^{\Theta\rb{\frac{1}{\eps}\log 1/\eps}}$ & N & Oblivious \\
\hline
\cite{christiansen2026deterministic} & $\exp\rb{\tilde{O}\rb{\sqrt{\log n} \cdot \polylog 1/\eps}}$  & -- & -- & Y & Adaptive \\
\hline
\textbf{here} & $(\Delta \cdot \log n)^{O(\log 1/\eps)}$ & $\omega\rb{\frac{\log^{2.5} \Delta}{\sqrt{\Delta}}}$ &  $\Delta_\eps$ & N & Adaptive \\
\hline
\end{tabular}
\caption{ \label{tab:coloring} Dynamic $(1+\eps)\Delta$-edge-coloring algorithms, in which we use $\Delta_\eps$ to denote a constant depending only on $\eps$.
%All the results except \cite{barenboim2017fully} guarantee amortized update time.
All the results guarantee amortized update time.
%$^{\dagger \dagger}$This result gives an $O(\Delta)$-edge coloring, not a $(1+\eps)\Delta$-edge coloring.
}
\end{table}

\paragraph{Remark on concurrent work.}
Parallel and independent to our work, Assadi and Yazdanyar \cite{assadi2026fully} presented a fully dynamic algorithm that maintains an $O(\frac{\Delta}{\ln \Delta})$-coloring on triangle-free graphs within $\Delta^{o(1)}\log n$ amortized update time. 
Their dynamic algorithm is based on entropy compression arguments and "recording the adversaries action" when specifically applied to the triangle-free coloring problem, while our framework is designed for very general algorithmic LLL and entropy compression settings. Their quantitatively better update time of \cite{assadi2026fully} is achieved by using a technique of \cite{bhattacharya2018dynamic} which recursively splits degrees and colors. Another independent concurrent work by Haeupler and Paramonov~\cite{haeupler2026maintainingrandomassignmentsadversarial} shows how to transform any linear (or sub-linear) static coloring algorithm in a black-box way into a fully-dynamic vertex coloring algorithm with $O(\Delta \cdot n^{\eps})$ update time against an adaptive adversary with at the cost of increasing the number of colors used by a constant $O_\eps(1)$. The techniques of \cite{haeupler2026maintainingrandomassignmentsadversarial} are completely different and disjoint from this work. 

\subsection{Organization}

\cref{sec:prelims} reviews the search problems covered by the LLL and algorithms that solve these problems efficiently in the static setting.
\cref{sec:dynamic-intro} formally defines the dynamic version of LLL and states the resampling algorithms that maintain a solution under flaw updates.
\cref{sec:moser-tardos-convergence,sec:gen-framework-convergence} provide proofs of our bounds on the number of resamplings of the dynamic algorithms.
In \cref{sec:applications}, we establish a framework by which to analyze the running times of the dynamic algorithms that we develop. Specifically, we (i) abstract out broad implementation-specific routines and express the running times in terms of these routines, and (ii) analyze the book-keeping process by which we maintain and choose flaws in our applications. 
\cref{sec:CNF,sec:triangle-free,sec:LMR-scheduling,sec:edge-coloring} outline our applications of the dynamic LLL to various problems.
We conclude this paper in \cref{sec:future-work} with a few directions for future research.

% \section{A general framework and scheme for analyzing Local Resampling algorithms}
\section{Preliminaries}
\label{sec:prelims}

In this section, we give an overview of the \LLL, and the different local search procedures in literature that constructively prove it, i.e., explicitly construct the objects that the LLL guarantees to exist.

In \cref{sec:local-search}, we define search problems that are covered by the LLL. In \cref{sec:search-algorithms} we describe algorithms for these search problems in the static setting. 
When the search problems satisfy certain additional criteria (see \cref{sec:condition}), these search algorithms are known to converge fast into a valid solution due to a seminal work of \cite{moser2010constructive}.
% We give an outline of the method in \cref{sec:dynamic-intro}.
The original proof of \cite{moser2010constructive} was restricted to the setting of product distributions and variable resamplings.
\cite{achlioptas2019beyond} extended the analysis to more general resampling algorithms that cover almost all resampling algorithms for LLL-related problems known to date.
We refer to their approach as the AIS framework.
% We follow the same notation as \cite{moser2010constructive,achlioptas2019beyond}.

% First we define the general setup of search problems and resampling algorithms that fall within the framework of the \LLL. 

\subsection{Local search problems}
\label{sec:local-search}
% \paragraph{States and flaws.}
%this is copied in from the achlioptas paper:
% We largely follow \cite{achlioptas2019beyond} with the following notation (but mix in some MT terminology as well):
% We follow the same notation as \cite{achlioptas2019beyond}, but first the initial setup

A search problem is specified by a finite set of states $\Omega$ and by a collection of \emph{flaws} $\mathcal{F} = \{F_1, F_2, \ldots, F_m \}$, where each flaw $F_i$ is a subset of $\Omega$.
The goal of the search problem is to find a flawless state, i.e., a state in $\Omega\ \setminus\ \Omega^*$ where $\Omega^* = \bigcup_{i \in [m]} F_i$ is the set of all flawed states.
For a state $\sigma$, we denote by $U(\sigma) = \{ j \in [m] : F_j \ni \sigma \}$ the set of (indices of) flaws present in~$\sigma$.
Examples of such search problems include the following.
\begin{itemize}
    \item CNF satisfiability (see \cref{sec:CNF} for formal definitions): Let $\Omega = \{0,1\}^n$ be the set of truth assignments to $n$ Boolean variables, and for each clause of some CNF formula, let the flaw $F_i$ be the assignments that violate the $i^{th}$ clause. Any flawless state then corresponds to a satisfying assignment. 
    \item Graph coloring (see \cref{sec:triangle-free} for formal definitions): Let $\Omega = \{[C] \cup \{\blank\}\}^n$ be all partial vertex colorings with $C$ colors in some $n$-vertex graph $G$. Flaws can, for example, be defined for each neighborhood of a vertex or each path or cycle in $G$. Such flaws can specify a large variety of forbidden colorings, such as no cycle being bi-chromatic and/or no adjacent vertices having the same color, etc. One can also have a flaw for each vertex forbidding it from being colored blank. A flawless state then corresponds to a complete and proper vertex coloring without bi-chromatic cycles. 
%    \item Latin squares: Let $\Omega = \{S_n\}^n$ be $n$ permutations of $[n]$. A flaw ...
\end{itemize}

\subsection{Local search algorithms}
\label{sec:search-algorithms}
% \paragraph{Resamplings and flaw selection rules.}

Given a search problem specified by $\Omega$ and $\calF$, we analyze (local) \emph{search algorithms} for finding a flawless state in $\Omega$. The search algorithms start with an arbitrary (or random) state $\sigma \in \Omega$. The algorithm then iteratively tries to identify a \textit{flawless} state, by constructing intermediate states that are free from a subset of flaws.
The search algorithms that we analyze are specified by the following:
\begin{itemize}
    \item[(i)](Initial State). A starting state $\sigma_0 \in \Omega$ or distribution $\theta$ over starting states $\sigma_0 \sim \theta$. Typically, $\sigma_0$ is an \textit{easy to find} object in $\Omega$. 
    \item[(ii)](Flaw resampling procedure). For every flaw $F_i$, there is a randomized procedure $\rho_i$ that is invoked to ``fix'' flaw $F_i$. Upon invoking $\rho_i$, the algorithm transitions from the current state $\sigma$ into another state $\tau$ computed by $\rho_i$. We say that the flaw $F_i$ is \textit{resampled}.

    % \btodo{TO DISCUSS: do we use/derive non-deterministic selection rule somewhere?}
    % The procedure $\rho_i$ is allowed to depend in an arbitrary way on the current state $\sigma \in F_i$ and is specified through a state transition distribution $\rho_i^{\sigma}$ where $\rho_i^{\sigma}(\tau)$ is the probability that the algorithm \emph{transitions} to state $\tau$ when fixing flaw $F_i$ in state $\sigma$.
    \item[(iii)] (Flaw selection procedure). Given an iteration $j$ of the algorithm, and the state $\sigma \in \Omega$ during this iteration, the flaw selection procedure outputs a flaw $F_i \ni \sigma$. 
    % The flaw selection procedure can otherwise be arbitrary.
    
    % Although we allow the flaw selection to be quite generic (even possibly random), for the applications in this paper, we will only require priority-based flaw selection procedures. The priorities are specified by a permutation $\pi$ over the flaws, and the algorithm selects the flaw that has the highest rank according to $\pi$.
\end{itemize}
% \rtodo{Only describe the generic flaw selection above, and then have a separate paragraph describing the specific flaw selection rules. Also see if the algorithms can be described here.}

% \paragraph{Moser-Tardos resamplings.} In the Moser-Tardos setting, the set of states is a Cartesian product over several variables. The resampling procedures to fix a flaw $F_i$ simply uniformly resample all the variables that determine whether flaw $F_i$ exists. For this particular setting, the flaw selection procedure can be \textit{arbitrary}. For the general framework, there are additional restrictions on the flaw selection procedure that we will outline later.

The algorithm then operates as follows: Initially, the starting state $\sigma_0$ is sampled from (i), the specified initial distribution $\theta$. Subsequently, while the current state $\sigma$ is not flawless, the flaw selection procedure in (iii) is invoked to select a flaw $F_i \ni \sigma$.
Finally, (ii): the resampling procedure is invoked with flaw $F_i$ and state $\sigma$ that yields a new state $\sigma'$ according to the resampling distribution $\rho^\sigma_i$.
The algorithm terminates when the a flawless state is reached.

Assuming a deterministic flaw selection rule based only on the current state $\sigma$ or the currently existing flaws $\calF(\sigma_{t-1})$, a resampling algorithm fully specifies a Markov chain $M$ that starts in $\sigma_0$ and then for every time $t$ chooses one of the existing flaws $F_i \in \calF(\sigma_{t-1})$ based on the flaw selection rule and then transitions to the next state $\sigma_t$ based on the distribution $\rho_i^{\sigma_{t-1}}$. 

We note that the flaw selection rule can be randomized, for e.g., one may sample a random permutation of flaws $\pi_j$ for each iteration $j$ before executing the algorithm and then given state $\sigma$, the flaw selection rule can choose the flaw $f \ni \sigma$ with highest priority according to $\pi_j$. All of our applications only use a deterministic flaw selection rule.
% until at time $t^*$ the state $\sigma_{t^*}$ is flawless and the algorithm terminates.

% \begin{remark}
% % In many cases, one also specifies some (deterministic) flaw selection rule that specifies which flaw to resample next.
% % Often, this flaw selection rule depends only on the current set of flaws $F(\sigma)$ and the sequence of flaws fixed so far.
% In most cases, the flaw selection rule depends only on the current set of flaws $F(\sigma)$ and the sequence of flaws fixed so far.
% However, in some cases, an arbitrary and even non-deterministic adversarial choice of which flaw to resample can also be analyzed.
% We explicitly mention where this is the case.
%  %Typical flaw selection rules either pick the flaw with the smallest index among the current flaws $F(\sigma)$, or the smallest index flaw in the "neighborhood" of the last flaw fixed - essentially performing a depth first search flaw fixing strategy. 
% \end{remark}

\paragraph{Flaw introductions and flaw neighborhoods.}%, Primary Flaws}

We say that a resampling of flaw $F_i$ that transitions from state $\sigma$ to state $\tau$ \emph{introduces} flaw $F_j $ if either (i) $F_j$ was created due to the resampling, i.e., $F_j \in U(\tau) - U(\sigma)$ or (ii) $j = i$ and $F_i$ remains a flaw even after resampling.
Intuitively, we consider that resampling a flaw $F_i$ fixes it and if it remains a flaw after resampling, then we say that $F_i$ \textit{introduced itself}.
That is, if $\cF$ is the current set of flaws then upon resampling $F_i$, the set of flaws becomes $(\cF - F_i) \cup C$ where $C$ is the set of flaws introduced.

%In the later case the resampling of flaw $F_i$ essentially reintroduced itself again. 
For each flaw $F_i$, we define the neighborhood $\Gamma_i \subseteq \calF$ to be the set of (indices of) flaws that can (possibly) be introduced by some resampling of $F_i$.
Note that the neighborhood relationships may be a digraph: For $F_i,F_j \in \cF$, $j \in \Gamma_i$ may not imply $i \in \Gamma_j$.
% {\color{red} We assume that $i \in \Gamma_i$.}
% \btodo{This assumption is for the derivation of Condition 2. Maybe change this notation (like $\Gamma^+$) because we are not really making any assumptions on structure of dependency graphs.}
%A primary flaw is a flaw that cannot be collaterally fixed - where does this come in???

\subsection{A common setting: product state space and MT-resampling} 
\label{sec:mt-setup}

A common and particularly simple setting is as follows.

\begin{definition}[The variable setting] \label{def:variable}
In the variable setting, the state space is an $n$-dimensional product space, i.e., $\Omega \subseteq \bigtimes_{i=1}^n [c_i]$, where $[c_i]$ denotes $\{1, 2, \dots, c_i\}$.
Each state corresponds to an assignment to $n$ variables $v_1, v_2,\ldots, v_n$, where variable $v_i$ takes on $c_i$ different values.
% and defines a distribution over these values (often the uniform distribution).
A flaw corresponds to a subset of assignments.
\end{definition}

We denote by $\vbl(F_i)$ the minimal set of all variables that fully determine whether flaw $F_i$ is present in a given state $\sigma \in \Omega$, i.e., $\vbl(F) = \{j\ \mid \exists (\sigma \in F, \sigma' \not\in F) \text{ such that } \sigma, \sigma' \text{ differ only in } v_j\}$.
In the variable setting, the neighborhood $\Gamma_i$ for flaw $F_i$ only contains flaws that share variables with $F_i$, i.e., $\Gamma_i = \{j \mid \vbl(F_i) \cap \vbl(F_j) \neq \emptyset\}$.

Let $F$ be a flaw and W.L.O.G. assume that $\vbl(F) = \{v_1, v_2, \dots, v_k\}$.
Denote by $\vio(F)$ the number of different assignments for $v_1, v_2, \dots, v_k$ that make $F$ present.
Define $\Pr[F]$ as $\frac{\vio(F)}{\Pi_{i=1}^k c_{i}}$.
That is, $\Pr[F]$ is the probability that a randomly selected state has $F$.

The most common local search algorithms for the variable setting satisfy the following:
\begin{itemize}
    \item The distribution $\theta$ of starting states is uniform.
    \item The resampling procedure for a flaw $F_i$ is to transition to a state that only changes the variables in $\vbl(F_i)$ and assigns these variables new independent random values (according to their distributions). We call this an \emph{MT-resampling}.
    \item The flaw selection rule can be arbitrary.
\end{itemize}

%For example, if a resampling for $F_i$ in a product space typically only changes variables in $\vbl(F_i)$ and therefore $\Gamma_i$ only contains flaws $F_j$ for which $\vbl(F_i) \cup \vbl(F_j) \neq \emptyset$.

%For example, for a given CNF formula on $n$ variables with clauses $c_1,\ldots,c_m$, we take $\Omega=\{0,1\}^n$ to be the set of all possible variable assignments, and $F_i$ the set of assignments that fail to satisfy clause~$c_i$.

\subsection{AIS Framework}
\label{sec:general-setting-setup}

% In this section we overview the general setting of \cite{achlioptas2019beyond}.

% One common approach to analyzing a local search algorithm is to argue that it terminates after only a few resamplings with very high probability.
In this section we overview the AIS framework \cite{achlioptas2019beyond}. Their framework allows for a black-box analysis of almost all conceivable resampling algorithms for local search algorithms. In particular, the algorithm can select flaws and execute resampling procedures that go beyond variable resampling.

% For this purpose, we want to show that the flaws are unlikely to occur, and the resampling procedures addressing them do not introduce too many other flaws on average.
To incorporate general flaw resampling schemes, they introduce the notion of flaw \textit{charges} formally defined below. Intuitively, the charge for a flaw $F_i$ is an upper bound on the probability that the current (flawed) state was due to the resampling of $F_i$.
% $\gamma_i$, which intuitively are an upper bound on the probability of how likely it is that resampling $F_i$ is necessary.

For the case where $\Omega$ is a product state space of variables with $\mu$ being the product distribution that samples every variable independently and all resampling procedures are  MT-resamplings, the charge $\gamma_i$ can be set equal to $p_i = \mu(F_i)$, i.e., the probability of a uniformly random variable assignment to contain flaw $F_i$. 

For general resampling steps, it is useful to define $\InSet_i(\tau)$ to be the set of states from which it is possible that a resampling of flaw $F_i$ leads to state $\tau$. Then $\gamma_i$ is simply the union bound over the probabilities of transitioning into state $\tau$, in the worst-case, i.e., 
$\gamma_i = \max_{\tau \in \Omega} \left\{\sum_{ \sigma \in \InSet_i(\tau)}  \rho_i(\sigma,\tau )\right\}$. 

% Intuitively, if there are not many likely ways to get to any given state via a $F_i$-resampling, then it seems also less likely that many such resamplings should occur during a probabilistic local search. 

% To give even more precise bounds and allow for sharper convergence criteria in several important applications, one can further generalize this as follows:
To allow for sharper convergence criteria in several  applications, we extend the charge definition as follows.
For an arbitrary state $\tau \in \Omega$, flaw $F_i$, and set of flaws $S$, we denote by $\InSet_i^S(\tau)$ the set of states $\sigma$ from which it is possible that a resampling of flaw $F_i$ leads to state $\tau$ while also introducing new flaws that include the flaws in $S$, i.e., 

\begin{align}
\InSet_i^S(\tau)  :=  & \{ \sigma \in F_i : \text{the set of flaws introduced by the transition $\sigma \to \tau$ includes $S$ and} \nonumber \\
& \text{$F_i$ is chosen to be resampled at $\sigma$ } \}.\label{def:Incoming}
\end{align}

\begin{definition}[Generalized charges]
\label{defn:charges}
For any fixed probability distribution $\mu>0$ on~$\Omega$ of our choosing, we define the \emph{generalized charge} as: 
\begin{align}\label{def:charge}
\gamma_i^S := \max_{\tau \in \Omega} \left\{\frac{1}{\mu(\tau)}\sum_{ \sigma \in \InSet_i^S(\tau)} \mu(\sigma) \rho_i(\sigma,\tau )\right\} .
\end{align} 
\end{definition}

% \rtodo{ Explain how these conditions reduce to the standard setting.  Explain monotonicity.}

One can think of $\gamma_i$ as an upper bound on the probability or how unlikely it is that fixing $F_i$ is necessary and think of $\gamma_i^S$ as the probability that fixing $F_i$ is necessary \emph{and} this resampling is also responsible for causing flaws from $S$ to have to be resampled later. 

%In many cases one chooses $\mu$ to be the uniform distribution on $\Omega$ in which case this simplifies to $\gamma_i^S := \max_{\tau \in \Omega} \left\{\sum_{ \sigma \in \InSet_i^S(\tau)} \rho_i(\sigma,\tau )\right\}$. which is simply 

Given the above, it is clear that $\gamma_i^S$ is monotone in $S$, i.e., $\gamma_i := \gamma_i^\emptyset \geq \gamma_i^S$ and $\gamma_i^S \geq \gamma_i^{S'}$ if $S \subseteq S'$. Furthermore, $\gamma_i^S = 0$ if $S$ contains any flaw not in $\Gamma_i$ since the probability of a resampling introducing a non-neighboring flaw is zero. 

We prove the above claim that for product spaces and MT-resampling steps $\gamma_i$ is indeed at most the probability $p_i = \mu(F_i)$ of an independently random variable assignment to contain flaw $F_i$ in \cref{sec:missing-proof}. 

\begin{lemma} \label{lem:prod}
Consider the variable setting (\cref{def:variable}).
Let $F_j \subseteq \Omg$ be a flaw whose resampling procedure is the MT-resampling.
% Let $\mu$ be the uniform distribution over $\Omg$.
Then, $\gamma_j = \Pr[F_j]$, where the charge is computed with respect to the uniform distribution.
\end{lemma}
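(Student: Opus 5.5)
We unfold \cref{defn:charges} with $S=\emptyset$ and $\mu$ uniform on $\Omega$. Since $\mu$ is uniform, the factor $\mu(\sigma)/\mu(\tau)$ equals $1$. Hence
\[
\gamma_j = \max_{\tau\in\Omega} \sum_{\sigma\in\InSet_j(\tau)} \rho_j(\sigma,\tau).
\]
Throughout, we take $\Omega$ to be the full product $\bigtimes_{i=1}^n [c_i]$, which is what makes the uniform measure a product measure. W.l.o.g.\ let $\vbl(F_j)=\{v_1,\dots,v_k\}$, and write $N=\prod_{i=1}^k c_i$.

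Fix a state $\tau$ and identify which $\sigma$ contribute. An MT-resampling from $\sigma\in F_j$ changes only $v_1,\dots,v_k$, and assigns them fresh independent uniform values. Therefore $\rho_j(\sigma,\tau)>0$ forces $\sigma$ and $\tau$ to agree on every variable outside $\vbl(F_j)$, and in that case $\rho_j(\sigma,\tau)=1/N$. Also, $\sigma\in\InSet_j(\tau)$ requires $\sigma\in F_j$. The states $\sigma$ agreeing with $\tau$ outside $\vbl(F_j)$ correspond bijectively to assignments of $(v_1,\dots,v_k)$.

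By minimality of $\vbl(F_j)$, membership in $F_j$ depends only on the values of $v_1,\dots,v_k$. This needs a short argument. If $\sigma,\sigma'$ agree on $v_1,\dots,v_k$ but $\sigma\in F_j$ and $\sigma'\notin F_j$, we walk from $\sigma$ to $\sigma'$ changing one outside variable at a time; this is possible inside the product space. Some single step then flips membership, which would put that outside variable in $\vbl(F_j)$, a contradiction.

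Consequently, exactly $\vio(F_j)$ of these assignments lie in $F_j$. This gives $\sum_{\sigma}\rho_j(\sigma,\tau)\le \vio(F_j)/N=\Pr[F_j]$.

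Equality holds provided every such $\sigma\in F_j$ actually belongs to $\InSet_j(\tau)$, that is, provided $F_j$ is chosen for resampling at $\sigma$. This is the one delicate point. The charge is meant to bound the algorithm's choices, so for the equality statement I would interpret $\InSet_j(\tau)$ (with $S=\emptyset$) as all $\sigma\in F_j$ that can reach $\tau$. This is the standard reading, under which the flaw-selection clause only makes the set smaller. In the worst case over selection rules the bound is then attained for every $\tau$, so the maximum equals $\Pr[F_j]$. I would state explicitly that under a restrictive selection rule one gets $\gamma_j\le\Pr[F_j]$, which is all the later analysis uses.

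The main obstacle is therefore not computational. It is pinning down the measure-theoretic bookkeeping: that $\Omega$ is the full product, so the uniform measure is a product measure, and the treatment of the selection-rule clause in \eqref{def:Incoming}. The counting itself is immediate once these are settled.
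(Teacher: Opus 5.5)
Your proposal is correct and follows the paper's proof in Appendix~\ref{sec:missing-proof}: fix $\tau$, identify $\InSet(\tau)$ with the $\vio(F_j)$ states in $F_j$ that agree with $\tau$ outside $\vbl(F_j)$, and note that each is reached with probability $1/\prod_{i=1}^k c_i$. Your extra care fills in points the paper's proof passes over silently: the walk argument that membership in $F_j$ depends only on $\vbl(F_j)$, the explicit full-product assumption, and the observation that the selection clause in \eqref{def:Incoming} literally gives only $\gamma_j \le \Pr[F_j]$, which matches the paper's own ``at most'' phrasing just before the lemma.
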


\subsection{Convergence Criteria} 
\label{sec:condition}

Here we give several criteria that are sufficient for a search algorithm to terminate fast. We start with the most general one and then show how to successively simplify the conditions (at the cost of making the conditions successively less powerful). \cref{cond:general} requires the general framework setup discussed in \cref{sec:general-setting-setup}. \cref{cond:asymmetric,cond:local-union} describe the most popular LLL setting and are covered by Moser and Tardos' framework. \cref{cond:local-union} is a useful simplification of \cref{cond:general} that applies to many settings. 

% \rtodo{Explain $1 - \eps$ for convergence criteria for fast algorithms.}

\begin{condition} \label{cond:general}
If there exist positive real numbers $\{\psi_i\}_{i=1}^m$ and $\eps \in (0, 1)$ such that, for all $i \in [m] $,
\begin{align}\label{eq:softprelim}
   \frac{ 1 }{ \psi_i }   \sum_{  \substack{ S \subseteq \Gamma_i  } }  \gamma_i^S\prod_{ j \in S} \psi_j    \enspace \leq 1 - \eps,
\end{align}
then the search algorithm terminates rapidly\footnote{See \cref{thm:informal,thm:dyn-full} for the exact number of resampling steps before the algorithm converges.}.
The parameter $\eps$ is called the \emph{slack parameter}.
The running time depends on $1/\eps$. For convenience, we define $\Psi(S) = \prod\limits_{j \in S} \psi_j$ and the product inside the summand can be replaced by $\Psi(S)$. For definitions of the charges $\gamma^S_i$, refer \cref{sec:general-setting-setup}.
\end{condition} 

If we use the monotonicity of charges and simply upper bound $\gamma_i^S$ by $\gamma_i$, then \cref{cond:general} simplifies to $\frac{ \gamma_i }{ \psi_i } \sum_{  \substack{ S \subseteq \Gamma_i  } }  \prod_{ j \in S} \psi_j < 1$. If one sets $x_i = \psi_i / (1+\psi_i) \in [0,1)$, then
\begin{align*}
    \frac{ \gamma_i }{ \psi_i } \sum_{  \substack{ S \subseteq \Gamma_i  } }  \prod_{ j \in S} \psi_j
    = \frac{\gamma_i}{\psi_i}\prod_{j \in \Gamma_i} (1 + \psi_j)
    = \gamma_i \cdot \frac{1}{x_i} \prod_{j \in \Gamma_i \setminus \{i\}} \frac{1}{1-x_j},
\end{align*}
and one obtains the following condition often called the general or asymmetric LLL condition:

\begin{condition} \label{cond:asymmetric}
If there exist numbers $x_1, x_2, \ldots, x_m \in [0,1)$ and $\eps \in (0, 1)$ such that, for all $i \in [m] $,
\begin{align}
   \gamma_i \leq (1-\eps) \cdot x_i \prod_{j \in \Gamma_i \setminus \{i\} } (1- x_j),
\end{align}
then the search algorithm terminates rapidly. 
\end{condition} 

% \noindent To be precise, if an algorithm satisfies \cref{cond:asymmetric}, then it satisfies \cref{cond:general} with $\psi_i = \frac{x_i}{1-x_i}$.

Setting $x_i = \frac{e \cdot \gamma_i}{1 + e \cdot \gamma_i} \in (0,1],$ we have $x_i \prod_{j \in \Gamma_i \setminus \{i\} } (1- x_j) = e\gamma_i \prod_{j \in \Gamma_i} \frac{1}{1+e\gamma_j}$, and the above condition simplifies to $\prod_{j \in \Gamma_i} (1+e\gamma_j) \leq e(1-\eps)$. Using the fact that $1+x \leq e^x$ and $\ln(1-\eps) \geq -\eps(1+\eps) \geq -e\eps$ for $\eps \in (0, 0.68]$, this further simplifies to the following simple local union bound condition:

\begin{condition} \label{cond:local-union}
If there exists $\eps \in (0, 0.68]$ such that, for all $i \in [m]$,
$$\sum_{j \in \Gamma_i} \gamma_j < 1/e - \eps,$$ 
then the search algorithm terminates rapidly. 
\end{condition} 
%\begin{proof}
%\gamma_i \leq e \gamma_i \prod_{j \in \Gamma_i} (1- e \gamma_j/ (1 + e \gamma_j))
%because 1/(1 + x) = 1 - x/(1+x) > 1 - x 
%thus
%1/e \leq \prod_{j \in \Gamma_i} (1- e \gamma_j/ (1 + e \gamma_j))
%because 1 - x/(1 + x) = 1+x/(1+x) - x/(1+x) = 1/(1+x)
%and because 1+x < e^x   and thus 1/(1+x) > 1/e^x = e^{-x}
%we get (1- e \gamma_j/ (1 + e \gamma_j)) > e^{e \gamma_j}
%e^{-1} \leq e^{sum_j - e \gamma_j}
%-1 \leq e^{- e \cdot sum_j  \gamma_j}
%thus
%e sum_j \gamma_j < 1
%\end{proof}

which further simplifies to the following classical condition:

\begin{condition} \label{cond:pde}
Let $p = \max_i \gamma_i$, $d = \max_i |\Gamma_i|$, and $\eps \in (0, 1)$. If
$$p(d+1)e < 1 - \eps,$$
then the search algorithm terminates rapidly. 
\end{condition} 

\noindent In summary, we can convert \cref{cond:local-union} to \cref{cond:asymmetric} by setting $x_i = \frac{e\gamma_i}{1+e\gamma_i}$, and to \cref{cond:general} by setting $\psi_i = \frac{x_i}{1-x_i} = e\gamma_i$.

The following observation will be useful in proving the convergence of our algorithms.

\begin{observation} \label{obs:subset-convergence}
Let $\cF$ be a set of flaws on a set $\Omega$, in which each flaw corresponds to a resampling procedure.
If $\cF$ satisfies \cref{cond:general}, then any subset of $\cF$ also satisfies \cref{cond:general} with the same set of parameters.
\end{observation}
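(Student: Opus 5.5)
The plan is to compare, term by term, the left-hand side of \cref{cond:general} for a subset $\cF' \subseteq \cF$ with the same quantity for $\cF$. We keep the same parameters $\{\psi_i\}_{i \in \cF'}$ and $\eps$, and the same state space $\Omega$, distribution $\mu$ and resampling procedures $\rho_i$. Once the flaw set shrinks to $\cF'$, two things change: the neighborhoods become $\Gamma'_i = \Gamma_i \cap \cF'$, and the charges are computed with respect to $\cF'$. We write the latter as $\gamma'^S_i$. The goal is to show, for every $i \in \cF'$,
\[
\frac{1}{\psi_i}\sum_{S \subseteq \Gamma'_i} \gamma'^S_i \Psi(S) \;\le\; \frac{1}{\psi_i}\sum_{S \subseteq \Gamma_i} \gamma_i^S \Psi(S) \;\le\; 1-\eps .
\]

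\textbf{Step 1: only the restricted subsets appear.} Every $S \subseteq \Gamma'_i$ is also a subset of $\Gamma_i$. So the left sum ranges over a subfamily of the index sets of the right sum. Since all terms are nonnegative, it suffices to prove $\gamma'^S_i \le \gamma^S_i$ for each $S \subseteq \Gamma'_i$.

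\textbf{Step 2: the charges can only decrease.} Fix $\tau$ and $S \subseteq \cF'$. We want to show that $\InSet'^S_i(\tau) \subseteq \InSet^S_i(\tau)$, or at least that the weighted sum over it is no larger. Take a transition $\sigma \to \tau$ under $\rho_i$. With respect to $\cF'$, the flaws it introduces are exactly the flaws it introduces with respect to $\cF$, intersected with $\cF'$. This holds because "introduced" is defined flaw by flaw, via membership of $\sigma$ and $\tau$ in each $F_j$, together with the rule $j=i$. Hence if the transition introduces a superset of $S$ within $\cF'$, it also does so within $\cF$.

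\textbf{Step 3: the main obstacle.} The remaining clause in \eqref{def:Incoming} requires that $F_i$ be chosen for resampling at $\sigma$, which depends on the flaw selection rule. I would handle it as follows. Either one takes the convention that this clause is interpreted with respect to the rule used on the original instance, or one notes that this clause can only be dropped to obtain an upper bound. Dropping it yields the charge with the selection condition relaxed, and \cref{cond:general} is typically verified with that relaxed (larger) quantity anyway. In either reading, the sum in \eqref{def:charge} for $\cF'$ is a sum over a subset of the terms for $\cF$ with identical weights $\mu(\sigma)\rho_i(\sigma,\tau)/\mu(\tau)$. Taking the maximum over $\tau$ then gives $\gamma'^S_i \le \gamma^S_i$.

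Combining Steps 1 through 3 yields the displayed inequality for every $i \in \cF'$, which proves the observation.
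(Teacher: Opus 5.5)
The paper states this observation without proof; the argument it has in mind is evidently your Steps~1--2: neighborhoods shrink, charges do not grow, and all summands are nonnegative. Those steps are correct, including your remark that, relative to $\cF'$, a transition introduces exactly the flaws it introduces relative to $\cF$, intersected with $\cF'$.

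\textbf{The gap is Step~3, and it is not a formality.} In \eqref{def:Incoming}, $\sigma\in\InSet_i^S(\tau)$ requires $F_i$ to be the flaw selected at $\sigma$. The selection rule acts on the \emph{current} flaw set; in every application it picks the highest-priority flaw present. With such a rule, deleting flaws can only make $F_i$ selected at more states. So in general $\InSet'^S_i(\tau)\supseteq\InSet^S_i(\tau)$, which is the wrong direction for your inclusion.
\begin{itemize}
\item Your first reading is not available. The rule for $\cF$ may select a flaw outside $\cF'$. Also, the convergence proof (\cref{lem:witness-prob}) needs the in-set to contain every state from which the algorithm actually run on $\cF'$ resamples $F_i$.
\item Your second reading proves a different statement. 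Its hypothesis is \cref{cond:general} for $\cF$ with the selection-free charges, which is stronger than what is given. The claim that the condition is ``typically'' verified that way fails exactly where the paper uses the observation. The bound on $\gamma(Z_v)$ in \cref{lem:tri-charge}(b) rests on (P1)--(P2), i.e.\ on $Z_v$ being resampled only when no $B$-flaw is present. The intermediate flaw sets there are obtained by deleting $B$-flaws. Without the selection clause, the sum defining $\gamma(Z_v)$ equals $1$ at any $\tau$ in which $L-1$ neighbors $u$ of $v$ have $L_u(\tau)=\{\blank\}$.
\end{itemize}

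In fact, with selection-dependent charges the statement is false, so Step~3 cannot be repaired in general. Take $\Omega=\{0,1\}$ with $\mu$ uniform and $F_A=F_B=\{1\}$. Let $A$ resample the bit uniformly, let $B$ leave the state unchanged, and give $A$ priority over $B$.
\begin{itemize}
\item In $\{F_A,F_B\}$ the flaw $F_B$ is never selected, so $\gamma_B^S=0$ for all $S$, while $\gamma_A^{\emptyset}=\gamma_A^{\{A\}}=\tfrac12$. Hence \cref{cond:general} holds with $\psi_A=2$, $\psi_B=1$, $\eps=\tfrac14$.
\item In the subset $\{F_B\}$ one gets $\gamma_B^{\emptyset}=\gamma_B^{\{B\}}=1$, and $\frac{1}{\psi_B}(1+\psi_B)>1$ for every $\psi_B>0$. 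Indeed, started at state $1$ the algorithm never terminates.
\end{itemize}

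What your Steps~1--2 do prove is the observation in either of two settings:
\begin{itemize}
\item for charges defined without the selection clause; or
\item under an extra hypothesis such as: every removed flaw has lower priority than every retained one. Then the in-sets coincide for $S\subseteq\cF'$.
\end{itemize}
Otherwise \cref{cond:general} has to be re-verified for the subset directly.
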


% \rtodo{Explain the techniques and overview of analyzing resampling algorithms}
% \rtodo{Make an informal introduction of the results and proofs}
% \rtodo{Describe }

\section{The Dynamic \LLL} 
\label{sec:dynamic-intro}

% FILL IN: This is easy, one starts with a starting state and a set of $m$ initial flaws that satisfy some convergence criterion, then the search algorithms runs until a flawless state is found, the adversary can then change the instance by adding or removing flaws. For example adding or removing an edge $e = (u,v)$ to a graph changes neighborhoods for $u$ and $v$ and changes the flaws for these neighborhoods. This might lead to some new flaws in the current state and if so the search algorithms continues its resampling until again a flawless state (according to the current set of flaws) is reached at which point the adversary gets to change the instance again. It is assumed that after any changes again some convergence criterion holds for the new set of flaws. 

% \subsection{adversaries and stability/recourse}

% oblivious, adaptive, clairvoyant (note that clairvoyant implies non-uniformly efficient deterministic adaptive algorithms which in turn implies deterministic fixing strategies with small recourse)

% define recourse/stability. 

% \rtodo{Add a subsection on adversaries and recourse/stability}
\def\bF{\mathbf{F}}

In this section, we outline a framework by which nearly all of the combinatorial objects that can be efficiently generated by the constructive versions of the \LLL \cite{achlioptas2019beyond,moser2010constructive} can be made \textit{dynamic}. An adversary can provide a sequence of flaws, and a natural dynamic variant of the sampling algorithms studied previously can generate flawless objects that satisfy all the constraints imposed at \textit{every} intermediate time step.

\subsection{Dynamic setup.}
\label{sec:dynamic-setup}

Initially, we start with a \LLL instance without any flaws and a universe $\Omega$ of combinatorial objects from which we are required to output one flawless object.  Next, an adversary provides a sequence of $q$ queries.
Each query is one of the following,

% \rtodo{Make sure to mention that the algorithm cannot change adaptively. Emphasize that the parameters can change arbitrarily but uniquely.}

\begin{itemize}
    \item `Add flaw $i$' - $i$ is an integer that represents a unique identifier for a flaw to be added. 
    % in such a way that the dependencies and \LLL parameters may be deduced from this unique identifier and the identifiers of the set of queries made so far. For most typical applications, $i \in \poly(q)$, though we leave this as a parameter for the algorithm designer.
    
    % \textbf{Example:} As an example, consider dynamic problems on graphs. In the following sections, we consider this problem on triangle-free graphs with $O(\frac{\Delta}{\ln \Delta})$ colors, where $\Delta$ is the maximum degree. In this example, one requires a flaw set to include $1$-hop neighborhoods. Note that the potential number of $1$-hop neighborhoods is large, up to $\binom{n}{\Delta}$. However, we may associate with each edge a unique $O(\log q)$ bit identifier that denotes the presence of this edge. Using these identifiers, the $1$-hop neighborhood may be expressed using only $O(\log q)$ bits -- which is much fewer than the $\Delta \log n$ needed to specify neighborhoods.
    
    \item `Delete flaw $i$' - A flaw is removed from the constraints imposed on our desired object. It is guaranteed that $i$ is present prior to this operation. Note that any flawless object obtained for the previous query is still considered flawless after this query. 
    % This query is helpful if the dependency graph is to be modified.
    
    % \item `Expand the state space $\Omega$ with $S$' - We set $\Omega_{\text{new}} \gets \Omega_\text{old} \times S$. The set of flaws are retained in the natural way. In particular we consider $(\sigma, s) \in F_{\text{new}}$ for some $\sigma \in \Omega_{\text{old}}$ and $s \in S$ if and only if, $\sigma \in F_{\text{old}}$. Note that the flaws (and their IDs) remain unchanged after this change. This operation may be used if one desired to add variables or expand the state space. The latter may be need if the \LLL is changed due to changes in the underlying dependency graphs. For e.g., if we are to support graph coloring, then as the degree of the graph changes, the number of colors that can be assigned to vertices changes. 
\end{itemize}

Let $\bF_i$ denote the set of flaws imposed for the $i^{th}$ query for $i = 1, 2, \dots q$. We define $\bF_0 = \emptyset$. Note that $|\bF_i \setminus \bF_{i-1}| \leq 1$. We require a dynamic \LLL algorithm to construct $q$ flawless states, where the $i^{th}$ state is free from all flaws in $\bF_i$. We assume that the set of states $\Omega$ remains the same throughout and that the underlying measure $\mu$ over $\Omega$ that satisfies the conditions of \LLL remains the same for each of the $q$ instances. We note that the parameters of the LLL conditions, including the transition probabilities, the fixing procedure, and other parameters of the algorithm, can change, but are completely determined by the sequence of updates.

We assume that the set of flaws in $\bF_i$ for every $i$ satisfy \LLL criteria as needed. In \cref{sec:condition} we provide an overview of different criteria.

\paragraph{Adversaries.} We consider \textit{oblivious}, \textit{adaptive} and \textit{clairvoyant} adversaries. Our most general results and many of our applications hold for the most general clairvoyant adversary. We explicitly mention when this is not the case.

An \textit{oblivious} adversary is impervious to the algorithm and the source of randomness that the algorithm uses. In particular, such an adversary fixes an update sequence of flaws and outputs this sequence to the algorithm. The adversary selects a flaw that is independent of any random bits used by the algorithm.

An \textit{adaptive} adversary can observe the output of the algorithm and the random bits that it has used so far. An adaptive adversary, upon providing an update sequence of flaws, observes the algorithm's output for these updates and subsequently provides the next update.

A \textit{clairvoyant} adversary is as powerful as an adaptive adversary, except that it receives the entire random bit string that the algorithm uses before even the first update. Based on this information, the adversary can then decide an update sequence to provide to the algorithm. Observe that, given the random string, the clairvoyant adversary can simulate the algorithm's output and thus compute and provide a worst-case input. We do not impose any other restriction on the adversary.

\paragraph{Recourse and stability.} The recourse of a dynamic LLL algorithm is the number of changes made to the output of a solution between successive updates. In the setting where the state is described by a Cartesian product of independent variables, the recourse is the number of variables whose values change between successive updates.

\subsection{Dynamic resampling algorithms}
\SetKwProg{Fn}{Function}{:}{}
\SetKwFunction{FDFSFix}{Fix}
\SetKwFunction{FInsert}{Insert}
\SetKwFunction{FRemove}{Remove}
\SetKwFunction{FExpand}{Expand}
\SetKwFunction{FSelect}{Select}
\SetKwFunction{FAddress}{Address}

We study natural dynamic variants of the resampling algorithms for the constructive \LLL analyzed in the literature. We begin by outlining a generic dynamic algorithm \cref{alg:ais} that covers any application that uses the AIS framework. Next, we discuss an algorithm for LLL instances satisfying \cref{cond:asymmetric,cond:pde}. Largely, this is a special case of \cref{alg:ais} except for some subtleties that we explain. We also briefly discuss implementation details of efficiently maintaining flaws so that they can be efficiently identified.

In \cref{sec:overview-analyzing-resampling-algorithms} we provide an outline of how we analyze these dynamic resampling algorithms and how our approaches differ from the literature. Formal proofs of the number of resamplings are provided in \cref{sec:moser-tardos-convergence,sec:gen-framework-convergence}. We discuss the implementation details of maintaining flaws in \cref{sec:applications}. 

\subsubsection{A general dynamic resampling procedure}

Our general resampling algorithm is outlined in \cref{alg:ais} and is based on the framework of \cite{achlioptas2019beyond}, which encompasses nearly all conceivable LLL instances. A crucial difference of our resampling algorithms is that flawless objects are always constructed in between flaw updates. One may take any \LLL instance that is describable via the framework and plug it in \cref{alg:ais} to obtain the corresponding dynamic variant. Next, we describe the functions $\FInsert, \FRemove, \FSelect$ and $\FAddress$. This property will be crucial in our analysis of the resampling algorithm.

\begin{algorithm}[htbp]
    \Fn{\FInsert{$B$}}{
        \textbf{Ensure:} No flaw other than $B$ exists and \cref{cond:general} is satisfied after update.

        \textbf{Global Parameters:} 
        
        \hspace{0.4cm} (i) $t$ denotes the total number of calls to $\FAddress$ made so far (initially $0$).

        \hspace{0.4cm} (ii) $j$ denotes the number of calls to $\FInsert$ made so far (initially $0$).

        $F \gets \FSelect{\textbf{F}, t}$
        
        \While{$F \neq \perp$}{
            $\sigma \gets \mathtt{Address}(F, \sigma, \bF, j)$
            
            $F \gets \FSelect{\textbf{F}, t}$
            
            $t \gets t + 1$
        }
        
        $j \gets j + 1$
    }
    \Fn{\FRemove{$B$}} {
        Remove $B$ from the current set of flaws
    }
    \Fn{\FSelect{$\textbf{F}, t$}} {
        Outputs a flaw $F \in \textbf{F}$. 
        We emphasize that the selection of $F$ depends only on $\textbf{F}$, $t$, and, in particular, is independent of the randomness used by $\FAddress$.

        Return $\perp$ if the state is flawless.
    }
    \Fn{\FAddress{$F, \sigma, \bF$}}{
    \textbf{Ensure:} State transition probabilities and their charges satisfy \cref{cond:general}.
    
    \textbf{Input:} A flaw $F$, the current state $\sigma$ and the current set of active flaws $\textbf{F}$.
    
        A randomized procedure that computes a state $\tau \in \Omega$ so as to ``fix'' $F$. The procedure can be arbitrary and defines transition probabilities $\rho^\sigma_F$. The only restriction we impose is that for every $\bF$, the transition probabilities satisfy \cref{cond:general}.
        % The function reads random bits from a sufficiently large string of bits. Each flaw $F$ has its own string bits.
    %     % Each successive call of this function reads more bits from that input tape.

        \textbf{Output:} A state $\tau \in \Omega$.
    }
    \caption{A generic template for dynamic LLL resampling procedures fitting the framework of \cite{achlioptas2019beyond}.}
    \label{alg:ais}
\end{algorithm}

The $\FInsert$ procedure takes as input (i) the current state $\sigma \in \Omega$ of the algorithm (or assignment of variables in the variable setting) and (ii) the flaw to be added. It returns a state that is free from the current set of flaws. 

The function $\FInsert$ invokes two functions, $\FSelect$ and $\FAddress$. The $\FSelect$ function selects a flaw $F$ to fix and the $\FAddress$ procedure executes the resampling algorithm to fix $F$. 

\paragraph{Description of flaw selection.} It is important to note that we require that $\FSelect$ \textbf{does not} depend on the random bits used by $\FAddress$. In particular, given the current active set of flaws $\bF$ and the iteration of the algorithm, the output of $\FSelect$ must be determined. It is possible for $\FSelect$ to change across different iterations. For e.g., it is even possible to randomly select a flaw. The crucial property of the $\FSelect$ function is that it is \textit{independent} of the random bits used by $\FAddress$. 

In all of our applications, the $\FSelect$ procedure decides on a permutation $\pi$ over the set of all flaws and selects the present flaw with the highest rank.
% Recall that we will use the witness and compute the probability that it is consistent with a random table $\cR$. In this analysis

\paragraph{Description of flaw resampling procedures.} The $\FAddress$ procedure takes as input the current state $\sigma$, flaw $F$, and current active set of flaws $\bF$ and then outputs a new state $\sigma'$ after invoking a randomized resampling procedure designed for the specific application. We note that the procedure can depend in an arbitrary way on the \textit{current state} of the algorithm, as long as the charges satisfy \cref{cond:general}.

\paragraph{LLL parameters.} We allow the LLL parameters described by \cref{cond:general} to vary within iterations. In particular, the parameters for the set of active flaws $\bF_i$ after $i$ updates may be different from those of $\bF_{i-1}$. Note that a change in the flaw addressing procedure changes the charges associated with that flaw and hence also the parameters of \cref{cond:general}.

% On top of the $\FInsert$ and $\FRemove$ functions, we require (i) a \textit{flaw selection procedure} $\FSelect$ and (ii) a \textit{flaw resampling procedure} $\FAddress$. The procedure $\FSelect$ depends on the set of updates made so far and the current set of flaws. In particular, $\FSelect$ does not depend on the current state. 

\subsubsection{Dynamic Moser-Tardos resamplings}

A simpler setting is the variable setting where the set of states $\Omega$ is a product distribution over several variables. 
Each flaw $F$ has a minimal set of variables $\vbl(F)$ whose values completely describe whether a given state has flaw $F$.

In this setting, the flaw selection procedure $\FSelect$ can be \textit{arbitrary}. 
Once a flaw $F$ is selected, the $\FAddress$ procedure resamples all the variables in $\vbl(F)$ uniformly at random. If the resampling procedures satisfy \cref{cond:asymmetric}, then the resampling algorithm converges quickly.

\subsection{Efficient flaw selection procedures.}

In \cref{alg:ais} we outlined $\FSelect$ and $\FAddress$ as generic procedures that select and address flaws respectively. However, in many algorithmic settings, it may not be clear how to efficiently identify and select the highest priority flaw. 

A common tactic we employ to identify and maintain the highest priority flaw is outlined in \cref{alg:select}. We maintain a global priority queue $Q$ that stores flaws that were created by the $\FAddress$ procedure. The flaws are ordered by the permutation $\pi$. For the MT-resampling procedure and in many of our applications, the number of possible flaws created due to one invocation of $\FAddress$ is small. Thus, after the execution of $\FAddress$, we can iterate over them and then insert them into the priority $Q$ if they are introduced.

Whenever we require to select a flaw $F$, we pop from the priority queue $Q$. It is easy to see that $Q$ always contains the highest priority flaw. It is possible that the highest priority flaw in $Q$ has been collaterally fixed, in which case we move to next highest prority flaw in $Q$.

% More details are presented in \cref{sec:applications}.

\begin{algorithm}
    \Fn{\FSelect{$\sigma, \pi, Q$}} {
    
        \textbf{Input:}
        (i) $\pi \gets$ a permutation of $[q]$ indicating priorities of the flaws.

        (ii) $Q \gets$ a priority queue of the list of flaws.

        (iii) $\sigma$ is the current state of the algorithm.
        
        \While{$Q$ is not empty} {

            $F \gets $ pop the highest priority flaw from $Q$    

            \If{$F$ is a flaw at $\sigma$} {
                
                \Return $F$
                
            }
        } 
        \Return $\perp$
    }
    \Fn{\FAddress{$F, \sigma, t, j, Q$}} {
        \textbf{Input:} Chosen flaw fix $F$, current state $\sigma$, resampling number $t$, update number $j$ and priority queue $Q$.

        Execute the random process that fixes flaw $F$ at $\sigma$ and returns a new state $\tau$ with probability distribution $\rho^{\sigma}_F$.

        \For{every possible newly introduced flaw $F'$ due to the resampling of $F$ at $\sigma$ } {
            \If{$F'$ is a flaw at $\tau$} {
                Add $F'$ to $Q$
            }
        }

        Return $\tau$
    }
    \caption{Efficient implementation of the selection and address procedure for our applications of the LLL}
    \label{alg:select}
\end{algorithm}

\subsection{Overview of the analysis of resampling algorithms}
\label{sec:overview-analyzing-resampling-algorithms}

In this section, we provide an overview of the elegant proof by \cite{moser2010constructive} to analyze the running time of resampling algorithms. We also describe how our proofs differ from these works. Formal proofs of the analysis of the dynamic resampling algorithms in the two settings are described in \cref{sec:moser-tardos-convergence,sec:gen-framework-convergence}.

At a high level, the analysis considers an execution of the algorithm that takes a large number of resamples to converge. Note that the execution is fully specified by (i) the random string that the algorithm uses, (ii) the sequence of flaw resamplings that the algorithm makes, and (iii) the starting state.

\paragraph{Step 1 (Constructing a witness).} The execution of the algorithm is compressed into a succinct object called the ``witness''. The witness acts as an attestation or ``proof'' that the algorithm made a given number of resamplings. We next describe two desired properties of the witness, whose role will be clearer in subsequent steps. First, the witness must have an efficient representation that depends only on the resampling sequence, and independent of the random string $\cR$ used by the algorithm. 
% For e.g., the total number of witnesses must be ``reasonably small''.
Second, it must encode sufficient information such that given a random string, one can \textit{partially} verify whether the witness could correspond to an execution of the algorithm with $\cR$. The verification process must be a necessary condition, but need not be sufficient. We ideally require a witness that is as short as possible and whose verification passes with as small probability as possible against a uniform random string $\cR$.

\paragraph{Step 2 (Computing the weight of the witness).} Given a specific witness, we can compute the probability that the witness is consistent with a uniformly random string. More precisely, suppose that the random string $\cR$ is given. One can iterate through the witness and by the property mentioned above, identify for each of the invocations of flaw resamplings that failed, the exact random bits from $\cR$ that resulted in the need for this resampling. From this information, we can verify whether the witness can be produced for a given $\cR$. The probability that the witness passes this test for a uniformly selected $\cR$ is called the ``weight'' of the witness.

\paragraph{Step 3 (Calculating the total weight of all witnesses).} Observe that the sum of the weights of all possible witnesses is an upper bound on the expected number of resamples made by the algorithm. In particular, the probability that the witness makes at least $s$ resamplings is upper bounded by the total weight of all witnesses of size at least $s$.

If the \LLL conditions come with an $\eps$ slack, then the probability that a witness of size $s$ is consistent is reduced further by an $(1 - \eps)^s$ factor.

Let $\cF$ denote the set of all witnesses. For a witness $\phi \in \cF$, let $p_\phi$ denote its weight. Let $\cF_s$ denote the set of witnesses corresponding to executions that made at least $s$ resamplings. We have,
\begin{align*}
\Pr[\text{algorithm makes } s \text{ resamplings }] \leq \sum\limits_{\phi \in \cF_s} (1-\eps)^s p_\phi \\
\leq (1 - \eps)^s \cdot \sum\limits_{\phi \in \cF} p_\phi 
\end{align*}

Thus, the number of resamplings is logarithmic in the total weight of all witnesses. The total weight is bound depending on the application. \cite{moser2010constructive} provided an ingenious coupling argument that is able to provide an elegant closed-form bound for the expression. We next outline the specific details of the witness construction and bounds for three regimes of convergence criteria.

\paragraph{Moser-Tardos Framework.} In the MT-Framework, the witness was a labelled rooted tree that expressed dependencies between flaw resamplings in reverse order. 

We next outline the witness generation process. For every run of the algorithm that made $s$ resamplings, the resamplings are iterated through in reverse order and when a resampling of flaw $F$ is visited, a new vertex is added with label $F$ and it is either (i) attached as a child of the highest depth vertex already in the tree that had label $F'$ that depends on $F$ or (ii) added as a new root.
% The ideas are similar to the above adaptation of the general framework, except that the approach of \cite{moser2010constructive} requires the input distribution to be from a product distribution; however, the resampling can be \textit{arbitrary}. The key difference is that we build the witness forest in \textit{forward} direction. This allows us to argue that the total number of resamplings is only $\tilde{O}(q)$ where $q$ is the number of updates. The key difference is that between updates, it is guaranteed that the algorithm is in a flawless state. 

\paragraph{{\AIS} Framework.} The most general criteria incorporate point-to-set correlations by \cite{achlioptas2019beyond}. For this condition, we only require that there exists an execution-independent rule for selecting which flaws to resample at any given point in the resampling algorithm. Such a restriction is not prohibitive for most applications of the \LLL. As mentioned by \cite{achlioptas2019beyond}, one can even sample a flaw uniformly at random by sampling a priority rule for the flaws for every iteration of the resampling algorithm. We adapt their witness sequence for the dynamic setting.

\paragraph{Depth-first search-based resampling.} The initial analysis of local search-based resampling algorithms relied on DFS-based procedures \cite{moser2010constructive}. In such procedures, there is an additional restriction imposed on which flaw is selected to be resampled on fixed next. The procedure starts by selecting a flaw $F$ to fix. While $F$ is still flawed, it chooses any flaw $F'$ dependent on $F$ and then recursively invokes the fixing procedure on $F'$. Such procedures are easier to analyze because the witness tree can simply be the DFS execution tree, which is much more restricted than the executions arising out of arbitrary resampling procedures. Furthermore, one can impose a priority rule for selecting the flaws and ensure that \textit{higher} priority flaws are fixed \textit{only after} all lower priority flaws are fixed. This can help because such information may be used in the success probability of fixing the higher-priority flaws. Finally, an in-order traversal of the DFS tree recovers the exact resampling sequence, which makes arguing about the probability of being consistent with a random string $\cR$ easier. The arguments also used the idea of \textit{entropy compression}. One can also analyze the dynamic analogue of such algorithms with the entropy compression method.

% \subsection{Bounds on the number of resamplings.}

% Our bounds on the number of resamplings of the dynamic algorithm can be obtained from Remark 2.4 of \cite{achlioptas2019beyond} -- if applied in the right way. The dynamic algorithms that we analyze do not fit into the framework of permutation-based flaw selections that they use for most results in their paper. Recall that we imposed that the flaw selection procedure must not depend on the randomness $\cR$ used in flaw resamplings. In particular, this means that the flaw selection cannot depend on the current state of the algorithm. 
% We restate their theorem below.

% \begin{theorem}[Remark 2.4 of \cite{achlioptas2019beyond} (paraphrased).]
%     For a resampling algorithm describable under the framework of \cite{achlioptas2019beyond} for which there exists positive real parameters $\psi_1, \psi_2, \dots \psi_m$ whose charges satisfy \cref{cond:general}, the number of resamplings of flaws at at most $s$ with probability at least $1 - \delta$ where,

%     \begin{align*}
%         \eps s = O\left( \log {\max\limits_{\sigma \in \Omega} \frac{\theta(\sigma)}{\mu(\sigma)}} + \log \left(\sum\limits_{S \subseteq \text{span}(\theta)} \Psi(S) \right) + \log\limits_{S \subseteq [m]} \frac{1}{\Psi(S)} + \log \delta^{-1}\right)
%     \end{align*}

%     Here, $\theta, \mu$ denote the distributions of the initial state and the measure on $\Omega$ with which the charges are computed.
% \end{theorem}

% \paragraph{Oblivious update sequence.} Let us 

\section{Moser-Tardos in the variable setting -- proof of convergence}
\label{sec:moser-tardos-convergence}

We consider the same dynamic setting as before, except that the flaws after each update satisfy either \cref{cond:asymmetric} or \cref{cond:pde}. 
We follow largely the same procedure as \cite{moser2010constructive}, except that we need to modify the analysis to account for the fact that the dynamic algorithm only observes a single flaw in the object it constructs at the beginning of an update.
We consider two settings, depending on the \textit{initial} assignment of the variables. In the first setting, we assume that the initial assignment of the random variables is uniform. This is natural, since every variable must be assigned a value at some point in the algorithm, and hence, the initial value might as well be random. However, in some settings, we might want to \textit{compose} several dynamic algorithms together. In such cases, it might not be feasible to assume that the initial assignment is random, since the starting assignment might come from an independent instance of the LLL. For the second setting, we assume that the initial assignment is \textit{arbitrary}. 
The difference in the two settings turns out to be negligible in the symmetric setting and several sets of parameters in the asymmetric setting. We will outline both versions of the proof simultaneously and refer to them as ``Setting 1'' and ``Setting 2'' respectively. We describe setting 1 in the upcoming sections and outline the differences to ``Setting 2'' in \cref{sec:mt-arbitrary-init}.
% \stodo{Remember to update the reference.} \rtodo{done.}

Our approach starts by bounding the number of resamplings for a fixed update sequence of flaws $F_1, F_2, \dots F_q$. We observe that the probability bound on the number of resamplings for a fixed sequence has an exponential tail. Then, a union bound over all possible update sequences yields our desired bounds.
In \cref{sec:mt-witness-defn} we define our witness corresponding to a run of the algorithm on the fixed update sequence. This differs in a subtle way from the witness trees defined in \cite{moser2010constructive}.  Subsequently, in \cref{sec:mt-witness-weight} we derive an expression for the probability that a specific witness is generated when \cref{alg:moser-tardos} is run with a uniform random string. This expression differs in the two settings that we consider. 
Finally, in \cref{sec:mt-witness-union-bound}, we compute the union bound by coupling the output of the algorithm with a Galton-Watson process as in \cite{moser2010constructive}. 

\begin{algorithm}

    \Fn{\FInsert{$B$}}{
        \textbf{Ensure:} No flaw other than $B$ exists.
        
        \While{there exists at least one flaw $F$}{
            $F \gets $ an \textbf{arbitrary} flaw.
            
            $\FAddress{F}$.
        }
    }
    \Fn{\FRemove{$B$}} {
        Remove $B$ from the current set of flaws.
    }
    \Fn{\FAddress{$F$}}{
    \textbf{Ensure:} \cref{cond:asymmetric} or \cref{cond:pde} is satisfied.
    
        All variables in $\vbl(F)$ are resampled uniformly at random.
    }
    \caption{An algorithm to be executed after every update.}
    \label{alg:moser-tardos}
\end{algorithm}

\subsection{Witness Construction}
\label{sec:mt-witness-defn}

The original proof by \cite{moser2010constructive} constructed witness trees by traversing the resampling sequence in reverse order. In particular, there was a witness tree for every flaw $A$ whose root corresponds to the last resample of $A$. We then iterate over the witnesses in reverse order, and when we encounter a flaw $B$ that depends on any of the flaws added so far, we attach $B$ as a child of the highest-depth node. If there are multiple candidates, we arbitrarily choose one.

Such a witness is not a good candidate for the dynamic setting that we analyze. In particular, it is very hard to capture the fact that all flaws arise from the latest update. For this reason, we consider the \textit{forward} tree which we next define in a similar manner.

\def\lbl{\mathsf{label}}
\paragraph{Witness construction.} Recall that the witness is defined for a specific update sequence and a run of \cref{alg:moser-tardos} for this update sequence. Let $\bF_i$ denote the set of flaws after the $i^{th}$ update. Let $\{x_i(j) \mid j \in \bF_i\}$ denote the asymmetric \LLL parameters (see \cref{cond:asymmetric}) corresponding  to $\bF_i$.
The witness is an ordered set of $q$ labelled rooted trees with labels in $[q]$. We refer to witnesses by $\tau$ and the $i^{th}$ component of the witness by $\tau_i$.
The $\tau_i$ corresponds to the resamplings made by the algorithm after the $i^{th}$ update. We allow the trees to be empty. An empty tree indicates that no resamplings were needed when an update was made. This always occurs for deletion updates. An empty tree can also occur when the solution for the $(i-1)^{th}$ update (or the initial state) did not contain the added flaw.

\def\bS{\mathbf{S}}
Consider a run of the algorithm that makes a total of $s$ resamplings, and $s_i$ resamplings after the $i^{th}$ update given by $\bS_i = \{F_{i,j} \mid j\in[s_i]\}$.
If $\bS_i = \emptyset$, $\tau_i$ is empty. Otherwise, by the constraints of \cref{alg:moser-tardos}, the $i^{th}$ update must have been a flaw addition, and $F_{i, 1}$ must be the flaw that was added. 
Each vertex in $\tau_i$ corresponds to a flaw resampling $F_{i, j}$ and this vertex has label $F_{i, j}$.
The root of $\tau_i$ is $F_{i, 1}$. The parent of the vertex corresponding to the resampling $F_{i, j}$ for $j > 1$ is the flaw resampling $F_{i, k}$ for $k < j$ that introduced $F_{i, j}$.

\subsection{Weights of the witness}
\label{sec:mt-witness-weight}

\paragraph{Resampling Table.} For the MT-resampling procedures, we require explicitly stating the source of randomness. We assume that corresponding to each flaw $F$ there exists an infinite random string from which the random bits needed for the resampling of $F$ are taken. In particular, the $i^{th}$ resampling will take the $i^{th}$ consecutive sub-segment of bits from the random string.

\begin{definition}
\label{def:proper-witness}
    We say that a witness $\tau$ is \textit{proper} if no two children are labelled with flaws that depend on each other. In particular, the labels of the children of every vertex are distinct.
\end{definition}

\begin{claim}
\label{clm:mt-witness-prob}
    For a resampling table $\cR$, the probability that a witness $\tau$ is consistent with $\cR$ is given by,
    \begin{align*}
        \Pr[\tau \text{ appears in the witness forest}] \leq  \prod\limits_{v \in \tau} \Pr[F_v]
    \end{align*}
\end{claim}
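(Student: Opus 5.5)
We adapt the Moser--Tardos ``table check'' argument to the forward witness trees. Fix an update sequence and a labelled tree $\tau$ (one component $\tau_i$ of the witness). The plan is to define, for a given resampling table $\cR$, a deterministic check procedure that reads $\tau$ and $\cR$ and that \emph{must} succeed whenever $\tau$ appears in the witness forest generated by $\cR$. We then show that this check succeeds with probability at most $\prod_{v\in\tau}\Pr[F_v]$.

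The check assigns to each vertex $v$ of $\tau$ a position in $\cR$. Specifically, for each variable $x\in\vbl(F_v)$, it assigns the index of the segment of $x$'s random string that holds the value of $x$ at the moment the resampling corresponding to $v$ is performed. The event checked is that these values put $\vbl(F_v)$ into $F_v$.

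The key step is to show that this index is determined by $\tau$ alone, independent of $\cR$. For the root, the flaw $F_{i,1}$ is present at the start of update $i$, so the relevant values are those written by earlier resamplings. The parent pointers of $\tau$ record, for each non-root $v$, which earlier resampling $u$ introduced $F_v$. We want to argue that between $u$ and $v$, no variable of $\vbl(F_v)$ is touched except by resamplings that appear in $\tau$ as children of $u$ ordered before $v$, or their descendants.

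Properness is what makes this work. Siblings have non-dependent labels, i.e.\ disjoint variable sets, so the variables of $F_v$ were last written by $u$ itself, or, if $v$ is the root, by resamplings recorded in the earlier components $\tau_1,\dots,\tau_{i-1}$, which are fixed in the witness. So I would count, for each variable $x$, the number of vertices in the witness (ordered by the BFS/forward order) whose label involves $x$ and that precede $v$. That count gives the segment index, exactly as in Lemma~2.1 of \cite{moser2010constructive}, but with the ordering taken forward rather than reversed.

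Once the indices are fixed, distinct vertices $v\neq v'$ whose labels share a variable $x$ receive distinct segment indices for $x$, because each resampling consumes a fresh segment. Hence the events ``the values at $v$'s segments violate $F_v$'' are determined by disjoint sets of independent uniform entries of $\cR$. Each such event has probability exactly $\Pr[F_v]$, since the segments are uniform on $\prod_{x\in\vbl(F_v)}[c_x]$. Independence then gives the product bound; the inequality absorbs the fact that appearing in the forest implies, but is not implied by, the check succeeding.

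The main obstacle is the first step: justifying that the segment indices are determined by $\tau$ (together with the preceding witness components) rather than by the actual run. In the forward construction a flaw may be introduced, collaterally fixed, and then reintroduced. If the witness does not reflect such events, the count of prior resamplings of $x$ could depend on $\cR$. My first attempt would be to index segments by the global count of earlier witness vertices containing $x$, which is well defined since every resampling appears as some vertex of some $\tau_j$, and to use the ordering of resamplings within the witness as part of the witness data. If this data is not recoverable from the tree shape, I would enlarge the witness to record the order, and absorb the extra counting cost in the later union bound.
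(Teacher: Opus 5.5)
Your proposal uses the same table-check strategy as the paper, and your fresh-segment independence step is fine once the positions are fixed. The gap is exactly the step you flag yourself, and the justification you give for it is wrong. The forward tree does not determine the time order of two resamplings that share a variable but lie in different branches. That order is precisely what decides which segment of $\cR$ the check must read.

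Take a proper component with root $r$, children $a,b$ with $\vbl(a)\cap\vbl(b)=\emptyset$, and a child $c$ of $a$ with $x\in\vbl(b)\cap\vbl(c)$. After $a$ is resampled, both $b$ and $c$ are present, and \cref{alg:moser-tardos} (whose selection is arbitrary) may resample either one first. If the other survives, both runs yield the same tree. Yet in one run $b$ reads the value of $x$ written by $c$, and in the other $c$ reads the value written by $b$. Hence no ordering fixed by $\tau$ alone, BFS or otherwise, gives a check that every run producing $\tau$ passes.

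The same example refutes your claim that the variables of $F_v$ were last written by the parent $u$ or by earlier components. Properness constrains only siblings, so a cousin (here $c$ relative to $b$) can overwrite $\vbl(F_v)$ between $u$ and $v$ while $F_v$ stays violated. Also, variables in $\vbl(F_v)\setminus\vbl(F_u)$ are never written by $u$ at all. Moreover, forward trees need not even be proper: if resampling $u$ introduces two dependent flaws and the second survives the resampling of the first, both become children of $u$.

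Your fallback of recording the order proves a statement about a different object, and its cost is not absorbed downstream. The coupling in \cref{clm:mt-witness-generation-prob} and \cref{clm:witness-generation-union-bound} sums $\prod_v\gamma(F_v)$ over unordered trees only. The number of dependency-relevant interleavings compatible with one tree can be exponential in $s$, which competes with the $(1-\eps)^s$ gain.

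For comparison, the paper settles this step with a BFS traversal plus the assertion that every earlier resampling of a flaw sharing variables with $F_v$ sits at strictly smaller depth. That is the defining property of Moser--Tardos's \emph{backward} witness trees. In the example above, $c$ (depth $2$) may precede $b$ (depth $1$), so this assertion does not follow from the forward construction either. Your suspicion is well founded, but the proposal does not supply the missing ingredient.

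There are two ways to close it. First, restrict the selection rule to one that does not depend on the state (e.g.\ a fixed priority order). Then, as in the reconstruction claim of \cref{sec:gen-framework-convergence}, the tree together with the rule determines the whole resampling order, and your argument goes through verbatim. Second, replace the tree by the trace of the execution, i.e.\ the witness DAG recording only the relative order of dependent resamplings. Your check is valid for traces under any selection rule, but the counting must then be redone over traces instead of trees.
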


\begin{proof}
    Given a proper witness forest $\tau$, we first show that one can identify the random bits in the table $\cR$ that were used to generate $\tau$. Since the roots are ordered by the labels $1, 2, \dots q$, we can determine which components correspond to which updates. We iterate through the components in increasing order in which they were added and identify the random bits used for the resamplings of the corresponding nodes in a breadth-first search order within each tree.

    Suppose we visit a node $v$ labelled with flaw $F_v$ in breadth-first search order. By definition, all the resamplings of all flaws $F_j$ that depend on $F_v$ must correspond to vertices in $\tau$ that have a strictly smaller depth than $v$ and must have been already visited by the breadth-first search traversal.

    For a variable $x_t \in \vbl(F_v)$ let the number of vertices already traversed that depend on $x_t$ be $k$. The resampling of $F_v$ must take the $(k+1)^{th}$ bit in the table $\cR$ corresponding to the variable $x_t$.

    Finally, just prior to each resampling $F_v$ of each node $v$ in $\tau$, one can determine a unique set of positions in $\cR$ that correspond to the current value of $F_v$. Observe that the values of some of the variables could have been set as a result of prior resamplings of other flaws that depend on $F_v$.

    As we run through the BFS traversal of each component of the witness forest, we make sure that bits in $\cR$ just prior to the resampling of $F_v$ result in an evaluation that violates $F_v$. This checks that $F_v$ was actually a flaw prior to the resampling. For a given table $\cR$, the probability that a particular vertex $v$ with label $F_v$ passes the check is exactly $\Pr[F_v]$ and the check procedure is independent for all flaws, since each entry in $\cR$ corresponds to at most one vertex.
\end{proof}

\subsection{Coupling with a forest generation process.}
\label{sec:mt-witness-union-bound}

Given \cref{clm:mt-witness-prob}, we would like to union bound the probability over all proper witness forests $\tau$. In the symmetric setting, one can simply count the number of distinct forests that can arise and each forest of size $s$ has the same probability bound. In the asymmetric setting, we require a coupling argument with a Galton-Watson type random witness generation process.

Consider a random process that generates the witness as follows. The roots of the forest are fixed to have labels $F_1, F_2, \dots F_q$. The connected components corresponding to each root are generated independently. For a vertex $v$ in the forest with label $F_v$ and whose set of dependent flaws are $\Gamma(F_v)$ recall that the \cref{cond:asymmetric} requires that there exists reals $x(F_i)$ for each $i = 1, 2, \dots q$ such that $\Pr[F_i] \leq (1 - \eps) x(F_i) \prod\limits_{F_j \in \Gamma(F_i) - F_i} (1 - x(F_j))$. Let $\gamma(F_i) = x(F_i) \cdot \prod\limits_{j \in \Gamma(F_i) - F_i} (1 - x(F_j))$.

For each $F_j \in \Gamma(F_v)$, with probability $x(F_j)$ we add a child to $v$ with label $F_j$.

The random process either ends when there are no new nodes to process, or proceeds indefinitely.

\begin{claim}
\label{clm:mt-witness-generation-prob}
    For a proper witness $\tau$, whose non-empty trees have roots $F_1, F_2, \dots F_q$ with \LLL parameters $x(F_i)$ when they were resampled, the probability that the random process outputs $\tau$ is,
    \begin{align*}
        \Pr[\tau \text{ is generated }] = \rb{\prod\limits_{i=1}^q \frac{1 - x(F_i)}{x(F_i)}} \cdot \prod\limits_{\substack{v \in \tau}} \gamma(F_v)
    \end{align*}
    where $\gamma(F_v) = x(F_v) \cdot \prod\limits_{j \in \Gamma(F_v) - F_v} (1 - x(F_j))$. In particular, recall that \cref{cond:asymmetric} implies that $\Pr[F_v] \leq (1 - \eps) \gamma(F_v)$.
\end{claim}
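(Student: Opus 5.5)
Since the components are generated independently, the probability factors over the trees, so it suffices to compute it for a single tree $\tau_i$ with root labelled $F_i$ and then multiply. The process produces exactly $\tau_i$ if and only if, at every vertex $v$ of $\tau_i$, the set of children drawn is exactly the set of labels of the children of $v$ in $\tau_i$. We make two assumptions here. First, by properness (\cref{def:proper-witness}) the children of $v$ carry distinct labels, all lying in $\Gamma(F_v)$. Second, we fix a canonical ordering of siblings so that a labelled tree is determined by these child-label sets. Under these assumptions the events at different vertices involve disjoint independent coins. This gives
\[
\Pr[\tau_i \text{ generated}] \;=\; \prod_{v\in\tau_i}\Bigl(\prod_{u \text{ child of } v} x(F_u)\Bigr)\Bigl(\prod_{F_j\in \Gamma(F_v)\setminus \mathrm{ch}(v)} (1-x(F_j))\Bigr),
\]
where $\mathrm{ch}(v)$ denotes the set of child labels of $v$.

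Next I redistribute the factors so that each vertex collects its own $x(F_v)$. Every non-root vertex $u$ contributes $x(F_u)$ exactly once, through its parent. To obtain the product over all $v\in\tau_i$ of $x(F_v)$, I multiply and divide by $x(F_i)$ for the root. For the $(1-x)$ factors, I write each vertex's term as $\prod_{F_j\in\Gamma(F_v)\setminus\{F_v\}}(1-x(F_j))$ divided by $\prod_{F_j\in \mathrm{ch}(v)}(1-x(F_j))$. Two boundary issues must be handled: the self-term $F_v\in\Gamma(F_v)$, and the fact that children are excluded from the ``not drawn'' set. The denominators $1-x(F_u)$ attached to children $u$ can be paired with the self-factor $1-x(F_u)$ that each non-root vertex's own $\Gamma$ would contribute if we included $F_u\in\Gamma(F_u)$. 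After this telescoping, the only unmatched factor is $(1-x(F_i))$ at the root. Together with the division by $x(F_i)$, this leaves the prefactor $\frac{1-x(F_i)}{x(F_i)}$ multiplying $\prod_{v}\gamma(F_v)$. Taking the product over components gives the claimed formula.

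The main obstacle is this bookkeeping around $F_v\in\Gamma(F_v)$. A node may have a child with its own label, since a resampling can reintroduce itself, so we must check whether the process draws a self-child with probability $x(F_v)$ and how $\gamma$ excludes $F_v$. I would fix the convention that the coin for $F_v$ itself appears in the process, and then verify the exponent of $(1-x(F_v))$ at each vertex. If the accounting gives an inequality rather than an equality, that suffices for the union bound. In that case I would state the claim with ``$\ge$'' or adjust the root factor, which still combines with \cref{clm:mt-witness-prob} and $\Pr[F_v]\le(1-\eps)\gamma(F_v)$ to bound $\sum_\tau \Pr[\tau]$.
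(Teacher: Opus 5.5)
Your argument is correct and is essentially the paper's: the paper computes the same product of independent per-vertex coin outcomes, but organizes it as an induction on tree size, attaching one leaf $u$ at a time and multiplying by $\frac{x(F_u)}{1-x(F_u)}\prod_{j\in\Gamma(F_u)}(1-x(F_j))=\gamma(F_u)$, whereas you write out the whole product and telescope it directly as in Moser--Tardos. Both versions rely on the convention $F_v\in\Gamma(F_v)$ (in the variable setting $\Gamma_i$ contains $i$, which is why \cref{cond:asymmetric} uses $\Gamma_i\setminus\{i\}$), and with it your pairing of each child's $1-x(F_u)$ with that child's own self-factor is exact, so you do not need the fallback to an inequality.
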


\begin{proof}
    We prove the statement by induction. We prove that for a component of $\tau$ with root $F_i$, the probability that the component $\tau_i$ with root $F_i$ is generated is given by $\frac{1 - x(F_i)}{x(F_i)} \prod\limits_{v \in \tau_i} \gamma(F_v)$. The claim then follows as the forest generation is a product distribution over each connected component.

    For the base case with a single root, the expression evaluates to $\prod\limits_{j \in \Gamma(F_i)} (1 - x(F_j))$ which is exactly the probability that the process dies out due to no child being added.

    Suppose the claim is true for all trees with size $n$. Consider a tree of size $n + 1$ that is obtained by attaching a leaf vertex $u$ to some $n$ vertex tree $\tau'$. We have,
    \begin{align*}
        \Pr[\tau \text{ is generated }] &= \Pr[\tau' \text{ is generated }] \cdot \frac{x(F_u)}{1 - x(F_u)} \cdot \prod\limits_{j \in \Gamma(F_u)} (1 - x(F_j)) \\
        &= \Pr[\tau' \text{ is generated }] \cdot x(F_u) \cdot \prod\limits_{j \in \Gamma(F_u) - F_u} (1 -x(F_j)) \\
        &= \Pr[\tau' \text{ is generated }] \cdot \gamma(F_u)
    \end{align*}
    Plugging the induction hypothesis for the probability of generating $\tau'$ in the previous step completes the induction.
\end{proof}

\begin{claim}
\label{clm:witness-generation-union-bound}
    Let $\cF$ denote the set of all proper witness forests. We have,
    \begin{align*}
        \sum\limits_{\tau \in \cF} \prod\limits_{v \in \tau} \gamma(F_v) \leq  \prod\limits_{i =1}^q \frac{1}{1 - x_i(F_i)}
    \end{align*}
\end{claim}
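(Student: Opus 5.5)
The plan is to combine \cref{clm:mt-witness-generation-prob} with the fact that the random process is a probability distribution. Distinct proper witness forests correspond to disjoint outcomes of the Galton--Watson type forest generation process: the process outputs a finite forest, and each finite forest is one event. So the probabilities of the finite proper forests sum to at most $1$:
\begin{align*}
\sum_{\tau \in \cF} \Pr[\tau \text{ is generated}] \leq 1 .
\end{align*}
The mass missing from $1$ is the probability that the process runs forever, together with the outcomes that are not proper.

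Next we substitute the closed form from \cref{clm:mt-witness-generation-prob}:
\begin{align*}
\sum_{\tau \in \cF} \rb{\prod_{i=1}^q \frac{1 - x_i(F_i)}{x_i(F_i)}} \prod_{v \in \tau} \gamma(F_v) \leq 1 .
\end{align*}
Here the roots are fixed to the added flaws $F_1,\dots,F_q$, with parameters taken from the instance $\bF_i$ at the time of the $i$-th update. The root factor is the same for every forest with the same root set, so we can pull it out. Then we rearrange.

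The main obstacle is the root factor. Rearranging as above gives $\prod_i \frac{x_i(F_i)}{1-x_i(F_i)}$, whereas the claim states $\prod_i \frac{1}{1-x_i(F_i)}$. Two issues must be handled.

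\emph{Components with a root.} I would treat each component separately, since the process is a product over components and the sum factorizes. For component $i$, the generation probability of a tree $\tau_i$ is $\frac{1-x_i(F_i)}{x_i(F_i)}\prod_{v\in\tau_i}\gamma(F_v)$. This already contains the root's own factor $\gamma(F_i)$, which includes $x_i(F_i)$. Summing over trees gives $\sum_{\tau_i}\prod_{v}\gamma(F_v)\le \frac{x_i(F_i)}{1-x_i(F_i)}$, and $\frac{x_i(F_i)}{1-x_i(F_i)} \le \frac{1}{1-x_i(F_i)}$ since $x_i(F_i)\le 1$.

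\emph{Empty components.} A component $\tau_i$ may also be empty, for deletions or when the added flaw was not present, and then it contributes $1$. Adding the empty option gives a per-component bound of $1+\frac{x_i(F_i)}{1-x_i(F_i)}=\frac{1}{1-x_i(F_i)}$. This matches the claimed bound exactly.

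Finally, we take the product over $i=1,\dots,q$. Since the components are generated independently and the sum over forests is a sum over tuples of trees, the sum factorizes into the per-component bounds, which yields the claim. Throughout I need to check that the labels of children in a proper witness are distinct and lie in $\Gamma(F_v)$, so that each proper tree is a possible outcome of the process.
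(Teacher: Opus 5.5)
Your proposal is correct and matches the paper's proof. The paper fixes the set $S\subseteq[q]$ of non-empty components and uses that the generation process outputs at most one forest to get $\sum_{\tau\in\cF(S)}\prod_{v\in\tau}\gamma(F_v)\le\prod_{i\in S}\frac{x_i(F_i)}{1-x_i(F_i)}$; summing over $S$ is exactly your per-component factor $1+\frac{x_i(F_i)}{1-x_i(F_i)}=\frac{1}{1-x_i(F_i)}$ multiplied out, so your explicit handling of empty components plays the same role as the paper's sum over $S$.
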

\begin{proof}
    Let $S \subseteq [q]$ be an arbitrary subset of $[q]$. Let $\cF(S)$ denote the set of all possible witnesses that have non-empty trees only in the entries of $S$. Consider the witness generation process that starts with $S$. Since the random process generates at most one proper witness forest, we have,
    \begin{align*}
    \sum\limits_{\tau \in \cF} \Pr[\tau \text{ is generated }] \leq 1 \\
    \Rightarrow \prod\limits_{i \in S} \frac{1 - x(F_i)}{x(F_i)} \cdot \sum\limits_{\tau \in \cF(S)} \prod\limits_{v\in\tau} \gamma(F_v) \leq 1
    \end{align*}
    Rearranging the last inequality and summing over all possible sets $S$ gives,
    \begin{align*}
        \sum\limits_{S \subseteq [q]} \sum\limits_{\tau \in \cF(S)} \prod\limits_{v \in \tau} \gamma(F_v) \leq \sum\limits_{S \subseteq [q]} \prod\limits_{i \in S} \frac{x_i(F_i)}{1 - x_i(F_i)} = \prod\limits_{i \in [q]} \frac{1}{1 - x_i(F_i)}
    \end{align*}
\end{proof}

\begin{theorem}
    For a fixed update sequence $F_1, F_2, \dots F_q$ with \LLL parameters satisfying \cref{cond:asymmetric}, the Moser-Tardos resampling algorithm,\cref{alg:moser-tardos} terminates within $s$ resamples with probability at least $1 - \delta$ where,
    \begin{align*}
        \eps s = O\left(\log \delta^{-1}  + \sum\limits_{i=1}^q \log (\frac{1}{1 - x_i(F_i)}) \right)
    \end{align*}
\end{theorem}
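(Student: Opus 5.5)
We combine the three claims just established with a first-moment bound over witness forests of a given size. Fix the update sequence and run \cref{alg:moser-tardos} with a uniformly random resampling table $\cR$. Every execution making at least $s$ resamplings yields, via the forward construction of \cref{sec:mt-witness-defn}, a witness forest $\tau$ with exactly as many vertices as resamplings. Since each tree is rooted at the added flaw and every later vertex hangs under the resampling that introduced it, $\tau$ has at least $s$ vertices. We first check that $\tau$ is proper (\cref{def:proper-witness}). Two children of the same vertex that share variables cannot both have been introduced by the same resampling and both still be needed. More precisely, the first of them to be resampled rerandomizes the shared variables, so the second's flaw status is no longer the one created by the parent; we attribute it instead to the later resampling, which keeps the witness proper. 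If needed, we adjust the parent rule to ``the latest resampling whose variables intersect $\vbl(F_{i,j})$'', which makes properness immediate and preserves the BFS ordering argument of \cref{clm:mt-witness-prob}.

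Next we apply the union bound. The probability of at least $s$ resamplings is at most the sum, over proper forests $\tau$ with $|\tau|\ge s$, of $\Pr[\tau \text{ consistent with } \cR]$. By \cref{clm:mt-witness-prob} this is at most $\prod_{v\in\tau}\Pr[F_v]$. By \cref{cond:asymmetric}, applied to the parameters $x_i(\cdot)$ of the update during which $v$ was resampled (as recorded in \cref{clm:mt-witness-generation-prob}), we have $\Pr[F_v]\le (1-\eps)\gamma(F_v)$. Hence
\[
\Pr[\#\text{resamplings}\ge s]\le (1-\eps)^s\sum_{\tau\in\cF}\prod_{v\in\tau}\gamma(F_v)\le (1-\eps)^s\prod_{i=1}^q\frac{1}{1-x_i(F_i)},
\]
where the last step is \cref{clm:witness-generation-union-bound}. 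Setting the right-hand side to $\delta$ and using $-\ln(1-\eps)\ge\eps$ gives $\eps s\le \ln\delta^{-1}+\sum_i\ln\frac{1}{1-x_i(F_i)}$, which is the claimed bound.

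The main obstacle is the per-update variation of the LLL parameters. The forest mixes vertices from different updates, while the Galton--Watson coupling in \cref{clm:mt-witness-generation-prob} must use, inside tree $\tau_i$, the dependency sets $\Gamma$ and the values $x_i(\cdot)$ of $\bF_i$. This works because each tree is generated independently with its own parameters, and the consistency check in \cref{clm:mt-witness-prob} uses fresh, disjoint table entries across trees since all trees are processed in update order. I would verify carefully that the BFS indexing into $\cR$ remains valid across component boundaries. A variable's counter must include the resamplings from all earlier trees, and this information is determined by $\tau$ alone, so the check events remain independent with probability $\Pr[F_v]$ each.
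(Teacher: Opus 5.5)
Your main chain is exactly the paper's proof: a union bound over forests with at least $s$ vertices, then \cref{clm:mt-witness-prob}, then $\Pr[F_v]\le(1-\eps)\gamma(F_v)$ with the parameters of the update containing $v$, then \cref{clm:witness-generation-union-bound}, then solving for $s$. The gap is in the properness paragraph, the one place you go beyond the paper.

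\emph{First claim.} Under the paper's parent rule (the resampling that introduced the flaw), dependent siblings do occur. Suppose $A$'s resampling makes $B$ and $C$ violated, with $\vbl(B)\cap\vbl(C)\neq\emptyset$, and $C$ is still violated after $B$ is resampled. Then $B$ did not introduce $C$, since $C$ was already in $U(\sigma)$, so $C$ keeps parent $A$. ``Attributing it to the later resampling'' is a change of definition, not a consequence.

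\emph{Fallback rule.} Your fallback, parent $=$ the latest earlier resampling meeting $\vbl(F_{i,j})$, does give properness. But it breaks what \cref{clm:mt-witness-prob} actually uses: every earlier resampling sharing a variable with $v$ must lie strictly shallower than $v$. Here is a counterexample within one update:
\begin{itemize}
\item a chain $R,a_1,a_2,a_3,w$, each meeting only its predecessor among earlier resamplings, so at depths $0$ to $4$;
\item then $p$, meeting only $R$ (depth $1$);
\item then $v$, meeting $w$ in a variable $x$ and $p$ in a variable $y$.
\end{itemize}
The latest intersecting predecessor of $v$ is $p$, so $v$ gets depth $2$. Yet the value of $x$ that $v$ sees was written by $w$ at depth $4$, which the BFS visits after $v$. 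The check therefore reads the wrong entry of $\cR$, and the event that the execution yields $\tau$ is no longer contained in the event that the check passes. So the assertion that the fallback ``preserves the BFS ordering argument'' is false.

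\emph{A rule that works.} Take parent $=$ an earlier resampling of the same update meeting $\vbl(F_v)$ of \emph{maximum} depth. One exists for every non-root node, because the state was flawless for $\bF_{i-1}$. With this rule:
\begin{itemize}
\item any two resamplings sharing a variable have depths ordered as in time, so nodes of equal depth are pairwise independent, which gives properness;
\item children's labels lie in the parent's $\Gamma$;
\item the BFS order agrees with time on every variable.
\end{itemize}
Hence the table indexing and independence in \cref{clm:mt-witness-prob} hold, and the Galton--Watson bound applies unchanged.

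Two remarks. First, this depth ordering is not automatic for the introduction-based parent either. Let $A$ introduce $B$ and $C$, let $B$ introduce $D$ meeting $C$, and let $D$ be resampled while $C$ stays violated; then the dependent $D$ at depth $2$ precedes $C$ at depth $1$. So the ordering must be checked for whatever parent rule one adopts. Second, strong properness is not needed for \cref{clm:witness-generation-union-bound}: the process of \cref{clm:mt-witness-generation-prob} generates any forest whose children carry distinct labels from $\Gamma$ with the stated probability.
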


\begin{proof}
    We union bound \cref{clm:mt-witness-prob} over all witness forests using the bound in \cref{clm:mt-witness-generation-prob}.

    Let $\cF$ denote all proper witness forests. Let $\cF_s$ denote those witnesses that have size at least $s$.

    We have,
    \begin{align*}
        \Pr[\cref{alg:moser-tardos} \text{ makes at least } s \text{ resamples }] &\leq \sum\limits_{\tau \in \cF_s} \Pr[\tau \text{ is plausible}] \\
        &= \sum\limits_{\tau \in \cF_s } \prod\limits_{v \in \tau} \Pr[F_v] 
    \end{align*}

    Next, we bound the summand in the above expression.
    \begin{align*}
        \sum\limits_{\tau \in \cF_s } \prod\limits_{v \in \tau} \Pr[F_v] \leq (1-\eps)^{s}  \cdot \sum\limits_{\tau \in \cF_s } \prod\limits_{v \in \tau} \gamma(F_v) \\ 
        \leq (1-\eps)^{s} \cdot \sum\limits_{\tau \in \cF}\prod\limits_{v \in \tau} \gamma(F_v)
    \end{align*}

    Using \cref{clm:witness-generation-union-bound} we get,
    \begin{align*}
    \Pr[\cref{alg:moser-tardos} \text{ makes at least } s \text{ resamples }] &\leq 
        (1 - \eps)^{s} \cdot \prod\limits_{S \subseteq [q]} \cdot \prod\limits_{i \in S}\frac{x_i(F_i)}{1 - x_i(F_i)} \\
        = (1 - \eps)^{s} \prod\limits_{i=1}^q \left (1 + \frac{x_i(F_i)}{1 - x_i(F_i)} \right) = (1 - \eps)^{s} \prod\limits_{i=1}^q \frac{1}{1 - x_i(F_i)} 
    \end{align*}

    Finally we choose $s$ so that the expression above is at most $\delta$. We get,
    $$s \geq \log_{1/1-\eps} \left( \sum\limits_{i=1}^q \log (\frac{1}{1 - x_i(F_i)}) + \log \delta^{-1}\right)$$
\end{proof}
\def\cost{\mathsf{cost}}
\begin{theorem}
\label{thm:dyn-full}
    For a \LLL instance satisfying \cref{cond:asymmetric}, with flaws $F_1, F_2, \cdots F_k$, let its cost be given by,
    \begin{align*}
        \cost(F_1, F_2, \dots F_k) = \sum\limits_{i=1}^k \log \frac{1}{1-x_i(F_i)}
    \end{align*}

    If the number of update sequences is at most $2^m$, with each sequence satisfying \cref{cond:asymmetric} with cost at most $M$, then the probability that \cref{alg:moser-tardos} terminates before making $s$ resamples is at least $1 - \delta$ where,
    \begin{align*}
        \eps s = O\left( q + m + M + \log \delta^{-1}  \right)
    \end{align*}

    In particular, if the dependence degree is $d$ and $x_{\max{}}$ is the maximum value of the LLL parameter $x(F_i)$ over all update sequences and flaws $F_i$, then $M \leq q \log \frac{1}{1 - x_{\max{}}}$. In particular, in the typical symmetric setting where $x_{\max{}} \leq 1/ed$, $M = O(q)$.
\end{theorem}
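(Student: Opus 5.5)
The plan is to derive \cref{thm:dyn-full} from the preceding theorem for a fixed update sequence, plus a union bound over all update sequences the adversary could produce. This also handles a clairvoyant adversary. The resampling table $\cR$ is drawn once, before any update. Any adversary, even one that sees $\cR$, ends up choosing some update sequence among at most $2^m$ candidates. So if the algorithm fails to terminate within $s$ resamples against the adversary, then there is some fixed sequence $U$ such that the run of \cref{alg:moser-tardos} on $U$ with table $\cR$ makes at least $s$ resamples. This lets us bound the failure probability by
\begin{align*}
\Pr_{\cR}\bigl[\exists U:\ \text{run on } U \text{ makes} \geq s \text{ resamples}\bigr] \leq \sum_{U} \Pr_{\cR}\bigl[\text{run on } U \text{ makes} \geq s \text{ resamples}\bigr].
\end{align*}
A subtle point is that the analysis for a fixed $U$ (\cref{clm:mt-witness-prob}) must depend only on $U$ and $\cR$ and not on the adversary. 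This holds because the witness forest and the check procedure are defined from the update sequence and the table alone. Another point is that the ``arbitrary'' flaw selection may depend on the state. This is fine, since the MT witness argument tolerates any selection rule: the BFS check uses only the tree structure, not the selection rule.

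For each fixed $U$, the proof of the preceding theorem gives
\begin{align*}
\Pr[\text{at least } s \text{ resamples}] \leq (1-\eps)^s \prod_{i=1}^q \frac{1}{1-x_i(F_i)} = (1-\eps)^s\, 2^{O(\cost(U))} \leq (1-\eps)^s e^{M}.
\end{align*}
Summing over at most $2^m$ sequences gives a total bound of $2^m e^M (1-\eps)^s \leq \exp(m + M - \eps s)$. Setting this to be at most $\delta$ gives $\eps s = O(m + M + \log\delta^{-1})$. The additive $q$ in the statement is harmless slack; it can also absorb, for example, indexing the $q$ trees by update time or an encoding overhead of the update positions, if one prefers to union bound over the roots as well.

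For the ``in particular'' part, each summand satisfies $\log\frac{1}{1-x_i(F_i)} \leq \log\frac{1}{1-x_{\max}}$, and there are at most $q$ summands, so $M \leq q\log\frac{1}{1-x_{\max}}$. In the symmetric setting with $x_{\max} \leq 1/(ed) \leq 1/e$, we have $\log\frac{1}{1-x} \leq 2x = O(1)$, so $M = O(q)$.

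The main obstacle is the first step: justifying that a clairvoyant adversary is handled by the union bound over at most $2^m$ fixed sequences. This requires the failure event for each fixed sequence to be a deterministic function of $\cR$ alone. I would argue it by fixing the table, simulating the algorithm on every candidate sequence, and noting that the adversary's actual sequence is one of them.
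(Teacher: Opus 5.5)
Your proposal is correct and takes the same route as the paper. You use the fixed-sequence bound $(1-\eps)^s\prod_{i=1}^q \frac{1}{1-x_i(F_i)} \le (1-\eps)^s e^{M}$ from the preceding theorem, then union bound over the at most $2^m$ update sequences with a single shared table $\cR$; this is exactly the argument the paper spells out in the proof of \cref{thm:application}(A2) for a clairvoyant adversary. As you observe, this even gives the bound without the additive $q$, and the ``in particular'' claims follow termwise as you describe.
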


\subsection{Arbitrary initial state}
\label{sec:mt-arbitrary-init}

We describe here the proof for the version when the initial state is arbitrary. The following claim is a variant of \cref{clm:mt-witness-prob}
\begin{claim}
\label{clm:mt-arb-witness-prob}
For (i) a given update sequence $F_1, F_2, \dots F_q$, (ii) initial state $\sigma \in \Omega$, (iii) witness $\tau$ and (iv) a random table $\cR$,
\begin{align*}
    \Pr[\text{witness } \tau \text{ is plausible with } \cR] \leq \left( \prod\limits_{F_i \in \tau} \frac{1}{\Pr[F_i]} \right) \cdot \prod\limits_{v \in \tau} \Pr[F_v]
\end{align*} 
\end{claim}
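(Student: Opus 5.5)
We follow the verification argument of \cref{clm:mt-witness-prob} and change only how the initial state enters. As there, we fix the update sequence, traverse the components $\tau_1,\dots,\tau_q$ in order of their updates, and visit each tree in breadth-first order. For every node $v$ and every variable $x_t\in\vbl(F_v)$, the value of $x_t$ just before the resampling of $F_v$ is determined by the witness and $\cR$. If some earlier-visited node depends on $x_t$, and $k$ such nodes have been visited, the value is the $k$-th entry of $\cR$ for $x_t$. If no earlier node touched $x_t$ in the current component, the value is inherited from the state at the start of the update. Ultimately this traces back either to table entries used by earlier components or to the arbitrary initial value $\sigma(x_t)$.

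The key step is to split the nodes into two classes. A node is \emph{fresh} if every variable of $F_v$, at the moment of its check, carries a value drawn from a table entry not yet inspected by any earlier check. For a fresh node, the event ``$F_v$ is violated'' depends only on unused entries of $\cR$. So it occurs with probability exactly $\Pr[F_v]$, independently of everything checked before, exactly as in \cref{clm:mt-witness-prob}.

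The remaining nodes are those whose check reads a value fixed by $\sigma$, or a value already constrained in an earlier component. These are precisely the roots $F_i$ of the non-empty trees: a flaw that is introduced is the child of a resampling that drew fresh values for the shared variables. For these nodes we give up on their check and bound the probability by $1$. Writing this as $1=\frac{1}{\Pr[F_i]}\cdot\Pr[F_i]$ produces the correction factor $\prod_{F_i\in\tau}\frac{1}{\Pr[F_i]}$ in the statement. Independence across all the other checks still holds, since each table entry is assigned to at most one check. Multiplying the fresh-node probabilities gives the claimed bound.

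The main obstacle is justifying that non-root nodes really are fresh in the arbitrary-initial-state model. A non-root $F_v$ may share some variables with its parent, which were freshly resampled, but also other variables that still hold values from $\sigma$ or from a previous update's flawless state. The plan there is to argue differently: the values that do not come from $\cR$ are fixed constants given $\sigma$, the update sequence and the earlier components. Conditioning on them, the event for $F_v$ is a sub-event determined by the fresh entries, and its probability must be compared with $\Pr[F_v]$. If this comparison fails for general non-root nodes, I would fall back on the looser accounting that charges such nodes to the product factor as well. In the worst case I would restate the claim with the product ranging over all nodes whose variables are not fully refreshed. For the roots the argument above is clean, and those are the factors the statement names.
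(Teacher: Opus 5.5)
Your argument fails where you assert that the nodes whose check cannot be accounted for are ``precisely the roots $F_i$ of the non-empty trees.'' An introduced flaw gets fresh values only on the variables it shares with earlier resamplings. Its other variables may still carry their values from $\sigma$, and conditioned on those values $F_v$ can be violated with probability far above $\Pr[F_v]$. So the comparison you sketch in your last paragraph cannot succeed.

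Concretely, let $A=(x\vee y_1\vee\dots\vee y_{k-1})$ and $B=(\neg x\vee z_1\vee\dots\vee z_{k-1})$, and let $\sigma$ set $x=1$ and all $y_j,z_j=0$. Insert $A$ (nothing is resampled), then $B$. The witness whose second tree is $B$ with the single child $A$ is plausible with probability $1/2$. The root check reads only $\sigma$ and passes surely. The check of $A$ reads a fresh entry for $x$ but the $\sigma$-values $y_j=0$. Your root-only bound gives $2^{k}\cdot 2^{-k}\cdot 2^{-k}=2^{-k}$, so with the product read over roots the claim is false.

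Your other worry is unfounded. A value inherited from an earlier component comes from a table entry that no check has read yet, because each entry is read only by the next resampling touching its variable. Flawlessness at the end of an update is not part of the verified event. The only non-fresh values are those of $\sigma$.

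The missing observation is that the first resampling touching a variable is necessarily the first resampling of the flaw being resampled, since any earlier resampling of that flaw would already have touched the variable. This is the implicit justification of the paper's one-line proof (``the first resamplings of every flaw $F_i$ cannot be accounted for''). Hence the non-fresh nodes carry pairwise distinct labels, at most one per flaw. Dropping exactly their checks gives
\[
\Pr[\tau \text{ is plausible with } \cR] \;\le\; \prod_{v \text{ fresh}} \Pr[F_v] \;\le\; \Bigl(\prod_{F \text{ occurring in } \tau} \frac{1}{\Pr[F]}\Bigr) \prod_{v\in\tau} \Pr[F_v],
\]
using that each factor $1/\Pr[F]$ is at least $1$. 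Every flaw occurring in $\tau$ is one of the inserted flaws $F_1,\dots,F_q$, which is how the bound is used, as $\prod_{i\in[q]}$, in \cref{thm:mt-arb-oblivious}. So the product in the statement has one factor per flaw, not per root. With this accounting the rest of your argument is exactly the paper's, and no restatement of the claim is needed.
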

\begin{proof}
    Note that the only difference to \cref{clm:mt-witness-prob} is the term $\prod\limits_{i \in S} \frac{1}{\Pr[F_i]}$. This is because, the first resamplings of every flaw $F_i$ possibly cannot be accounted for, since some of the variables in $\vbl(F_i)$ may be initialized by an adversary.
\end{proof}

\begin{theorem}[MT-Oblivious-Arbitrary-Start]
\label{thm:mt-arb-oblivious}
    For a given update sequence, the probability that \cref{alg:moser-tardos} terminates within $s$ resamplings is at least $1 - \delta$ where,
    \begin{align*}
        \eps s = O \left(  q + \log |\Omega| + \log \prod\limits_{i = 1}^q \max\limits_{j \leq i}\frac{1}{\gamma_j(F_i)} + \sum\limits_{i=1}^q\log \frac{1}{1 - x_i(F_i)} + \log \delta^{-1}  \right)
    \end{align*}
\end{theorem}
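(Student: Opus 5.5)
The plan is to repeat the argument for the uniform-start theorem with two changes: use \cref{clm:mt-arb-witness-prob} in place of \cref{clm:mt-witness-prob}, and pay separately for the resamplings whose violation check is made against adversarially chosen initial values.

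Fix the update sequence $F_1,\dots,F_q$ and the initial state $\sigma$. Build the forward witness forest $\tau$ of \cref{sec:mt-witness-defn} exactly as before.

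\textbf{Step 1: weight of a witness.} As in the proof of \cref{clm:mt-witness-prob}, traverse $\tau$ component by component in BFS order. For each node $v$, identify the table entries that give the current values of $\vbl(F_v)$ just before its resampling.

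For most nodes every such variable has already been resampled, and the violation check has probability $\Pr[F_v]$, independently across nodes. The exceptions are nodes whose check reads some variable still holding its initial value from $\sigma$. I would bound these in one of two ways.

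\emph{Per-flaw bound.} Each of these checks succeeds with probability at most $1$. For each flaw $F_i$ only its first occurrence can read untouched variables in a way that is not already paid for. This gives the extra factor $\prod_i 1/\Pr[F_i]$ of \cref{clm:mt-arb-witness-prob}, i.e.\ at most $\prod_{i}\max_{j\le i} 1/\gamma_j(F_i)$ once we allow the charges to vary with the update index.

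\emph{Global bound.} Since $\sigma$ is a single state, the probability that the uniformly random initial values would have produced the observed checks is at least $1/|\Omega|$. Comparing with a uniform start therefore loses at most a factor $|\Omega|$ overall.

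I expect to use the per-flaw bound for roots and first occurrences, and to absorb the rest into $\log|\Omega|$.

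\textbf{Step 2: summing over forests.} Write $\Pr[F_v]\le(1-\eps)\gamma(F_v)$ using \cref{cond:asymmetric}. Then apply \cref{clm:mt-witness-generation-prob} and \cref{clm:witness-generation-union-bound} unchanged, so that summing $\prod_v\gamma(F_v)$ over all proper forests gives at most $\prod_i 1/(1-x_i(F_i))$. Multiplying in the correction factor from Step 1 yields
\[
\Pr[\text{at least } s \text{ resamples}]\le (1-\eps)^s\,|\Omega|\prod_{i=1}^q\max_{j\le i}\frac{1}{\gamma_j(F_i)}\prod_{i=1}^q\frac{1}{1-x_i(F_i)}.
\]
There is also an extra $2^{O(q)}$ factor from choosing which of the $q$ roots are nonempty. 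Setting the right-hand side equal to $\delta$, taking logarithms, and using $\log\frac{1}{1-\eps}\ge\eps$ gives the stated bound on $\eps s$.

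\textbf{Main obstacle.} The hard step is Step 1. The difficulty is justifying that the adversarial initial values cost only a bounded multiplicative factor, per flaw or $|\Omega|$ globally, and that they do not break the independence of the remaining checks.

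My first attempt would be a coupling. Replace $\sigma$ by a uniform initial state and condition on the event that it equals $\sigma$, which has probability at least $1/|\Omega|$. Any bad event for the fixed start then has probability at most $|\Omega|$ times its probability under a uniform start. Under a uniform start the previous analysis applies verbatim.

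The per-flaw $1/\gamma$ term is the refinement needed when $|\Omega|$ alone is too weak. I would justify it by dropping, for each flaw, the first check that touches initial values.
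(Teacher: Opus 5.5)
Your conclusion is right. The argument that actually proves it is the coupling in your last paragraph, and that is a different and simpler route than the paper's. Write $A_\sigma$ for the event of at least $s$ resamplings from start $\sigma$. The initial state is independent of the resampling table, so $\Pr[A_\sigma]\le\sum_{\sigma'\in\Omega}\Pr[A_{\sigma'}]=|\Omega|\cdot\Pr_{\mathrm{unif}}[A]\le|\Omega|(1-\eps)^s\prod_{i}\frac{1}{1-x_i(F_i)}$ by the uniform-start theorem. This gives $\eps s=O(\log|\Omega|+\sum_i\log\frac{1}{1-x_i(F_i)}+\log\delta^{-1})$. That implies the statement, because the extra terms $q$ and $\log\prod_i\max_j 1/\gamma_j(F_i)$ are nonnegative. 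The same identity also bounds $\Pr[\exists\sigma: A_\sigma]$, so it covers even an initial state chosen after seeing the table.

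The paper argues at the witness level instead. \cref{clm:mt-arb-witness-prob} drops the violation check at the first resampling of each flaw, since only those checks can read initial values. The Galton--Watson sum is then applied as before, and finally the paper union-bounds over all $|\Omega|$ initial states. That is why $\log|\Omega|$, $q$ and the $1/\gamma$ term all appear in its bound. The two corrections are alternatives, not complements. The witness-level one is what would let a fixed start avoid $\log|\Omega|$; your coupling avoids the other two terms. So your ``hybrid'' plan is unnecessary: once first-occurrence checks are dropped, no ``rest'' is left to absorb into $\log|\Omega|$.

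If you keep the per-flaw route, one step is backwards. You say $\prod_i 1/\Pr[F_i]$ is at most $\prod_i\max_{j\le i}1/\gamma_j(F_i)$. In Step~2, $\gamma(F)=x(F)\prod_{F'\in\Gamma(F)-F}(1-x(F'))$, and \cref{cond:asymmetric} gives $\Pr[F_i]\le(1-\eps)\gamma(F_i)$. Hence $1/\Pr[F_i]\ge 1/\gamma(F_i)$, possibly by a large factor. (If you meant the charge, which equals $\Pr[F_i]$ by \cref{lem:prod}, the step is an identity, but then one symbol denotes two different quantities in Steps~1 and~2.)

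The correct order is as follows. Cancel each $1/\Pr[F_i]$ against the factor $\Pr[F_v]$ of the first occurrence of $F_i$. Bound only the remaining, at least $s-q$, factors by $(1-\eps)\gamma(F_v)$, and multiply and divide by $\gamma$ at the first occurrences. This gives $(1-\eps)^{s-q}\prod_i\frac{1}{\gamma(F_i)}\prod_{v}\gamma(F_v)$, and the factor $(1-\eps)^{-q}$ is the real source of the additive $q$. The $2^{O(q)}$ factor you attribute to choosing the nonempty roots is spurious, since \cref{clm:witness-generation-union-bound} already sums over all such subsets.
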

\begin{proof}
    The witness generation procedure is the same. Using $\Pr[F_v] \leq (1 - \eps) \gamma(F_v)$ in \cref{clm:mt-arb-witness-prob},
    \begin{align*}
        \Pr[\text{witness } \tau \text{ is plausible with } \cR] \leq (1 - \eps)^s \prod\limits_{i \in [q]} \frac{1}{\gamma(F_i)} \prod\limits_{v \in \tau} \gamma(F_v)
    \end{align*}
    We next union bound the above over all possible witnesses $\tau$ and all possible initial assignments in $\Omega$. We get,
    \begin{align*}
        & \Pr[\cref{alg:moser-tardos} \text{ makes } s \text{ resamplings for arbitrary initialization}]  \\
        \leq & (1 - \eps)^{s - q} \cdot |\Omega| \cdot \prod\limits_{i \in [q]} \max\limits_{j \in [q]} \frac{1}{\gamma_j(F_i)} \cdot \prod\limits_{i \in [q]} \frac{1}{1 - x_i(F_i)}
    \end{align*}

    Choosing $s$ so that the above expression is at most $\delta$ we get,
    \begin{align*}
        \eps s = O\left(q +  \log |\Omega| + \log \max\limits_{\substack{(i,j) \in [q] \\ j \geq i}} \prod\limits_{i = 1}^q\frac{1}{\gamma(F_i)} + \sum\limits_{i=1}^q\log \frac{1}{1 - x_i(F_i)} + \log \delta^{-1} \right)
    \end{align*}
\end{proof}

\section{AIS framework -- proof of convergence}
\label{sec:gen-framework-convergence}

In this section, we show that \cref{alg:ais} terminates rapidly, assuming \cref{cond:general} is satisfied after each update. Note that the parameters may change between successive flaw updates.
 
% \begin{condition}[\cite{achlioptas2019beyond}]
% \label{cond:start}
%     There exists reals $\psi_1, \psi_2, \dots \psi_m$ such that for every flaw $i \in [m]$,

%     \begin{align*}
%         \frac{1}{\psi_i} \cdot \sum\limits_{S \subseteq [m]} \gamma^S_i \Psi(S) < 1 - \eps
%     \end{align*}

%     where $\Psi(S) = \prod\limits_{i \in S} \psi_i$.
% \end{condition}

\begin{restatable}[Oblivious-AIS]{theorem}{goldbach}
\label{thm:gen-dynLLL}
    For a fixed update sequence $F_1, F_2, \dots F_q$, \cref{alg:ais} makes at most $s$ resamples with probability at least $1 - \delta$ where,

    \begin{align*}
        \eps s = O\left(\log \max\limits_{\sigma \in \Omega}\frac{\theta(\sigma)}{\mu(\sigma)} + \sum\limits_{i=1}^q \log (1 + \psi_i(F_{i})) + \max\limits_{S \subseteq [q], i\in[q]} \frac{1}{\Psi_i(S)} + \log \delta^{-1}\right) 
    \end{align*}
    and $\Psi_i(S), \psi(F)$ for a subset $S \subseteq [q]$ denotes the \LLL parameters for set $S$ after the $j^{th}$ update, and in particular $\psi_j(F_j)$ is the \LLL parameter for the flaw $F_j$ during the $j^{th}$ update at which point the set of candidate flaws are $\{F_1, F_2, \dots F_j\}$.
\end{restatable}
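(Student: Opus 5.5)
We follow the AIS forward-witness method of \cite{achlioptas2019beyond}, adapted to the dynamic run as we did for the MT setting above. Fix the update sequence $F_1,\dots,F_q$; since $\FSelect$ depends only on $\bF$ and $t$, the flaws chosen form a deterministic function of the trajectory. For a run with $s$ resamplings we build a \emph{witness forest}. For each insertion $i$ whose flaw is present when it arrives, the $i$-th tree has root $F_i$. Every resampling at time $t$ within phase $i$ is attached as a child of the earlier resampling that introduced its flaw, in the sense of \cref{def:Incoming}. Because the state is flawless before every $\FInsert$, any flaw resampled in phase $i$ is either $F_i$ itself or was introduced by an earlier resampling of the same phase. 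This is exactly why the witness is a forest with at most one root per update, rather than a forest rooted at the arbitrary set of initially present flaws. We record, for each node, the set $S$ of flaws it introduced that are later resampled (its children). The witness is then the sequence of these labelled trees, ordered by the BFS/time order the deterministic selection rule induces.

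The first key step is to bound the probability that a given witness $W$ occurs. We use the standard AIS trajectory-sum argument. Summing over all trajectories consistent with $W$, one sums backward from the final state: at each step, $\sum_{\sigma\in\InSet_i^S(\tau)}\mu(\sigma)\rho_i(\sigma,\tau)\le\gamma_i^S\mu(\tau)$. Telescoping gives
\[
\Pr[W]\le \max_{\sigma}\frac{\theta(\sigma)}{\mu(\sigma)}\prod_{v\in W}\gamma_{j(v)}^{S_v},
\]
where $j(v)$ is the update during which $v$ was resampled, so the charges are those of $\bF_{j(v)}$. Flaw updates between phases do not change the state, so the telescoping passes through them unchanged. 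The main subtlety, and the step I expect to be the obstacle, is that the recorded set $S_v$ must be a \emph{subset} of the flaws actually introduced. Flaws may be introduced but collaterally fixed, or a phase may end, and charges must only be bounded by monotonicity $\gamma^{S}\ge\gamma^{S'}$. To make the backward sum well defined we also need the rule that the last state of each phase is flawless, and we must handle the fact that $\FSelect$ indexes by $t$. I would first try to show that, given $W$, the flaw selected at every step is determined by $W$ and the update sequence alone. This holds because selection depends only on $(\bF,t)$ and on which flaws are present, and $W$ together with the flawless-at-phase-end property determines the present set as far as selection is concerned.

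Next, we sum over witnesses. With the $\eps$-slack of \cref{cond:general}, a witness with $s$ nodes satisfies $\prod_v\gamma^{S_v}\le(1-\eps)^s\prod_v\psi_v\big/\prod_v\Psi(S_v)$, and summing over children sets is controlled by \cref{eq:softprelim}. By the usual recursion on trees, the total weight of all trees rooted at $F_i$ is at most $\psi_i(F_i)$ times a correction $1/\Psi_i(S)$ for the leftover unmatched factors. Summing over which updates have nonempty trees gives $\prod_i(1+\psi_i(F_i))$, while the correction contributes the term $\max_{S,i}1/\Psi_i(S)$. Here we invoke \cref{obs:subset-convergence} so that each $\bF_j$ satisfies the condition with its own parameters. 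Hence
\[
\Pr[\ge s\text{ resamples}]\le(1-\eps)^s\max_\sigma\frac{\theta(\sigma)}{\mu(\sigma)}\prod_{i}(1+\psi_i(F_i))\cdot\max_{S,i}\frac{1}{\Psi_i(S)}.
\]
Setting the right-hand side to $\delta$ and taking logarithms, using $\log\frac{1}{1-\eps}\ge\eps$, yields the stated bound on $\eps s$.
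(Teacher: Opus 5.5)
Your proposal is correct and follows essentially the paper's route: the same one-tree-per-insertion breakage forest, the same trajectory bound $\max_\sigma \frac{\theta(\sigma)}{\mu(\sigma)}\prod_v \gamma^{S_v}_v$, and the same telescoping of $\psi$-factors. That telescoping leaves $\prod_{\text{roots}}\psi / \Psi(\text{pending})$, which, summed over root sets, gives $\prod_i (1+\psi_i(F_i))$. The only difference is packaging: the paper replaces your ``recursion on trees'' with a coupling to a random forest-generation process whose length-$t$ prefix probabilities sum to at most $1$, which handles both the truncated last tree and the priority-determined resampling order at once.
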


\begin{remark}
If $\mu$ is uniform, the starting state is deterministic, and all $\psi$ parameters are bounded in $[\psi_{\min}, \psi_{\max}]$, then the number of resamples $s$ is $O\left( \frac{1}{\eps} \cdot (\log|\Omega| + q \log(\frac{1 + \psi_{\max}}{\psi_{\min}}) + \log \delta^{-1}) \right)$.

In most LLL settings, and in particular for all LLL instances satisfying \cref{cond:asymmetric} with a constant $\eps$, the $\log(\frac{1 + \psi_{max}}{\psi_{\min}})$ term is upper bounded by $O(\log q)$.
To see this, let's say \cref{cond:asymmetric} is satisfied with parameters $\eps, x_1, x_2, \dots, x_q$.
To convert it to \cref{cond:general}, one sets $\psi_i = x_i / (1 - x_i)$.
If all $x_i$ are in $[q^{-10}, 1 - q^{-10}]$, then $\log(\frac{1 + \psi_{max}}{\psi_{\min}}) = O(\log q)$.
Otherwise, we set $x_i \gets x_i + q^{-10}$ for $x_i < q^{-10}$ and set $x_i \gets x_i - q^{-10}$ for $x_i > 1 - q^{-10}$.
It can be verified that the resulting parameters satisfy \cref{cond:asymmetric} with slack $\eps - q^{-8} \geq \eps / 2$.
\end{remark}

\begin{restatable}[Adaptive-AIS]{theorem}{adaptiveais}
\label{thm:gen-dynLLL2}
    If every possible update can be expressed using identifiers in $[k]$ to identify the flaws and their LLL parameters, and $\psi$ is an upper bound on the value of the parameter for each flaw after each update, then the probability that the resampling algorithm (\cref{alg:ais}) makes more than $s$ resamples is given by,
    \begin{align*}
        \eps s = O\left( q \log (k \psi) + q \log \psi^{-1} + \log \delta^{-1} \right)
    \end{align*}
\end{restatable}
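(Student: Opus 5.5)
The plan is to reduce the adaptive (indeed clairvoyant) setting to the oblivious one, \cref{thm:gen-dynLLL}, by a union bound over all update sequences. The key observation is that \cref{alg:ais} is a deterministic function of the random string $\cR$ used by \FAddress together with the update sequence: \FSelect depends only on $\bF$ and $t$, never on $\cR$. Hence the event ``the algorithm makes more than $s$ resamplings against the adaptive adversary'' is contained in the union, over all fixed update sequences $U$ of length $q$, of the events ``the algorithm run on the fixed sequence $U$ with table $\cR$ makes more than $s$ resamplings.'' An adaptive or clairvoyant adversary simply picks, as a function of $\cR$, one of these sequences. This holds even when the adversary sees $\cR$ in advance, because each sequence's bad event is a fixed event over $\cR$.

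First I count the sequences. Each update is an add or delete of a flaw whose identifier and LLL parameters are encoded in $[k]$, so there are at most $(2k)^q$ update sequences. Next I apply \cref{thm:gen-dynLLL} to each fixed $U$ with failure probability $\delta' = \delta (2k)^{-q}$. The union bound then gives total failure probability at most $\delta$. The required $s$ satisfies
\[
\eps s = O\Big(\log \max_\sigma \tfrac{\theta(\sigma)}{\mu(\sigma)} + \sum_{i=1}^q \log(1+\psi_i(F_i)) + \log \max_{S,i} \tfrac{1}{\Psi_i(S)} + q\log(2k) + \log\delta^{-1}\Big).
\]

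It remains to bound each term using only $\psi$. We have $\sum_i \log(1+\psi_i(F_i)) \le q\log(1+\psi)$. The quantity $\max_S 1/\Psi(S)$ should be read with $\psi$ also serving as the scale for the smallest parameter; I would normalise, as in the remark after \cref{thm:gen-dynLLL}, by clamping all parameters into a range $[\psi^{-1},\psi]$ at the cost of halving $\eps$. Then any $S$ of size at most $q$ gives $\log(1/\Psi(S)) \le q\log \psi^{-1}$, understood as $q\log\max(\psi,\psi^{-1})$. The initial-distribution term I would absorb by taking $\theta=\mu$, or by charging the first resampling of each root to the $q\log\psi^{-1}$ budget, exactly as in \cref{clm:mt-arb-witness-prob}. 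Combining the terms gives $q\log(k\psi)+q\log\psi^{-1}+\log\delta^{-1}$, since $q\log(2k)+q\log(1+\psi) = O(q\log(k\psi))$ when $\psi$ is bounded below by a constant (otherwise those terms are dominated by the $\psi^{-1}$ term).

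The main obstacle I expect is making the reduction rigorous. The witness and weight argument behind \cref{thm:gen-dynLLL} must hold for every fixed sequence simultaneously over the same $\cR$; this requires that the random bits consumed by \FAddress are indexed canonically, by resampling count per flaw, independent of the sequence. I also need to confirm that the adversary's choice influences only which event we land in, and not the distribution of $\cR$. A further point to check is the handling of $\max 1/\Psi$ and $\theta/\mu$, which are hidden in the theorem's statement, so that the clamping of parameters does not break \cref{cond:general}.
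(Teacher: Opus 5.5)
Your proposal is correct and is essentially the paper's argument. It is a union bound over the at most $k^q$ (your $(2k)^q$) possible update sequences, placed on top of the oblivious witness analysis. Your fixed-table justification against a clairvoyant adversary is the same one the paper gives for \cref{thm:application}(A2). The only difference is bookkeeping. The paper sums the witness weights from the proof of \cref{thm:gen-dynLLL} over all sequences, weighting each by its root parameters ($\sum_{(F_1,\dots,F_q)}\prod_{i}\psi_{(F_1,\dots,F_i)}(F_i)$), whereas you invoke that theorem black-box with $\delta(2k)^{-q}$. Since $\log k=\log(k\psi)+\log\psi^{-1}$, your $q\log(2k)$ cost is already covered by the two $q$-terms of the statement.

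Two of your auxiliary normalizations do not work as written and should be dropped. First, clamping the parameters into $[\psi^{-1},\psi]$ is not a harmless renormalization of \cref{cond:general}. Raising parameters to a floor $f$ can multiply another flaw's left-hand side by as much as $(1+f)^{|\Gamma_i|}$. So an admissible floor is dictated by $|\Gamma_i|$ and $\eps$, not by $\psi$, and the interval is empty anyway when $\psi<1$. Second, charging the first resampling of each root as in \cref{clm:mt-arb-witness-prob} does not remove the initial-distribution term: \cref{thm:mt-arb-oblivious} still carries $\log|\Omega|$.

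The paper's proof simply plugs in. It implicitly treats $\psi$ as the common scale of all parameters, so that $\max_S 1/\Psi(S)\le\max(1,\psi^{-q})$ because $|S|\le q$. It also silently drops the $\log\max_\sigma\theta(\sigma)/\mu(\sigma)$ term, i.e.\ it takes $\theta=\mu$. Adopt those readings of the hypothesis instead of the detours.
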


Observe that \cref{thm:gen-dynLLL} does not depend on flaw removal updates. This is because, during these updates, the state of the algorithm does not change, and the algorithm does not make any new resamples. Additionally, the algorithm is always in a flawless state in between queries.

The rest of the section will be the proof of \cref{thm:gen-dynLLL,thm:gen-dynLLL2}.
We follow the general framework for proving convergence of \LLL resampling algorithms. As is standard in such arguments, we assume that the random bits used by the $\mathtt{Address}(F)$ are obtained by a sufficiently large table $\mathcal{R}$ of random bits whose rows correspond to the values of random variables used by the resampling procedure. The $j^{th}$ column of the table will be used during the $j^{th}$ call of the Address function.

\subsection{Constructing the witness}

First, we define a witness $T$ for every execution of \cref{alg:ais}. The witness will be constructed in such a way that the resampling sequence can be generated completely.  In particular, for every execution of \cref{alg:ais} that makes at least $s$ resamples, given only (i) the witness corresponding to this execution and (ii) the random table $\mathcal{R}$, one can determine for each flaw $F$, the number of times $\mathtt{Address}(F)$ was called and which bits in $\mathcal{R}$ was used by these calls. We say that a table $\mathcal{R}$ is \textit{consistent} with a witness if it is plausible that upon executing \cref{alg:ais} using $\mathcal{R}$, the witness was generated. We say that a witness has weight $p$, if the probability that for a fixed table $\mathcal{R}$, $w$ is consistent with $\mathcal{R}$ with probability at most $p$.
Subsequently, the probability that the algorithm makes $s$ resamples is bounded by the sum of weights all witnesses. The challenge is to design the witnesses and enumerating / bounding the total weight over all possible witnesses.

\paragraph{Breakage Forests.} For a sequence of $s$ resamplings, $F_1, F_2, \dots $ where $F_i$ is the flaw that is fixed during the $i^{th}$ resampling, we define a forest $F$ of ordered trees whose vertices are labelled with the IDs of these flaws. The $i^{th}$ connected component of the forest denotes the resamplings made by \cref{alg:ais} after the $i^{th}$ update to reach a flawless state for this query. Each component tree of the forest is called a \textit{breakage} tree.

The roots of the breakage forest are the flaw updates. Consider the $j^{th}$ breakage tree with root $F_{j,1}$. Let the sequence of resamplings corresponding to this update be $F_{j,1}, F_{j,2}, \dots$. Suppose flaw $F_{j, k}$ for $k > 1$ was introduced by the resampling of flaw $F_{j, i}$ for some $i < k$. Note that there must be a unique such flaw under our definitions of flaw introductions.  We attach $F_{j, k}$ as a child of $F_{j, i}$.

Upon constructing the tree, observe that for a vertex $F_{j, i}$ in the breakage tree, the set of its children $C_i = \{F_{j, k} \mid (F_{j, k}, F_{j, i}) \in T_j\}$ correspond to flaws introduced by the resampling of flaw $F_{j, i}$ that were \textit{not} collaterally fixed. We denote the witness forest by $\phi$. For every node $v \in \phi$, let $\psi(v)$ denote the \LLL parameters corresponding to the node $v$.
The resampling sequence can be obtained by a certain graph traversal of $\phi$. We describe the graph traversal process, and we shall also associate with each resampling $v \in \phi$ a set $S_v$ of flaws that represent the charges for this resampling.

The graph traversal visits each tree in the forest in order. To visit a component, the root is first visited (say $v_1$). We define $S_{v_1} = C_{v_1}$, i.e., the set of flaws corresponding to the children of $v_1$. To compute the next vertex to visit, we take $S_{v_1}$ and compute the next flaw that is fixed using the $\FSelect{}$ procedure. Typically, this is the highest priority flaw in $S_{v_1}$ where the priorities are computed via a permutation $\pi$ of the flaws. The node corresponding to this flaw is $v_2$, the next vertex to be visited. We compute $S_{v_2} \gets (S_{v_1} - F_2) \cup C_{v_2}$. We repeat the graph traversal until all the vertices are visited. Note that every vertex $v \in \phi$ receives a (possibly empty) set $S_v$.

\begin{claim}
    Given a witness forest $\tau$, the resampling sequence can be reconstructed.
\end{claim}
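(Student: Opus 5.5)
We reconstruct the resampling sequence by replaying the graph traversal defined above, and we argue by induction that the replay matches the actual execution of \cref{alg:ais}. The witness forest $\tau$ is an ordered forest: its $j$-th component is the breakage tree of the $j$-th update, and every vertex carries the ID of the flaw it resamples. We process the components in order $j = 1, 2, \dots, q$. An empty component means the $j$-th update caused no resamplings. A non-empty component has its root $v_1$ labelled by the flaw added at update $j$, and this is necessarily the first flaw resampled in that update, since between updates the state is flawless.

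\textbf{Inductive invariant.} Within component $j$, suppose the first $r$ resamplings $v_1, \dots, v_r$ have already been identified and the global counter $t$ is known. Let $S_{v_r}$ be the set of flaws introduced so far that are not yet addressed. The claim is that this set equals the set of currently present flaws.

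\textbf{Justification of the invariant.} By the definition of introduction, resampling $F_i$ at state $\sigma$ changes the present-flaw set from $U$ to $(U \setminus \{F_i\}) \cup C$, where $C$ is the set of introduced flaws. A flaw that is introduced but later collaterally fixed was never resampled in this update. Therefore every vertex of the tree is a flaw that was introduced and remained present until the moment it was addressed.

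The one subtlety is an introduced flaw that is collaterally fixed and never resampled. Such a flaw has no node in the tree, because children are defined as the introduced flaws that were not collaterally fixed. Hence $S_{v_r}$, computed from tree children as $(S_{v_{r-1}} \setminus \{F_r\}) \cup C_{v_r}$, consists of exactly the flaws that are present and will eventually be resampled. I expect the main obstacle to be reconciling this set with the actual present-flaw set. I would handle it by arguing that $\FSelect$ chooses among present flaws by a rule depending only on $\bF$ and $t$. In particular, under the priority-queue rule of \cref{alg:select}, collaterally fixed flaws are simply skipped. So the flaw selected next from the true present set equals the flaw selected from $S_{v_r}$, namely the highest-ranked child-derived flaw not yet addressed, which is the label of $v_{r+1}$.

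\textbf{Conclusion.} Since the selection rule is independent of $\cR$, this determines $v_{r+1}$, and we increment $t$. At each step we also learn which column of $\cR$ the corresponding call to $\FAddress$ consumed, namely the column indexed by $t$. The traversal of component $j$ ends when $S_v = \emptyset$, which corresponds to $\FSelect$ returning $\perp$. We then continue with component $j+1$ and the updated counter $t$. Concatenating the traversals over all components yields the full sequence of resamplings together with their positions in $\cR$, which proves the claim.
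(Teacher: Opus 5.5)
Your proposal is correct and follows the paper's own argument: process the components in order, note that the next resampled flaw must be an unvisited child of an already-visited node (i.e., it lies in $S_{v_r}$), and use the permutation-based selection rule to identify it as the top-ranked element of $S_{v_r}$. One correction is needed. Your stated invariant that $S_{v_r}$ equals the present-flaw set is false, since present flaws that are later collaterally fixed are missing from it; only the inclusion of $S_{v_r}$ in the present set holds. The rule's choice agrees on $S_{v_r}$ because the top-ranked present flaw is the one actually resampled and hence lies in $S_{v_r}$, not because collaterally fixed flaws are skipped.
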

\begin{proof}
    Assume that we can determine the resampling sequence up to $j$ updates for some $j \geq 0$. The next resampling is the root of the next component of the forest. Suppose we know a prefix of the resampling sequence $F_{j,1}, F_{j,2} \cdots F_{j, i}$ for some $i$. The unvisited boundary of these vertices in the tree contain the set of known flaws at this stage of the algorithm. By construction, the next flaw resampled is not collaterally fixed and must be in this boundary. Moreover, the deterministic rule must select this flaw since we consider permutation based rankings. Hence, the flaw selection rule with the currently known set of flaws uniquely determines the next flaw that is resampled.
\end{proof}

\subsection{Weight of the witness tree}

For a given breakage tree $\phi$, we compute the probability that $\phi$ could be generated by \cref{alg:ais}.

\begin{lemma}
\label{lem:witness-prob}
    For every plausible witness forest $\phi$, 
    \begin{align*}
        \Pr[\phi \text{ is generated by }\cref{alg:ais} ] \leq \left(\max\limits_{\sigma \in \Omega} \frac{\theta(\sigma)}{\mu(\sigma)} \right)\prod\limits_{v \in \phi} \gamma^{S_v}_v
    \end{align*}
\end{lemma}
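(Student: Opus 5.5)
We follow the standard AIS change-of-measure argument of \cite{achlioptas2019beyond}, adapted to the breakage forest. Fix a plausible witness forest $\phi$. By the reconstruction claim above, $\phi$ determines the entire resampling sequence $w_1, w_2, \dots, w_s$ (the flaws addressed, in order, across all updates). It also determines the sets $S_{v}$ of flaws each resampling must introduce: the children $C_v$ of $v$ are exactly the introduced flaws that were not collaterally fixed, so $C_v\subseteq S_v$. By monotonicity of charges, establishing the bound with $C_v$ in place of $S_v$ suffices where needed. Since $\FSelect$ depends only on $\bF$ and $t$, and not on the randomness of $\FAddress$, the event ``$\phi$ is generated'' is contained in the event that a trajectory $\sigma_0 \to \sigma_1 \to \cdots \to \sigma_s$ exists in which, at step $t$, the flaw $w_t$ is present and selected at $\sigma_{t-1}$ and the transition $\sigma_{t-1}\to\sigma_t$ introduces (at least) the flaws prescribed by $\phi$ at node $v_t$. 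Deletion updates do not change the state, and insertion updates only change $\bF$ and the selection rule. We can therefore treat the whole run as a single time-inhomogeneous chain whose transition kernels $\rho_{w_t}$ (and charges) are those in force at the current update. This is legitimate because they are determined by the update sequence.

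The core step is an induction on $t$ proving
\[
\sum_{\sigma_0,\dots,\sigma_{t-1}} \theta(\sigma_0)\prod_{r=1}^{t}\mathbf{1}[\sigma_{r-1}\in \InSet_{w_r}^{S_{v_r}}(\sigma_r)]\,\rho_{w_r}(\sigma_{r-1},\sigma_r) \;\le\; \Big(\max_{\sigma}\tfrac{\theta(\sigma)}{\mu(\sigma)}\Big)\,\mu(\sigma_t)\prod_{r=1}^{t}\gamma_{w_r}^{S_{v_r}} .
\]
The left side is the probability of reaching $\sigma_t$ along a trajectory consistent with the first $t$ nodes of $\phi$. For the base case $t=0$, we use $\theta(\sigma_0)\le \max_\sigma\frac{\theta(\sigma)}{\mu(\sigma)}\,\mu(\sigma_0)$. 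For the inductive step, we substitute the hypothesis for $\sigma_{t-1}$ and sum over $\sigma_{t-1}\in\InSet_{w_t}^{S_{v_t}}(\sigma_t)$. Definition \ref{defn:charges} gives $\sum_{\sigma\in\InSet}\mu(\sigma)\rho(\sigma,\sigma_t)\le \gamma_{w_t}^{S_{v_t}}\mu(\sigma_t)$, which closes the induction. Setting $t=s$ and summing over $\sigma_s$ (using $\sum\mu=1$) yields the lemma.

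The main obstacle is ensuring that the membership conditions ``$\sigma_{r-1}\in\InSet_{w_r}^{S_{v_r}}(\sigma_r)$'' really are necessary for $\phi$ to be generated. Two requirements must hold: the introduced set must contain the prescribed children, and $w_r$ must be the flaw chosen at $\sigma_{r-1}$. The selection requirement is where the independence of $\FSelect$ from $\FAddress$'s randomness is used. A second point of care is that the randomness is consumed column-by-column from $\mathcal{R}$ with one fresh column per $\FAddress$ call. This makes the $t$-th step's transition probability exactly $\rho_{w_t}(\sigma_{t-1},\cdot)$ conditional on the past, even though the adversary's update sequence is fixed in advance. That conditional independence is what justifies writing the trajectory probability as the product above.
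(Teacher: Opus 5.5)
Your argument is the paper's proof. It uses the same change-of-measure induction: the base case is $\theta(\sigma_0)\le \max_\sigma\frac{\theta(\sigma)}{\mu(\sigma)}\,\mu(\sigma_0)$, the inductive step is closed by the definition of $\gamma^{S}_{i}$, and the proof ends with a sum over the last state. Tracking the trajectory-restricted mass, instead of the marginal $\Pr[\sigma_t=\tau]$ as the paper writes it, is if anything more careful.

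One remark needs fixing: the monotonicity sentence is backwards. $C_v\subseteq S_v$ gives $\gamma^{C_v}_v\ge\gamma^{S_v}_v$, so a bound stated with $C_v$ does not yield the bound with $S_v$.

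What your induction actually needs is that $\sigma_{r-1}\in\InSet^{S_{v_r}}_{w_r}(\sigma_r)$ be necessary for $\phi$ to be generated. This holds only if the superscript is a set, determined by $\phi$, of flaws that the $r$-th transition actually introduces. That set is the child set $C_{v_r}$, which is exactly what your last paragraph checks ("the introduced set must contain the prescribed children").

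Suppose instead $S_v$ is read as the traversal's cumulative set $(S_{v_{r-1}}\setminus\{w_r\})\cup C_{v_r}$. Then it contains pending flaws that are already present at $\sigma_{r-1}$, and such flaws are not introduced by the transition. The membership condition is then not necessary, and the inequality can even fail: a pending flaw lying outside $\Gamma_{F_v}$ forces $\gamma^{S_v}_v=0$ while $\phi$ has positive probability.

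So drop that sentence and state the bound with $S_v:=C_v$. This is also the only reading under which the paper's own proof is valid, and the same holds for the telescoping of $\prod_v\Psi(S_v)/\psi_{F_v}$ in the subsequent claim on $p_\phi$.
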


\def\cS{\mathcal{S}}

\begin{proof}
    Let $\sigma_t$ be the random variable denoting the state in $\Omega$ that the algorithm is in after the $t^{th}$ resample. We let $\sigma_0 \sim \theta$, i.e., the initial state is drawn from the distribution $\theta$.

    From the breakage forest $\phi$, we can reconstruct the resampling sequence by performing the graph traversal described above. Let $F_1, F_2, \dots F_t$ denote the resampling sequence and $v_1, v_2, \dots v_t$ denote the nodes of the forest corresponding to these resamplings.

    We prove by induction that,
    \begin{align*}
        \Pr[\sigma_{t+1} = \tau] &= \left(\max\limits_{\sigma \in \Omega} \frac{\theta(\sigma)}{\mu(\sigma)}\right) \cdot \prod\limits_{i=1}^t \gamma^{S_{v_i}}_{v_i} \mu(\tau)
    \end{align*}
    The result then follows by adding over all $\tau \in \Omega$.
    
    As the base case of the induction, consider $t = 0$. The statement is trivial as $\Pr[\sigma_0 = \tau] = \theta(\tau) \leq \left(\max\limits_{\sigma \in \Omega} \frac{\theta(\sigma)}{\mu(\sigma)} \right) \mu(\tau) $ by definition.

    For every $t > 0$, by the induction hypothesis, we have
    \begin{align*}
        \Pr[\sigma_{t-1} = \sigma'] &\leq \left(\max\limits_{\sigma \in \Omega} \frac{\theta(\sigma)}{\mu(\sigma)} \right) \cdot \prod\limits_{i=1}^{t-1} \gamma^{S_{v_i}}_{v_i} \cdot  \mu(\sigma') \\
        \Rightarrow \Pr[\sigma_t = \tau] &\leq \left(\max\limits_{\sigma \in \Omega} \frac{\theta(\sigma)}{\mu(\sigma)} \right) \cdot \sum\limits_{\sigma' \in \Inc^{S_{v_i}}_{F_i}(\tau)} \Pr[\sigma_{t-1} = \sigma'] \cdot \rho^{S_{v_i}}_{v_i}(\sigma', \tau) \\
        &\leq \left(\max\limits_{\sigma \in \Omega} \frac{\theta(\sigma)}{\mu(\sigma)} \right) \cdot \prod\limits_{i=1}^{t} \gamma^{S_{v_i}}_{v_i} \cdot  \mu(\tau)
    \end{align*}
\end{proof}

\subsection{Coupling with a forest generation process}
We would like to apply the union bound with \cref{lem:witness-prob} over all possible witness sequences. Let $\cF_t$ denote the set of forests of size $t$ that correspond to valid resamplings of length $t$. For a node $v$ in some witness forest, let $F_v$ denote the corresponding flaw. The union bound gives,
\begin{align*}
    \Pr[\cref{alg:ais} \text{ makes at least } t \text{ resamplings }] \leq \left(\max\limits_{\sigma \in \Omega} \frac{\theta(\sigma)}{\mu(\sigma)}\right)\sum\limits_{\phi \in \cF_t} \prod\limits_{v \in \phi } \gamma^{S_{v}}_{F_v}
\end{align*}

To bound the summand on the right, we consider the following random process that generates witness forests (which is independent of the algorithm). We consider a fixed update sequence $F_1, F_2, \dots F_q$ to the algorithm. The process is described by random sets of flaws, $R_0, R_1, \dots R_t$ defined inductively. Each $R_k$ is associated with either (i) a resampling of the algorithm or (ii) an update in the algorithm.
We will also iteratively compute sets $S_0, S_1, S_2, \dots S_t$ corresponding to the incoming flaw sets. $S_0 = \emptyset$.

$R_0 = \emptyset$. For $k \geq 0$, if $R_{k} = \emptyset$ and there are still unprocessed updates, then $R_{k + 1} = \{F_i\}$ or $\emptyset$ depending on whether the resampling of $F_i$, the next unprocessed update, is flawed. In these cases, we set $S_{k+1} \gets R_{k+1}$. Flaw deletions are explicitly marked. If $R_{k} \neq \emptyset$, then let $k$ correspond to the $i^{th}$ update with LLL parameters $\psi, \gamma$ (the update number is omitted for convenience). Let $F_k$ denote the highest priority flaw in $S_k$. For more general flaw selection rules, $F_k$ is computed by calling $\mathtt{Select}({S_k, k})$. We set $R_{k+1}$ according to the following probability distribution, 
\begin{align*}
    \Pr[R_{k+1} = S] = \frac{\gamma^{S}_{F_k} \Psi(S)}{\sum\limits_{S' \subseteq [m]} \gamma^{S'}_{F_k} \Psi(S')} 
\end{align*}

The set $S_{k+1}$ is computed as $S_{k+1} \gets (S_k - F_k) \cup S$.

Recall, the convergence criterion,
\begin{align}
\label{eqn:conv-criterion-ais}
    \zeta_i = \frac{1}{\psi_i}\sum\limits_{S \subseteq [m]} \gamma^{S}_{i} \Psi(S) < 1 - \eps \ \ \ \ \ \forall \ \ i \in [m]
\end{align}
where $m$ is the number of flaws in the current query.

\begin{claim}
\label{clm:prob-phi}
    For a given sequence of updates, the random process outputs a plausible witness forest $\phi$ whose (i) flaw resampling sequence is given by $F_1, F_2, \dots F_t$, (ii) incoming flaw sets is given by $S_1, S_2, \dots S_t$ and (iii) ordered roots are given by $v_1, v_2, \dots v_{q}$ with probability,
    \begin{align*}
        p_{\phi} = \prod\limits_{v \in \phi}  \frac{\gamma^{S_{v}}_{F_v} \Psi(S_v)}{\zeta_{F_v} \psi_{F_v}} = \frac{\Psi(S_{t+1})}{\prod\limits_{i=1}^q\psi(F_{v_i})} \cdot \prod\limits_{v \in \phi} \frac{\gamma^{S_{v}}_{F_v}}{\zeta_{F_v}}
    \end{align*}
    where $S_{t+1}$ is the set of flaws present after $t$ resamplings of the algorithm. 
\end{claim}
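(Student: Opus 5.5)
The plan is to compute $p_\phi$ directly as a product of one-step transition probabilities of the generating process, and then telescope the $\Psi$ factors. The process is a sequential Markov-type procedure. Given the update sequence, the witness $\phi$ fixes the entire trajectory, namely the root events, the resampled flaws $F_1,\dots,F_t$ and the sets $S_1,\dots,S_t$. Two facts make this work. First, the selection rule $\mathtt{Select}(S_k,k)$ is deterministic given $S_k$ and $k$. Second, the traversal described before the claim reconstructs the sequence from $\phi$, so every vertex $v$ has a uniquely determined predecessor set and a child set $C_v$ (its introduced set). Hence the event ``the process outputs $\phi$'' is exactly the intersection, over the vertices $v$, of the events that the random set drawn at $v$ equals $C_v$, together with the deterministic root transitions. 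Its probability is therefore the product over $v$ of
\[
\frac{\gamma^{C_v}_{F_v}\Psi(C_v)}{\sum_{S'}\gamma^{S'}_{F_v}\Psi(S')}=\frac{\gamma^{C_v}_{F_v}\Psi(C_v)}{\zeta_{F_v}\psi_{F_v}},
\]
using the definition $\zeta_i=\frac{1}{\psi_i}\sum_{S}\gamma^S_i\Psi(S)$ from \eqref{eqn:conv-criterion-ais}.

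I would prove this by induction on the number of steps $k$ of the process. The claim to establish is that the probability of following the prefix of $\phi$ up to step $k$ equals the partial product. At a root step the transition is deterministic, because it is determined by whether the next update is an addition. At a resampling step the conditional probability of the next draw is given exactly by the displayed distribution. Because the process is memoryless given $(S_k,k)$ and the update index, conditional probabilities multiply. This gives the first equality, provided the charge set is read as $S_v=C_v$, the introduced set of $v$, which is how the paper uses $\gamma^{S_v}_v$ in \cref{lem:witness-prob}. I will state this identification explicitly.

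For the second equality I telescope the $\Psi$ factors. Along the trajectory the active set evolves as $S_{k+1}=(S_k\setminus\{F_k\})\cup C_{v_k}$, and at a root step the active set gains the root flaw. Since $\Psi$ is multiplicative over disjoint sets, we get $\Psi(S_{k+1})=\Psi(S_k)\,\Psi(C_{v_k})/\psi_{F_k}$ at resampling steps, and $\Psi$ gains the factor $\psi(F_{v_i})$ at a root insertion. I also need the disjointness of $C_{v_k}$ from $S_k\setminus\{F_k\}$. This holds because a flaw introduced by a resampling was, by definition, not already present, unless it is $F_k$ itself, and $F_k$ is removed first. Multiplying over all steps gives
\[
\prod_v\frac{\Psi(C_v)}{\psi_{F_v}}=\frac{\Psi(S_{t+1})}{\prod_{i}\psi(F_{v_i})}.
\]
Here each root contributes an extra $\psi$ in the denominator relative to the telescoping, since roots enter the active set without being drawn. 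This yields the stated closed form.

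I expect the main obstacle to be bookkeeping rather than estimation. The per-update parameters $\psi,\gamma,\zeta$ may change between updates, and a component may end with a nonempty set only at the last step. Between components the active set returns to $\emptyset$, since the algorithm is flawless between updates, so the telescoping restarts cleanly per component. I would verify the product component-by-component using the parameters of that update, and handle empty components and deletions as trivial deterministic steps contributing factor $1$.
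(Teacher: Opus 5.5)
Your proposal is correct and follows the paper's own argument. The first equality comes from multiplying the per-step draw probabilities of the generating process, using $\zeta_i\psi_i=\sum_S\gamma_i^S\Psi(S)$. The second comes from cancelling each resampled non-root flaw's $\psi$ against its appearance in its parent's introduced set, which leaves only the roots in the denominator and the final frontier $S_{t+1}$ in the numerator. Your explicit reading of the charge set as the introduced (child) set $C_v$, rather than the cumulative active set, and your disjointness check are exactly what the paper's terser proof relies on implicitly.
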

\begin{proof}
    The first equality can be obtained from \cref{eqn:conv-criterion-ais} and multiplying the probabilities for each step. Note that, by definition of our random process, these events are independent.

    For the second equality, we need to simplify $\prod\limits_{v \in \phi} \frac{\Psi(S_{v})}{\psi(F_v)}$. Here, the numerator $\prod\limits_{i=1}^t \Psi(S_{i, j})$ includes the product of $\psi$ values of all flaws that were not fixed collaterally, i.e., were fixed by one of the resamplings done in the algorithm \textbf{except} those created due to insertion events and those that remain flawed at the end. For the first set of flaws, during its resampling, the contribution of that flaw to the numerator is cancelled out by the corresponding $\psi$ factor in the denominator. The only unaccounted for flaws are those in the denominator corresponding to the flaws created due to the insertion updates, and (ii) those that remain flawed after $t$ resamplings. Thus, this entire product simplifies to $\frac{\Psi(S_{t+1})}{\prod\limits_{j=1}^q \psi(F_{v_j})}$.
\end{proof}

% \begin{theorem}
% \label{thm:dyn-obliv-seq}
%     For a fixed update sequence $F_1, F_2, \dots F_q$, \cref{alg:ais} makes at most $s$ resamples with probability at least $1 - \delta$ where,

%     \begin{align*}
%         \eps s = O\left(\log \max\limits_{\sigma \in \Omega}\frac{\theta(\sigma)}{\mu(\sigma)} + \sum\limits_{i=1}^q \log (1 + \psi_i(F_{i})) + \max\limits_{S \subseteq [q], i\in[q]} \frac{1}{\Psi_i(S)} + \log \delta^{-1}\right) 
%     \end{align*}
%     and $\Psi_i(S), \psi(F)$ for a subset $S \subseteq [q]$ denotes the \LLL parameters for set $S$ after the $j^{th}$ update, and in particular $\psi_j(F_j)$ is the \LLL parameter for the flaw $F_j$ during the $j^{th}$ update at which point the set of candidate flaws are $\{F_1, F_2, \dots F_j\}$.
% \end{theorem}

\goldbach*

\begin{proof}
    \def\bR{\mathbf{R}} 
    Let $\bR$ denote the ordered set of non-trivial roots of the breakage forest. We say that a root of the breakage forest is trivial if its corresponding connected component is a single vertex. Such components correspond to either deletions or updates that did not cause any resamples.
    Let $\cF_{t, \bR}$ denote the set of all witness forests of size $t$ that are plausible for the given update sequence, and whose ordered set of non-trivial roots is $\bR$. We have $\sum\limits_{\phi \in \cF_t} p_\phi \leq 1$ since the random process outputs at most one witness.
    From the above inequality and using $\zeta_{F_v} \leq (1 - \eps)$ we have, 
    $$\sum\limits_{\phi \in \cF_{t, \bR}} \prod\limits_{v \in \phi} \gamma^{S_v}_{F_v}\leq (1 - \eps)^s \cdot \max\limits_{S \subseteq [q]}\frac{ \prod\limits_{v \in \bR} \psi(F_{v})}{\Psi(S)}$$.
    % Note that in the above equation $v_i$ denotes the roots of the witness trees, i.e., the update sequence.

    The union bound of the probability that at least one of the witness forests occurs is,

    \begin{align*}
        \left(\max\frac{\theta(\sigma)}{\mu(\sigma)}\right) \sum\limits_{\bR \subseteq [q]}\sum\limits_{\phi \in \cF_{t,\bR}} \prod\limits_{v \in \phi} \gamma^{S_v}_{F_v} &\leq (1 - \eps)^s \left(\max\frac{\theta(\sigma)}{\mu(\sigma)}\right) \cdot \max\limits_{S \subseteq [q]}\frac{ \sum\limits_{\bR \subseteq [q]}\prod\limits_{v \in \bR } \psi(F_{v})}{\Psi(S)}  \\
        &\leq (1 - \eps)^s \left(\max\frac{\theta(\sigma)}{\mu(\sigma)}\right) \cdot \max\limits_{S \subseteq [q]}\frac{\prod\limits_{v \in \bR } (1 + \psi(F_{v}))}{\Psi(S)}
        \end{align*}
    
\end{proof}

% \begin{theorem} 
% \label{thm:gen-dynLLL2}
%     If the total number of distinct flaws that are added is at most $m$ and the flaws are numbered $F_1, F_2, \dots F_m$ and $\psi(F_i)$ is an upper bound on the LLL parameter of the flaw $F_i$ over all possible flaw additions, \cref{alg:ais} makes at most $s$ resamples with probability at least $1 - \delta$ where,
%     \begin{align*}
%         \eps s = O\left( \log \left(\max\limits_{\sigma \in \Omega} \frac{\theta(\sigma)}{\mu(\sigma)} \right) +  (\sum\limits_{j=1}^q \log_2{(1 + \sum\limits_{i=1}^m  \psi(F_i)}) +  \max\limits_{S \subseteq [m]} \log_2 \frac{1}{\Psi(S)} + \log_2 \delta^{-1} \right)
%     \end{align*}
%     resamples. In particular, if it holds that $\psi(F_i) \leq \psi$ for every flaw $F_i$ and every instance of \LLL then, 
%     $$\eps s = O\left(q \log (m \psi) + m \log \psi + \log_2(\delta^{-1}) \right)$$
% \end{theorem}
\adaptiveais*

\begin{proof}
    We union bound the probabilities that $(F_1, F_2, \dots F_q)$ occur over all the $k^q$ possible choices of flaw updates.
    Summing over the probability of witnesses in the proof of \cref{thm:gen-dynLLL} over all such sequences we require to compute, we obtain
    $\sum\limits_{(F_1, F_2, \cdots F_q)} \prod\limits_{i=1}^q \psi_{(F_1,F_2,\dots F_i)}(F_i)$. Here we denote by $\psi_{(F_1, F_2, \dots F_i)} (F_i)$ the value of $\psi(F_i)$ after the sequence of updates $(F_1, F_2, \dots F_i)$. If we have a global upper bound $\psi$ for every such $F_i$, then the expression can be simplified into the following,
    $\prod\limits_{i=1}^q(1 + kq)$. In other cases, one has to manually compute the above summation. Plugging in the above term, we get the desired bound.
\end{proof}

% \noindent (1) Static general framework: We choose $\psi_i = e\gamma_i = e/2^{k}$. Number of resampling is dominated by the $\max_S \log{\frac{1}{\Psi(S)}}$ term. $\Psi(S) = \prod_{i\in S} \psi_i = (e/2^{k})^{|S|} \geq (e/2^{k})^q$. So $\log{\frac{1}{\Psi(S)}}$ can be as large as {\color{red}$kq\neq \tO(q)$}.

% \noindent (2) Static MT framework: We choose $x_i = \frac{e\gamma_i}{1+e\gamma_i}$. Number of resampling is $\sum_{i=1}^q \frac{x_i}{1-x_i} = \sum_i e \gamma_i = {\color{red}q \cdot 2^{-k}}$.

% \noindent (3) Forward tree for symmetric LLL: $q \log q / \eps$ resampling

% \noindent (3) For triangle-free coloring, $\psi_i$ can be as small as $e^{-L}$, and that term can be $qL = O(\Delta^\eps \cdot q)$.
% }

% \begin{theorem}
% \label{thm:gen-dynLLL2}
%     If the total number of update sequences is at most $2^m$, and all \LLL parameters satisfying \cref{cond:general} with $\eps$ slack are bounded by $\psi$, then the probability that \cref{alg:ais} makes at least $s$ resamples is at most $\delta$ where,
%     \begin{align*}
%         s = O\rb{\frac{1}{\eps} \left( q \log (kq) \log \psi^{-1} + \log_2 \delta^{-1} \right)}
%     \end{align*}
% \end{theorem}

\section{Applications}
\label{sec:applications}
In this section, we introduce several applications of our algorithms.
The description of our algorithms consists of the following key components.
\begin{itemize}
    \item \textbf{Flaws and resampling procedures:}
    One should specify the state space $\Omg$, the sequence of flaw updates (insertions or deletions) $F_1, F_2, \dots, F_q$, and the permutation $\pi$.
    Each flaw $F_i$ should be associated with a resampling procedure.
    % {\color{red}The resampling procedure can depend on the current set of flaws.}
    % In the variable setting, the permutation $\pi$ can be omitted.
    
    \item \textbf{Proof of convergence:} A proof that after each flaw update, \cref{cond:general} is satisfied.
    Let $\cF_j$ be the set of flaws after the first $j$ updates.
    One needs to show that there exists a constant $\eps > 0$ and positive parameters $\{\psi_j(F_i)\}_{i,j \in [q]}$, such that for each $j = 1, 2, \dots, q$, $\cF_j$ satisfies \cref{cond:general} with parameters $(\eps, \psi_j(F_1), \psi_j(F_2), \dots, \psi_j(F_q))$.
    
    \item \textbf{Implementation details:} One should specify the following implementation details:
    \begin{itemize}
        \item Procedure $\algResample(f, \sigma)$:
        This procedure implements the resampling procedure.
        The input is the current state $\sigma$ and the flaw $f = \pi(\sigma)$.
        The procedure outputs two objects:
        (1) a state $\sigma'$ obtained by addressing $f$ at $\sigma$, sampled from the distribution $\rho_f$, and (2) a list $L'$ of flaws introduced by the resampling.
        
        % \item Procedure $\algFind(\sigma, \sigma', f)$: The input are states $\sigma, \sigma'$ and flaw $f$. The procedure is only called after we address the flaw $f$ at $\sigma$ and move to $\sigma'$. The output is the list of all flaws introduced by the flaw-addressing step.
        \item Procedure $\algCheck(f, \sigma)$: Check whether a state $\sigma$ has a flaw $f$.
        
        \item Procedure $\algCompare(f_1, f_2)$: Given two flaws $f_1, f_2$, compare their order in $\pi$.
        
        \item The representation of a flaw and a state. Typically, a state is represented using an $O(\log|\Omega|)$-bit ID. A flaw is represented by an $O(\log q)$-bit ID and access to $\algResample(f, \cdot)$, $\algCheck(f, \cdot)$, and $\algCompare(f, \cdot)$.

        \item \textbf{Remark:}
        A technical detail about flaw IDs and resampling procedures is as follows.
        In our definition, we assume that each flaw is associated with the oracle access to its resampling procedure.
        This oracle is often implemented assuming access to an auxiliary data structure.
        For the example of triangle-free coloring (\cref{sec:triangle-free}), the oracle assumes access to the input dynamic graph.
        
        However, in \cref{sec:dynamic-setup}, the resampling procedure should be determined only by the current state and the flaw IDs of all previous updates, and not an auxiliary data structure.
        This inconsistency can be fixed by adding dummy flaws:
        Whenever we update the data structure in $O(T)$ time, we add a dummy flaw whose ID encodes the update in $O(T)$ bits.
        The flaw contains an empty set of states and will never be introduced or compared.
        It is only added to represent the additional information used in the resampling.
        For example, if the data structure is a dynamic graph $G$, then a dummy flaw is added whenever $G$ receives an edge update, and the ID of the flaw represents the edge.
        In total, only $q$ dummy flaws are added, which does not affect the time complexity.
    \end{itemize}

    \item \textbf{Analysis for clairvoyant adversary:}
    All our algorithms in this section assume a clairvoyant adversary.
    Our analysis will first show a guarantee on any fixed update sequence using \cref{thm:gen-dynLLL}.
    The guarantee will be extended against clairvoyant adversaries by union bounding over all possible sequences.
\end{itemize}

Implementation details are outlined in \cref{alg:app,alg:app-insert}.
We summarize the time complexity in \cref{thm:application}.

\begin{algorithm}  
    \Fn{\FDynamic} {
        \KwIn{state space $\Omg$, permutation $\pi$, sequence of $q$ flaw updates chosen by an oblivious adversary}
        \KwOut{a flawless state $\sigma$ after each update}
        $\sigma \gets \text{any state in $\Omg$}$ \DontPrintSemicolon \tcp*{$\sigma$ is flawless since there are no flaws} \label{line:intialize}
        \For{each flaw update}{
            \If{the update is an insertion of a flaw $f$}{
                $\FInsert(\sigma, f)$ \label{line:insert}
            }
        }
    }
    \caption{Outline of the dynamic algorithm.} 
    \label{alg:app}  
\end{algorithm}

\begin{algorithm}  
    \Fn{\FInsert} {
        \KwIn{current state $\sigma$, a new flaw $f$}
        \KwEnsure{no flaw other than $f$ exists}
        % Sample $B$ such that except possibly $B$, none of the other flaws exist.
        $Q \gets $ a priority queue of flaws, initially containing only $f$ \\ 
        \While{$Q$ is non-empty}{
            $f' \gets $ the flaw in $Q$ with the highest priority in $\pi$ \\
            \If{$\sigma$ has the flaw $f'$} { \label{line:check} 
                Update $\sigma$ by resampling $f'$ \label{line:resample} \\
                $F \gets$ the set of flaws introduced in the last step \label{line:find-flaw} \\
                $Q \gets Q - \{f'\} \cup F$ 
            }
        }
    }
    \caption{Implementation details for \cref{alg:ais}.} 
    \label{alg:app-insert}  
\end{algorithm}

\begin{theorem} \label{thm:application}
    Let $\Omg, \pi, F_1, F_2, \dots, F_q$ be an instance of \cref{alg:app} such that \cref{cond:general} is satisfied after each update with parameters $\eps$ and $\{\psi_j(F_i)\}_{i,j \in [q]}$.
    Assume further that each flaw can be represented with $\tilde{O}(1)$ bits, each state can be represented in $O(\log \Omega)$ bits, and $\algCompare$, $\algResample$, $\algCheck$ can be implemented in time $\TCompare, \TResample, \TCheck$ time, respectively.
    Then, the following holds:
    \begin{itemize}
        \item[] (A1) Let $d = \max_{i \in [q]} |\Gamma_i|$, $\delta \in (0, 1)$ be any real number, and $R = (\log |\Omega| + q\log(\frac{1 + \psi_{\max}} {\psi_{\min}}) + \log \delta^{-1}) / \eps$. Against an oblivious adversary, \cref{alg:app} handles all updates using $O(R)$ calls to $\algResample$, $\tO(R \cdot d)$ calls to \algCompare and \algCheck, and $\tO(R\cdot d)$ additional time with probability at least $1 - \delta$.
        \item[] (A2) Let $\zeta$ be an upper bound on the number of distinct update sequences that an adversary can choose from, and $R' = (\log |\Omega| + q\log(\frac{1 + \psi_{\max}} {\psi_{\min}}) + \log \zeta) / \eps$.
        Against a clairvoyant adversary, \cref{alg:app} handles all updates using $\tO(R' \cdot \TResample + R' \cdot d \cdot (\TCompare + \TCheck))$ time with probability at least $1 - 2^{-q}$.
    \end{itemize}
    % Then, with probability at least $1 - \delta$, \cref{alg:app} handles all flaw updates with $\tO(T)$ calls to $\algResample$, $\tO(TD)$ calls of $\algCompare$ and $\algCheck$, and $\tO(T)$ additional time, where $D = \max_{i \in [q]} |\Gamma_i|$ and $T =  \log_2 |\Omega| + q\log_2(\frac{1 + \psi_{\max}} {\psi_{\min}}) + \log_2 \delta^{-1}$.
\end{theorem}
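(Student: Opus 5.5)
The plan is to reduce both parts to the resampling bound of \cref{thm:gen-dynLLL} (in the simplified form of the remark following it) and then charge the bookkeeping work of \cref{alg:app-insert} to resamplings.

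\textbf{Step 1: number of resamplings.} \cref{alg:app-insert} is an instance of \cref{alg:ais}. The selection rule is ``highest $\pi$-priority flaw present in $Q$,'' and it depends only on $\pi$ and on the set of flaws introduced so far, not on the random bits. Every present flaw is in $Q$: the inserted flaw is placed there, and every newly introduced flaw is added after each resampling. Flaws that are popped and found absent were collaterally fixed, so they correspond exactly to vertices that are not children in the breakage forest. Hence the breakage-forest witness and \cref{lem:witness-prob} apply verbatim.

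Take the starting state to be deterministic, bound $\max_\sigma \theta(\sigma)/\mu(\sigma)$ by $|\Omega|$ for uniform $\mu$, and bound all $\psi$'s in $[\psi_{\min},\psi_{\max}]$. The remark after \cref{thm:gen-dynLLL} then gives at most $O(R)$ resamplings with probability $1-\delta$ for a fixed update sequence.

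\textbf{Step 2: cost per resampling (A1).} Each call to $\algResample$ returns the list $L'$ of introduced flaws, which lie in $\Gamma_{f'}$, so $|L'|\le d$. Each such flaw is inserted into $Q$ once. Inserting into a binary heap keyed by $\pi$ costs $O(\log |Q|)$ calls to $\algCompare$, and $|Q|\le 1+dR$, so the log factor is $\tO(1)$.

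Every pop either triggers a resampling or discards a flaw. Each pop costs one $\algCheck$ call and $O(\log|Q|)$ comparisons. Since every element of $Q$ was pushed either as an update root or as the output of a resampling, the total number of pops is at most $q + dR = O(dR)$. Here $q\le \eps R$ holds whenever $\log\frac{1+\psi_{\max}}{\psi_{\min}}\ge 1$; if not, we absorb $q$ separately, which is harmless.

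Summing gives $O(R)$ resamplings, $\tO(Rd)$ calls to $\algCompare$ and $\algCheck$, and $\tO(Rd)$ additional heap time, since flaws and states have $\tO(1)$ and $O(\log|\Omega|)$-bit representations.

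\textbf{Step 3: clairvoyant adversary (A2).} A clairvoyant adversary fixes the random table $\cR$ and then chooses one of at most $\zeta$ sequences. The algorithm is costly on $\cR$ only if at least one of these sequences is costly on $\cR$. Apply (A1) to each sequence with $\delta = 2^{-q}/\zeta$ and take a union bound over the $\zeta$ sequences. This gives failure probability at most $2^{-q}$ with
\[
R=(\log|\Omega|+q\log\tfrac{1+\psi_{\max}}{\psi_{\min}}+\log\zeta+q)/\eps=O(R'),
\]
since $q\le\log\zeta$ once each update carries at least one bit of choice; otherwise $q$ is absorbed into the $q\log(\cdot)$ term. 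Multiplying the call counts by $\TResample$, $\TCompare$ and $\TCheck$ gives the stated time bound.

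\textbf{Main obstacle.} The delicate step is Step 1: confirming that the lazy priority-queue implementation induces exactly the selection rule assumed in the witness reconstruction, and that this rule is independent of $\cR$. Discarded stale entries must not change which flaw is selected. This holds because any flaw present in the state was introduced by some resampling or by the update, and so sits in $Q$. Flaws that are in $Q$ but absent in $\sigma$ are skipped, which matches the traversal, where collaterally fixed flaws are not children.
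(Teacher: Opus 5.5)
Your proposal is correct and follows essentially the same route as the paper: \cref{thm:gen-dynLLL} (via its remark, with a deterministic start and uniform $\mu$) gives $O(R)$ resamplings, and each resampling pushes at most $d$ flaws into $Q$, so there are $O(Rd)$ queue operations with a logarithmic heap overhead; (A2) then follows by a union bound over the $\zeta$ sequences with $\delta = 2^{-q}/\zeta$. Your extra bookkeeping fills in details the paper's proof leaves implicit: the $q$ root pushes, the additive $q$ coming from $\log \delta^{-1}$, and the check that the lazy priority queue realizes the highest-priority-present-flaw selection rule.
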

\begin{proof}[Proof of (A1)]
    Fix a $\delta \in (0, 1)$.
    We first upper bound the time for executing \cref{alg:app-insert} for all updates.
    Note that \cref{line:resample,line:find-flaw} of \cref{alg:app-insert} can be implemented by invoking \algResample, and \cref{line:check} can be done by calling \algCheck.
    We use \algCompare to maintain the order of flaws in $Q$.
    By \cref{thm:gen-dynLLL}, \cref{line:resample} is executed $O(R)$ time in total.
    Since each resampling introduces at most $d$ flaws, at most $O(Rd)$ flaws are inserted into $Q$, and hence the while-loop has at most $O(Rd)$ iterations.
    Hence, in total \algCheck is called $O(Rd)$ times.
    By a standard implementation of priority queues, the while loop can be executed by $O(Rd \log Rd) = \tO(Rd)$ calls to $\algCompare$ and $\tO(Rd)$ additional time.
    Over all updates, we spend $\tO(Rd + R \cdot \TResample + Rd \cdot \TCheck + Rd \cdot \TCompare)$ time on \cref{alg:app-insert}.

    The time complexity of \cref{alg:app} is determined by \cref{line:intialize,line:insert}, which take $O(\log_2 |\Omega|)$ and $\tO(R \cdot \TResample + Rd \cdot \TCheck + Rd \cdot \TCompare)$ time, respectively.
    This proves (A1).

    \medskip
    \noindent \emph{Proof of (A2).}
    We can view \cref{alg:app} as running on an infinite sequence of random real numbers $S = s_1, s_2, \dots$, where $s_i \in [0, 1]$.
    At each step $t$, the algorithm first selects a flaw $F(t)$ from the current state $\sigma$ to address.
    The algorithm then computes the new state as a random sample from the distribution $\rho_{F(t)}(\sigma, \cdot)$, generated using $s_t$.
    
    Using this model, one can rephrase (A1) as follows.
    Suppose that we draw an infinite sequence of random real numbers $S$ and run \cref{alg:app} on it.
    For any fixed update sequence and $\delta \in (0, 1)$, the running time of \cref{alg:app} is $\mathcal{T'} = \tO(R \cdot \TResample + R \cdot d \cdot (\TCompare + \TCheck))$ with probability at least $1 - \delta$.

    By applying a union bound over all $\zeta$ possible update sequences, we obtain that the following event happens with probability $1 - \delta \cdot  \zeta$: The running time of \cref{alg:app}, when run on $S$, is upper bounded by $\mathcal{T'}$ for all update sequences.
    Note that if the above event happens, then no adversary can make \cref{alg:app} run more than $\mathcal{T'}$ time, since no update sequence can achieve it.
    Setting $\delta = \frac{1}{\zeta \cdot 2^q}$ then gives the result in (A2).
\end{proof}

% In most LLL settings, $\Omega$ is a product space on $n$ variables, $d$ is an upper bound on the dependency degree, and there is an upper bound $k = \max_{i \in [q]} |\vbl(F_i)|$ on the number of variables for a flaw, and all of \algResample, \algCheck, \algCompare can be implemented in $\poly(D, k)$ time.
% Hence, \cref{thm:application}(A1) gives an $\tO(\poly(k, D))$ amortized update time with an exponential tail.
To use (A2), a trivial way is to set $\zeta = (2^{|\Omega| + 1})^q$, because there are $q$ updates, and each update can remove or add any flaw $f \in 2^\Omega$.
However, the resulting time complexity will be linear in $|\Omega|$, which is often exponential in $n$.
In most cases, we can show a better complexity by using an application-specific upper bound on $\zeta$.

In the remainder of this section, we provide application-specific details for the dynamic CNF, triangle-free coloring, LMR-scheduling, and edge coloring problems.

% \btodo{ Note: here we hide $\log q + \log(\frac{1 + \psi_{\max}}{\psi_{\min}})+ \lg\lg(|\Omega|)$ in $\tO(1)$. }

\section{Dynamic CNF problem} \label{sec:CNF}

\paragraph{Notations and definitions.}
Let $v_1, v_2, \dots, v_n$ denote $n$ Boolean variables.
A \emph{literal} is either a variable $v_i$ or its negation $\neg v_i$.
A \emph{clause} is a disjunction of literals.
W.L.O.G., assume that all literals in a clause correspond to different variables.
A \emph{CNF} is a conjunction of one or more clauses.
% For example, $(v_1 \vee \neg v_2) \wedge (v_1 \vee v_3 \vee v_4) \wedge(\neg v_1 \vee \neg v_2)$ is a CNF of three clauses.
%        A \emph{$k$-CNF} is a CNF in which each clause has exactly $k$ literals.
% An assi gnment of the variables \emph{satisfies} a CNF if it makes the formula evaluate to true.
Let $\phi$ be a CNF.
For each clause $C$ in $\phi$, let $N(C)$ denote the set of clauses in $\phi$ that share variables with $C$.
The \emph{dependency degree} of a clause $C$ is defined as $|N(C)|$.
% The \emph{dependency degree} of $\phi$ is defined as $\max_{C \in \phi} |N(C)|$.
% An \emph{assignment} is a function that assigns each variable with either 0 (false) or 1 (true).
% Given a CNF $\phi$, the \emph{CNF problem} asks to find an assignment that satisfies $\phi$, or determine that such an assignment does not exist.
Two important parameters are often considered: The maximum degree $\Delta = \max_{C \in \phi} |N(C)|$ and the maximum number $k$ of variables in a clause.

The CNF problem is perhaps the simplest one falling into the Moser-Tardos framework.
In particular, if every clause has exactly $k$ variables, and the maximum degree is less than $2^k/e$, then the CNF is satisfiable, and \cite{moser2010constructive}'s algorithm finds a satisfying assignment in polynomial time.
We generalize such CNFs by allowing variable length of clauses, as follows.

\begin{definition}[CNF with bounded dependence] \label{def:CNF-bounded}
For a clause $C$ of $x$ literals, define $p(C)$ as $2^{-x}$.
That is, $p(C)$ is the probability that a random assignment violates $C$.
A CNF is of \emph{bounded dependence} if there exists a positive $\eps$ such that $\sum_{D \in N(C)} p(D) \leq 1/e - \eps$ holds for every clause $C$.
\end{definition}

\begin{definition}[Dynamic CNF problem with bounded dependence (DCNF)]
Let $\phi$ be an initially empty CNF on $n$ variables $v_1, v_2, \dots, v_n$.
The input is a sequence of $q$ insertions and deletions of clauses on $\phi$.
Throughout, it is guaranteed that $\phi$ is of bounded dependence for a fixed constant $\eps$.
The goal is to maintain a satisfying assignment after each update on $\phi$.
Throughout, the maximum degree never exceeds $\Delta$, and each clause contains at most $k$ variables.
% After a fixed update, the \emph{dependency degree} of a clause $C$ is the current number of clauses in $\phi$ that share variables with $C$.
\end{definition}

\paragraph{Algorithm specification.}
We now use our algorithmic framework to solve DCNF.
% We note that the parameters $k, \Delta, \eps$ are only for the analysis.
% The algorithm does not assume the knowledge of them.
% \paragraph{Flaw and resamplings.}
Let $\Omg = \{0, 1\}^n$ be the set of truth assignments to $v_1, v_2, \dots, v_n$.
Denote by $C_t$ the clause inserted or deleted in the $t$-th update, where $t = 1, 2, \dots, q$.
For each inserted clause $C_t$, define a flaw $F_t$ to be the set of assignments that violate the $C_t$.
Thus, we can map the update sequence into a sequence of flaw updates: When a clause $C_t$ is inserted, we insert $F_t$, and the mapping for deletions is symmetric.
At any moment, any flawless state with respect to the current set of flaws satisfies all clauses.
The permutation $\pi$ can be arbitrary.

Each flaw $F_j$ is associated with the MT-resampling, i.e., to resample all variables in $C_j$ uniformly at random.
The algorithm is hence an instance of the variable setting with MT-resampling (\cref{def:variable}).

\paragraph{Convergence proof.} 
We now show that the charges satisfy the local union bound condition (\cref{cond:local-union}) after each update.
Consider the CNF instance after a fixed update, where each clause $C_i$ is associated with a flaw $F_i$.
By \cref{lem:prod}, for each flaw $F_i$, the charge $\gamma_i = \Pr[F_i] = p(C_i)$.
% Note that $\gamma_i$ is invariant throughout the updates.
The set $\Gamma_i$ is exactly $N(C_i)$, because the resampling of $F_i$ can introduce $F_j$ only if $C_i$ and $C_j$ share variables.
Therefore, for each flaw $F_i$, it holds that
\[
    \sum_{j \in \Gamma_i} \gamma_j = \sum_{j \in \Gamma_i} p(C_j) = \sum_{C_j \in N(C_i)} p(C_j) \leq 1/e - \eps,
\]
where the last inequality is due to \cref{def:CNF-bounded}.
This shows that the local union bound condition is satisfied after each update.
By our discussion in \cref{sec:condition}, the local union bound condition can be converted to \cref{cond:general} by setting $\psi_i = e \cdot \gamma_i = e \cdot p(C_i)$.

\paragraph{Implementation details.}
In the following, we provide details and invoke \cref{thm:application} to prove the time complexity.
At any moment, we store the current CNF $\phi$ in the memory, which takes $O(k)$ time per update.
A flaw $F_j$ is represented by its index, that is, $j$.
% The current state $\sigma$ is represented by an $n$-bit binary string, where the $i$-th bit is the assignment for $v_i$.

To implement $\algResample$ efficiently, we maintain the following auxiliary data structures:
For each clause $C_i \in \phi$, we maintain all clauses in $N(C_i)$, represented as a balanced binary search tree (BBST).
For each variable $v_j$, we store a BBST $L_j$ of clauses that contain $v_j$.
% Note that the size of each $N(C_i)$ and $L_j$ is at most $\Delta$.
% Thus, inserting a clause into one of the lists takes $O(\Delta)$ time.
% When a clause is added to one of the lists, we store a handle to its location in the list so that the clause can later be removed in $O(1)$ time.
% We also store a dynamic table $A$ of $q$ bits used for temporary storage.
% We maintain the invariant that all bits in $A$ are zeroes after each update. 
% When an update comes, we first append a zero to $A$ so that its size matches the number of queries.
If the update inserts a clause $C_i$, we add $C_i$ into $L_j$ of each variable $v_j \in C_i$.
This step requires $\tO(k)$ time.
Then, we compute $N(C_i)$ as the union of all clauses in $\bigcup_{v_j \in C_i} L_j$.
% To compute this union, we examine each clause $C_x \in \bigcup_{v_j \in C_i} L_j$, and if $A[x] = 0$, we set $A[x]$ to be $1$ and add $C_x$ to $N(C_i)$.
% After this step, we reset $A$ as all zeroes by iterating through all clauses $C_x \in \bigcup_{v_j \in C_i} L_j$ and set $A[x]$ as 0. 
This can be done in $\tO(\sum_{v_j \in C_i} |L_j|) = \tO(k\Delta) $ time.
After $N(C_i)$ is computed, we add $C_i$ to each $N(C_j)$ such that $C_j \in N(C_i)$.
When a clause $C_i$ is removed, we remove it from all $N(C_x)$ with $C_x \in N(C_i)$ and all $L_j$ with $v_j \in C_i$.
This step removes $C_i$ from at most $\Delta + k$ lists and takes $\tO(\Delta + k)$ time.
Therefore, all auxiliary data structures require $\tO(k\Delta)$ time per update.

The implementation of the key procedures is described as follows.
\begin{itemize}
    \item Given two flaws $F_i, F_j$, $\algCompare(F_i, F_j)$ can be done in $O(1)$ time by returning the flaw with smaller index.
    \item $\algCheck(\sigma, F_i)$ can be done in $O(k)$ time by checking whether $C_i$ is violated.
    \item Given the current state $\sigma$ and a flaw $F_j$, $\algResample$ can be implemented in $\tO(k\Delta)$ time as follows:
    First, we compute the new state $\sigma'$ by resampling all variables in $C_i$.
    Next, we compute the list of flaws introduced.
    To do this, we examine each clause in $N(C_i)$ and check whether it is now violated.
    Examining a clause takes $O(k)$ time, and at most $\Delta$ clauses are examined.
    Therefore, $\algResample$ takes $\tO(k\Delta)$ time.
\end{itemize}

In DCNF, the number of possible update sequences is $\zeta = (2n^k)^q$.
By applying (A2) of \cref{thm:application} with $\zeta = (2n^k)^q, |\Omega| = 2^n, d = \max_{i \in [q]}|\Gamma_i| =  \Delta, \psi_{\min} = \min_{C_i \in \phi} e \cdot p(C_i) = \frac{e}{2^{k}}, \psi_{\max} = e, \TResample = \tO(k\Delta), \TCheck = O(k), \TCompare = O(1)$, the total running time is $\tO((n + qk) \cdot k \Delta)$.
Hence, we obtain the following.

\begin{theorem}
DCNF can be solved in $\tO(k^2 \Delta)$ amortized update time with probability $1 - 2^{-q}$ against a clairvoyant adversary.
% , where $k$ is the maximum number of variables in a clause and $\Delta$ is the maximum number of clauses that a clause can share variables with.
The algorithm does not assume the knowledge of $k, q,$ and $\Delta$.
\end{theorem}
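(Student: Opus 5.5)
We obtain the theorem by instantiating \cref{thm:application}(A2) with the parameters already computed in the preceding paragraphs, and then dividing the total running time by $q$. The plan has three steps.

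\textbf{Step 1: the condition holds after every update.} The convergence proof above shows that after each update the current clause set satisfies the local union bound condition (\cref{cond:local-union}) with the constant slack $\eps$ from the bounded-dependence guarantee. Converting with $\psi_i = e\,p(C_i)$ gives \cref{cond:general} with parameters fully determined by the update sequence, as \cref{sec:dynamic-setup} requires. Removing clauses keeps the condition by \cref{obs:subset-convergence}. The bounds are
\[
\psi_{\min} \ge e\,2^{-k}, \qquad \psi_{\max} \le e/2 .
\]
A clause whose literals repeat a variable is excluded by the W.L.O.G.\ assumption, and we may also assume $k\ge1$. Hence
\[
\log\frac{1+\psi_{\max}}{\psi_{\min}} = O(k).
\]

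\textbf{Step 2: plug into (A2).} We have $\log|\Omega| = n$. A clause is described by at most $k$ literals, so $\log \zeta = O(qk\log n)$. Since $\eps$ is constant,
\[
R' = O(n + qk + qk\log n) = \tO(n+qk).
\]
With $d=\Delta$, $\TResample=\tO(k\Delta)$, $\TCheck=O(k)$ and $\TCompare=O(1)$, the total cost is
\[
\tO\bigl(R'(k\Delta + \Delta k)\bigr) = \tO\bigl((n+qk)k\Delta\bigr).
\]
We must add the auxiliary bookkeeping for the lists $L_j$ and $N(C_i)$, which is $\tO(k\Delta)$ per update and so $\tO(qk\Delta)$ overall. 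We also add the dummy-flaw accounting from the Remark; it does not apply here, because flaws are identified by their clauses, and in any case it costs only $O(q)$ extra. This matches the claimed total of $\tO((n+qk)k\Delta)$ with probability $1-2^{-q}$.

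\textbf{Step 3: make it amortized.} The expected obstacle is the additive $n k\Delta$ term, which is not $\tO(k^2\Delta)$ per update when $q \ll n$. I would remove it using the arbitrary-start analysis and lazy initialization. Instead of fixing $\Omega=\{0,1\}^n$ up front, consider only the variables that appear in some inserted clause; there are at most $qk$ of them. Each such variable is initialized uniformly at random the first time it appears, so the initial distribution is exactly $\mu$ restricted to the touched variables, giving $\theta/\mu = 1$. Then the $\log|\Omega|$ term is replaced by $O(qk)$, or vanishes altogether because $\theta=\mu$, which makes \cref{thm:gen-dynLLL}'s first term zero.

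To justify this, I would check that lazily sampling a variable is equivalent to having sampled it at time zero. This holds because the clairvoyant union bound in (A2) is over the fixed random string $S$, and the value of an untouched variable is never observed before its first use. With this in place the total is $\tO(qk\cdot k\Delta)$, that is, $\tO(k^2\Delta)$ amortized per update.

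Finally, the algorithm uses no knowledge of $k$, $q$ or $\Delta$. These quantities enter only the analysis; \cref{alg:app-insert} and the data structures never reference them.
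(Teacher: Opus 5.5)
Your Steps 1 and 2 are exactly the paper's proof. The paper instantiates \cref{thm:application}(A2) with $\zeta=(2n^k)^q$, $|\Omega|=2^n$, $d=\Delta$, $\psi_{\min}=e2^{-k}$, $\psi_{\max}=e$, $\TResample=\tO(k\Delta)$, $\TCheck=O(k)$, $\TCompare=O(1)$. It obtains a total of $\tO((n+qk)\cdot k\Delta)$ and then reads off the amortized $\tO(k^2\Delta)$ bound directly. That last step silently requires $n=\tO(qk)$, or else treating the $nk\Delta$ term as preprocessing. Your Step~3 is therefore a genuine addition that closes this gap, not a detour.

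Your fix works as follows. Starting from a uniformly random assignment gives $\theta=\mu$. This kills the $\log\max_\sigma \theta(\sigma)/\mu(\sigma)$ term of \cref{thm:gen-dynLLL}, which is the only place $|\Omega|$ enters. The remaining terms are $\sum_i\log(1+\psi_i)=O(q)$, $\log\max_S 1/\Psi(S)\le q\log(2^k/e)=O(qk)$, and $\log\delta^{-1}=\log(\zeta 2^q)=O(qk\log n)$, for a total of $\tO(qk)$ resamplings. The initial value of each variable is a fixed position of the random string that is merely read on first touch. Hence the per-sequence bound and the clairvoyant union bound of (A2) carry over unchanged.

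The price is that you cannot quote (A2) as stated. Its $R'$ hard-codes $\log|\Omega|$ from the deterministic start in \cref{alg:app}, so you must redo the (A1)$\to$(A2) step from \cref{thm:gen-dynLLL} with a random start. You also need to store the assignment and the lists $L_j$ in hash tables or BBSTs keyed by variable, so that nothing of size $n$ is ever initialized.

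Your alternative, a deterministic lazy start with $\Omega$ restricted to the at most $qk$ variables touched by the sequence, is also valid. The restriction is made per fixed update sequence, before the union bound. The resulting $\log|\Omega|\le qk$ is of the same order as the $\log\max_S 1/\Psi(S)$ term, which is present anyway.
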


\section{Dynamic \texorpdfstring{$O(\Delta / \ln \Delta)$-}{}coloring of triangle-free graphs} \label{sec:triangle-free}

Let $G = (V, E)$ be a graph with maximum degree $\Delta$.
The list chromatic number of $G$ is the smallest positive integer $\ell$ satisfying the following:
For any assignment of color lists of size $\ell$ to each vertex, it is possible to obtain a proper coloring by giving each vertex a color from its list.
A graph is \emph{triangle-free} if it does not contain a clique of size 3 as a subgraph.
\cite{molloy2019list} showed that every triangle-free graph has list chromatic number at most $(1+o(1))\Delta/\ln \Delta$, matching a $0.5\Delta/\ln\Delta$ lower bound (see, e.g., \cite{bollobas1978chromatic,kim1995brooks}) up to a constant factor.
More precisely, it was shown that for any $\eps > 0$ there exists an integer $\Delta_\eps > 0$, such that any triangle-free graph of maximum degree $\Delta \geq \Delta_\eps$ has chromatic number at most $(1+\eps) \cdot \Delta / \ln \Delta$.

\cite{molloy2019list} also gives an algorithm that finds such a coloring efficiently.
In this section, we extend their result by obtaining a fully dynamic algorithm that maintains a list-coloring with $6\Delta / \ln\Delta$ colors under insertions and deletions of edges.

\begin{remark}
Several constants in our analysis are not the tightest possible.
In particular, it is possible to instead maintain a $((2+o(1)) \Delta / \ln \Delta)$-coloring by carefully choosing the constants in our analysis.
Furthermore, by using the more complicated flaws in \cite{achlioptas2019beyond}, one can maintain a $(1+o(1))\Delta / \ln\Delta$ list-coloring even when the graph contains a bounded number of triangles.
The primary objective of this paper is to provide a framework for adapting existing LLL-based algorithms to the dynamic setting; therefore, we do not aim to obtain the best possible constants.
\end{remark}
% We first present a simpler version which colors the graph with $(2+o(1))\Delta/\ln\Delta$ colors.

\paragraph{Problem formulation.}
% The input consists of parameters $\eps, \Delta, f > 0$ and a graph $G = (V, E)$ of $n$ vertices.
% We assume that $\Delta \geq \Delta_\eps$ for some $\Delta_\eps$ depending only on $\eps$, and $f \in [\Delta^{\frac{2+2\eps}{1+2\eps}}(\ln \Delta)^2, \Delta^2]$.
% Initially, $G$ has no edges, and each vertex $v$ is associated with a color list $C_v$ of size $\ell = (2+\eps)\Delta / \ln \sqrt{f}$.
% The algorithm then receives a sequence of $q$ edge insertions or deletions in $G$.
% The edge updates are such that the maximum degree of $G$ never exceeds $\Delta$, and the neighborhood of each vertex spans at most $\Delta^2 / f$.
% The objective is to maintain a proper list coloring of $G$.
% Note that the triangle-free coloring problem is a special case of the above problem where $f$ is fixed as $1$.

% \btodo{TO DISCUSS: main obstacle for unknown $\Delta$: (1) the flaws depends on $L = \Delta^\eps$. Changing $\Delta$ will change all flaws at the same time. (2) When the number of colors decreases, we need to fix all vertices having the removed color. This can potentially change colors of $O(n)$ vertices, and it is not clear how to combine the analysis with this change!}

% The problem is formally defined as follows.
The input consists of a parameter $\Delta > 0$ and a graph $G = (V, E)$ of $n$ vertices.
Initially, $G$ has no edges, and each vertex $v$ is associated with a color list $C_v$ of size $\ell = 6\Delta / \ln \Delta$.
The algorithm then receives a sequence of $q$ edge insertions or deletions in $G$.
Throughout, $G$ is a triangle-free graph, and the maximum degree of $G$ never exceeds $\Delta$.
The objective is to maintain a proper list-coloring of $G$, that is, to assign each vertex $v$ a color from $C_v$ such that no two adjacent vertices share a color.
Without loss of generality, we assume that $\Delta \geq 100$.
For $\Delta < 100$, we have $\ell \geq \Delta + 1$, and thus an $\ell$-list coloring can be easily maintained in $O(\Delta)$ time.

% \begin{remark}
% The formulation above assumes that an upper bound $\Delta$ of maximum degree is known.
% This assumption does not induce any loss of generality in the $(2+o(1))(\Delta / \ln \Delta)$-list coloring problem, because knowing the size of the color lists already implies an upper bound on $\Delta$.
% The assumption can be removed in the standard coloring setting, that is, when the color lists of all vertices are the same and not specified in the input.
% \end{remark}

\subsection{Overview of the algorithm}
Our algorithm is an adaptation of \cite{molloy2019list,achlioptas2019beyond}'s algorithm.
We will treat $\blank$ as a color and view uncolored vertices as having this color.
A \emph{partial coloring} is an assignment of colors such that each vertex $v$ gets a color from $C_v \cup \{\blank\}$ and two adjacent vertices do not share a non-blank color.
For a partial coloring $\sigma$, let $\sigma(v)$ denote the color of $v$ in $\sigma$.

The algorithm consists of two phases.
In Phase 1, we maintain a partial coloring $\sigma_1$ in which a substantial number of vertices are colored.
Then, we complete $\sigma_1$ into a proper coloring in Phase 2.
We first present a static version, showing how to color a triangle-free graph properly.

\paragraph{Phase 1.} Let $\Omega$ be the set of partial coloring, i.e., $\Omega = \prod_{v\in V} (C_v \cup \{\blank\})$.
For a partial coloring $\sigma$, we let $L_v(\sigma)$ be the set of colors we can assign to $v$ without creating a monochromatic edge.
Note that $\blank \in L_v(\sigma)$.
Let $L = \Delta^{0.7}$.
Define two flaws for each vertex $v$:
\begin{itemize}
    \item $B_v = \{\sigma \in \Omega \mid |L_v(\sigma)| < L\}$ .
    \item $Z_v = \{\sigma \in \Omega \mid \text{$v$ has at least $L - 1$ neighbors colored $\blank$ in $\sigma$}\}$.
\end{itemize}
For a flaw $f \in \{B_v, Z_v\}$, define $v(f) = v$.
Phase 1 finds a flawless partial coloring $\sigma_1$ using the algorithmic framework for \cref{cond:general}.
For each vertex $v$, the resampling procedures of $B_v$ and $Z_v$ are the same -- to re-assign each vertex $u \in N_v$ a uniformly chosen color from $L_v(\sigma)$.
The flaw choice strategy $\pi$ can be any permutation that prioritizes $B$-flaws over $Z$-flaws.
% Flaws $B_u$ and $Z_v$ are ordered by the index of $u$ and $v$, respectively.
We show in \cref{sec:app-tri-convergence} that the above setup satisfies \cref{cond:general}.
Hence, a flawless state can be found efficiently.

\paragraph{Phase 2.}
% Phase 2 completes $\sigma_1$ by solving the following subproblem:
Let $V'$ be the set of blank vertices in $\sigma_1$.
Denote by $N_v$ the neighbor set of $v$ in $G$.
% Each vertex $v \in V'$ is associated with a color list $:'_v(\sigma_1) = L_v(\sigma_1) \backslash \{ \blank \}$.
% Let $G'$ be the subgraph of $G$ induced by $V'$.
If $G$ is a static graph, $\sigma_1$ can be completed by a simple greedy approach:
Iterate through each vertex $v \in V'$, and color $v$ with any color in $L_v(\sigma_1) \setminus \{\blank\}$ that has not been chosen by any other vertex in $N_v \cap V'$.
Since every vertex has at least $L - 1$ colors to choose from and only has $L - 2$ blank neighbors, the approach must successfully find a proper coloring of $G$
% First, compute a list coloring $\sigma_2$ for $G'$ w.r.t. the color lists $C'(\cdot)$.
% Next, color each blank vertex $v$ with $\sigma_2(v)$.

\paragraph{Dynamic setting.}
In our case, $G$ is a dynamic graph, and we have to maintain $\sigma_1$ and $\sigma_2$ under edge updates.
Since Phase 1 satisfies \cref{cond:general}, it can be turned into a dynamic algorithm for maintaining $\sigma_1$.
When an edge update is received, we first translate it into $O(1)$ flaw updates to Phase 1.
The algorithm will then resample the colors of several vertices until $\sigma_1$ becomes flawless again.
As we will show, when amortized over all updates, Phase 1 requires $\tO(\Delta^3)$ time and recolors $\tO(\Delta)$ vertices.

The change in Phase 1 will modify $V'$ and $L_v(\sigma_1)$ for $\tO(\Delta^2)$ vertices $v$.
These vertices are treated as the input to Phase 2.
In Phase 2, we recolor these vertices using the greedy approach.
We will show that Phase 2 requires an amortized $\tO(\Delta^3)$ time.
Hence, the algorithm has an amortized $\tO(\Delta^3)$ upper bound on update time.

% Let $\Omega'$ be the set of proper colorings for $G'$, that is, $\prod_{v\in V'} C'_v$.
% For each edge $(u, v) \in E'$ and each color $c \in C'(u) \cap C'_v$, we define the following flaws:
% \begin{itemize}
%     \item $A_{uv, c} = \{\sigma \in \Omega' \mid \sigma(u) = \sigma(v) = c\}$.
% \end{itemize}
% Again using an LLL-based argument, it can be shown that a flawless state exists, showing that a proper coloring for $G'$ exists.
% Consequently, a $(2+\eps)\Delta/\ln\Delta$ list-coloring for $G$ exists.

\subsection{Details for Phase 1}
We use the framework in \cref{alg:app} to maintain a flawless partial coloring $\sigma_1$ in Phase 1.
We assume that the framework supports an additional operation: replacing an existing flaw $f$ by a new flaw $f'$.
This operation can be implemented by first removing $f$ and then adding $f'$.

\paragraph{Data structure setup:}
Assume $V = \{1, 2, \dots, n\}$ and $G$ is given as adjacency lists, i.e., $N_v$ of each vertex $v$.
Note that $v \notin N_v$, and since $G$ is triangle-free, $N_v$ is an independent set.
A flaw $f$ is represented by the ID of $v(f)$ and a binary variable indicating whether it is a $B$-flaw.
The color list $C_v$ of a vertex $v$ is given as an array of size $\ell$.
Phase 1 maintains the following:
\begin{itemize}
    \item for each vertex $v$, the neighbor set $N_v$,
    % and a set $N'_v(\sigma_1)$ containing all blank neighbors of $v$,
    \item the color list $L_v(\sigma_1)$ for each $v$,
    % \item for each vertex $v$, a BBST $\cnt_v$ that maps each color $c \in C_v$ to the number of neighbors of $v$ having color $c$,
    % \item the set of blank vertices $V'$,
    \item the current set of flaws $\cF$ for graph $G$, and
    \item the flawless partial coloring $\sigma_1$.
    % \item For each vertex $v$, we also store the The subgraph $G'$, represented by its vertex set $V'$ and its adjacency list. The adjacency lists are stored in BBSTs.
    % \item The color lists $C'(\cdot)$. Recall that $C'(\cdot) = L_\cdot(\sigma_1) \backslash \{\blank\}$. In our analysis, we often omit the maintenance of $C'(\cdot)$, because they can be maintained in the same way as $L_\cdot(v)$.
\end{itemize}
We store $N_v$ and $L_v(\sigma_1)$ as BBSTs, so that insertion and deletion take $\tO(1)$ time.
The coloring $\sigma_1$ is stored as an array of size $n$.

\paragraph{Framework specification.}
We proceed to describe the implementation of the key procedures in \cref{alg:app}.
\begin{itemize}
    \item Given two flaws $f_1, f_2$, $\algCompare(f_1, f_2)$ is done by comparing their IDs.
    Recall that we only require all $B$-flaws to be prioritized over $Z$-flaws.
    This can be done in $O(1)$ time.

    \item Given a flaw $f$, $\algCheck(f, \sigma_1)$ is done by checking, for each $u \in N_v$, $|L_u(\sigma_1)|$ and the color of $u$.
    This requires $O(\Delta)$ time.

    \item Given a flaw $f$ of a vertex $v$, $\algResample(f, \sigma_1)$ first resamples the color of each $u \in N_v$ and then finds the list of flaws introduced.
    To do this, we iterate through each $u \in N(v)$ and recolor it using any randomly chosen color from $L_u(\sigma_1)$.
    Consider a vertex $u \in N_v$ whose color is changed from $c$ to $c'$ in the resampling.
    We first update whether $c$ and $c'$ are in $L_w(\sigma_1)$ for each $w \in N(u)$.
    This is done by checking whether $c$ and $c'$ are used by any neighbor of $w$.
    % Then, if $c$ or $c'$ is $\blank$ we also update $V'(\sigma_1)$ and $N'_v(\sigma_1)$.
    Hence, recoloring a vertex $u$ takes $\tO(\Delta^2)$ time, and recoloring all $u \in N_v$ takes $\tO(\Delta^3)$ time.

    Let $D_v$ be the set of vertices with distance at most 2 from $v$.
    Note that the resampling only changes the colors of vertices in $N_v$, and it only changes the color list $L_w(\sigma_1)$ for $w \in D_v$.
    Hence, at most $O(\Delta)$ vertices change their colors, and at most $O(\Delta^2)$ lists $L_w(\sigma_1)$ are changed.
    This also shows that the resampling can only introduce a flaw $B_u$ or $Z_u$ if $u$ is in $D_v$.

    Based on the observation, we compute the list of flaws introduced by invoking $\algCheck(B_w, \sigma_1)$ and $\algCheck(Z_w, \sigma_1)$ for each $w \in D_v$.
    This requires $O(\Delta)$ time on each of the $|D_v| = O(\Delta^2)$ vertices.
    The running time is therefore $O(\Delta^3)$.
\end{itemize}

\paragraph{Initialization.}
Initially, $G$ does not contain any edges and $\cF$ has no flaws.
We initialize $\sigma_1$ as any coloring without blank vertices, and set $\sigma_2 = \sigma_1$.
The initialization of $\{L_v(\sigma_1)\}_{v\in V}$ is straightforward.
This takes $\tO(n\ell)$ time.

Next, we insert the flaws $B_v$ and $Z_v$ for each $v \in V$, so that $\cF$ corresponds to the flaw set of an empty graph.
Note that the flaws $B_v$ and $Z_v$ are determined solely by $N_v$ -- $B_v$ contains any coloring $\sigma$ such that any vertex in $N_v$ has $|L_v(\sigma)| < L$, and $Z_v$ contains any coloring $\sigma$ such that at least $L$ vertices in $N_v$ are blank.
We do not compute these sets explicitly -- the flaws are specified as their IDs and the procedures $\algResample, \algCompare$, and $\algCheck$.

Since there are no edges, $\sigma_1$ is flawless and $\sigma_2$ is proper.
Overall, the initialization takes $\tO(n\ell)$ time and $2n$ flaw updates.
% In addition, $L_v(\sigma) = C_v$ for each vertex $v$, and $F = V' = \emptyset$.
% The initialization can be done in $O(\sum_{v \in V} |C_v|) = O(n \ell)$ time, which is linear in the input size.

\paragraph{Handling updates.}
To process an edge update $(u, v)$, the algorithm performs three steps:
\begin{enumerate}
    \item Update the auxiliary data structures ($N_v, L_v(\sigma_1)$, etc.).
    \item Update the set of flaws $\cF$ to reflect the change on $G$.
    \item Collect all vertices $w$ that changes its color or its list $L_w(\sigma_1)$ during Step 2, and send them as the input of Phase 2.
\end{enumerate}

The first step is to update $N_w$, and $L_w(\sigma_1)$ for $w \in \{u, v\}$ to reflect the change on $(u, v)$.
This can be easily done in $\tO(1)$ time.

We then proceed to the second step.
Recall that each flaw $B_w$ or $Z_w$ is a set of partial colorings that depends only on the neighbor set of $w$.
Therefore, only the flaws $B_v, Z_v, B_u$, and $Z_u$ are changed.
(Here, a flaw is considered changed if it contains a different set of partial colorings.)
Hence, the change can be seen as replacing four flaws (or eight flaw updates) in our algorithmic framework.
We prove the following lemma in \cref{sec:app-tri-convergence}, showing that the LLL condition is satisfied after each update.

\begin{lemma} \label{lem:tri-convergence}
After each flaw update, the current set of flaws $\cF$ satisfies \cref{cond:general} with parameters $\eps = \frac{1}{4}$ and $\psi_f = 2\Delta^{-3}$ for all flaws $f \in \cF$.
\end{lemma}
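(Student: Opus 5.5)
The plan is to bound the charges $\gamma_f^S \le \gamma_f$ of every flaw by roughly $\Delta^{-3}$, bound the size of every neighborhood $\Gamma_f$ by $O(\Delta^2)$, and plug both into \cref{cond:general} with the uniform choice $\psi_f = 2\Delta^{-3}$. Since both bounds depend only on the current graph (maximum degree at most $\Delta$, triangle-free) and not on the update history, they hold after every flaw update. \cref{obs:subset-convergence} takes care of intermediate states in which only some of the four replaced flaws are present.

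\textbf{Step 1 (reduction to a charge bound).} A resampling of $B_v$ or $Z_v$ changes only the colors of $N_v$, hence only the lists $L_w$ for $w\in D_v$. Therefore $\Gamma_f\subseteq\{B_w,Z_w : w\in D_v\}$, and $|\Gamma_f|\le 2(1+\Delta+\Delta^2)\le 3\Delta^2$. Using monotonicity $\gamma_f^S\le\gamma_f$ gives
\[
\frac{1}{\psi_f}\sum_{S\subseteq\Gamma_f}\gamma_f^S\Psi(S)\le \frac{\gamma_f}{\psi_f}(1+2\Delta^{-3})^{3\Delta^2}\le \frac{\gamma_f}{2\Delta^{-3}}\,e^{6/\Delta}.
\]
With $\Delta\ge 100$, it therefore suffices to show $\gamma_f\le \Delta^{-3}$, which makes the left side at most $\tfrac12 e^{0.06}<\tfrac34$.

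\textbf{Step 2 (computing the charge).} Take $\mu$ uniform on $\Omega$. Fix a target state $\tau$. Any $\sigma\in \InSet_f(\tau)$ agrees with $\tau$ outside $N_v$. Because $G$ is triangle-free, $N_v$ is independent, so each list $L_u(\sigma)$ with $u\in N_v$ depends only on colors outside $N_v$, and thus $L_u(\sigma)=L_u(\tau)$. Consequently $\rho_f(\sigma,\tau)=\prod_{u\in N_v}1/|L_u(\tau)|$, and summing over $\sigma$ gives
\[
\gamma_f\le\max_{\tau}\Pr\bigl[\text{$f$ holds after each $u\in N_v$ independently picks a uniform color from }L_u(\tau)\bigr].
\]
This has the same shape as the static AIS/Molloy computation.

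\textbf{Step 3 (the probabilistic estimates).} For $Z_v$, I would use the flaw selection rule: $\pi$ prioritizes $B$-flaws, so when $Z_v$ is selected no $B_u$ is present, and $|L_u|\ge L$ for all $u$. Only such $\sigma$ belong to $\InSet_{Z_v}$, so each neighbor becomes blank with probability at most $1/L$. Then
\[
\Pr[\ge L-1\text{ blanks}]\le\binom{\Delta}{L-1}L^{-(L-1)}\le\bigl(e\Delta/(L(L-1))\bigr)^{L-1}=\Delta^{-\Omega(L)},
\]
which is far below $\Delta^{-3}$.

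For $B_v$, the lists may be small, so I would follow Molloy's argument. A color $c$ survives in $L_v$ with probability $\prod_{u: c\in L_u}(1-1/|L_u|)$. Since $\sum_c\sum_{u}\mathbf 1[c\in L_u]/|L_u|\le\Delta$, convexity (AM--GM/Jensen) gives
\[
\E|L_v|\ge \ell\exp\bigl(-(1+o(1))\Delta/\ell\bigr)\approx\frac{6\Delta}{\ln\Delta}\,\Delta^{-1/6}\gg \Delta^{0.7}=L.
\]
Here $1-1/|L_u|\ge 1/2$ because $\blank\in L_u$ and we may assume $|L_u|\ge2$; otherwise $u$'s choice is forced and contributes nothing. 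The last step is concentration: $|L_v|$ is a function of $\le\Delta$ independent choices, each affecting it by at most $1$. A Talagrand or typical-bounded-differences inequality (as in Molloy), or a direct Chernoff bound on the negatively correlated indicators of surviving colors, should give $\Pr[|L_v|<L]\le\exp(-\Delta^{\Omega(1)})\le\Delta^{-3}$.

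I expect this concentration step to be the main obstacle, since plain bounded differences over $\Delta$ variables only gives $\exp(-E^2/\Delta)$ with $E\approx\Delta^{0.83}$. I would first try the negative-association Chernoff bound on surviving colors, and fall back to Talagrand with certificates of size $O(1)$ per removed color.
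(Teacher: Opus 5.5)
Your proposal is correct and takes essentially the same route as the paper: neighborhoods confined to $D_v$, the reduction $\gamma_f^S\le\gamma_f$, the symmetry from $L_u(\sigma)=L_u(\tau)$ for $u\in N_v$ (independence of $N_v$), the $B$-before-$Z$ priority to force $|L_u|\ge L$ in the $Z_v$ bound, a Molloy-style expectation-plus-concentration bound for $B_v$, and \cref{obs:subset-convergence} for the intermediate flaw sets. The one difference is that the paper handles the $B_v$ step by citing Molloy's Lemma~7 ($\E[X]\ge \ell e^{-\Delta/\ell}$ and $\Pr[X<\E[X]/2]<e^{-\E[X]/8}$). Your expectation bound follows exactly, without any $1+o(1)$, from $(k-1)\ln\frac{k}{k-1}\le 1$ and $\blank\in L_u$. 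The concentration step you flag as the main obstacle is not one here: since $\E[X]\ge 6\Delta^{5/6}/\ln\Delta$, plain bounded differences already gives $\exp(-\Omega(\Delta^{2/3}/\ln^2\Delta))\le\Delta^{-3}$ for large $\Delta$.
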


The flaw updates will execute \algResample several times and reach a flawless state again.
In the finaly step, the algorithm collects the set of all vertices $w$ such that $\sigma_1(w)$ or $L_w(\sigma_1)$ changes by any of the resampling.
This set, denoted by $W$, is sent to Phase 2 as the input.

% By the result in \cref{sec:framework}, the algorithm only needs amortized $\tO(1)$ flaw-addressing steps to make $\sigma_1$ flawless again.

% The modification is presented as follows.
% For any partial coloring $\sigma$ (which may not be $\sigma_1$), the lists $L_u(\sigma)$ and $L_v(\sigma)$ are changed by at most one color.
% Therefore, $B_u$ and $B_v$ are changed.
% Similarly, the flaws $Z_u$ and $Z_v$ are changed due to changes in their neighbor sets.
% It is not hard to see that all other flaws stay the same, as the neighbor sets of all other vertices are unchanged.
% Hence, the change can be seen as replacing four flaws in the framework of \cref{sec:framework}.

% \begin{lemma}
% \label{lem:app1:phase1-framework}
%     Let $\langle F_1, F_2, \dots F_t \rangle$ denote a sequence of flaw updates made corresponding to edge deletions. We have two different kinds of flaws, $B_v$ and $Z_v$ for each vertex $v$.
%     Assuming the guarantee that the edge updates yield a graph of maximum degree $\Delta$
% \end{lemma}

\paragraph{Running time and recourse.}
By the above analysis, Phase 1 requires $O(n \ell)$ time to initialize, and the update time is dominated by the time spent in updating the flaws.
In total, the algorithm generates $O(n + q)$ flaw updates to the framework, and the mapping from edge updates to flaw updates is deterministic.
Therefore, the number of possible update sequences is at most $\zeta = (2n^2)^q$.
By applying (A2) of \cref{thm:application} with $\zeta = (2n^2)^q, |\Omega| = \ell^n, d = \max_{v \in V}|D_v| \leq \Delta^2 + 1, \psi_{\min} = \psi_{\max} = 2\Delta^{-3}, \eps = 1/4, \TResample = \tO(\Delta^3), \TCheck = O(\Delta), \TCompare = O(1)$, and the number of updates is $O(n + q)$, we obtain the following.

\begin{lemma} \label{lem:tri-phase1}
Phase 1 maintains a flawless partial coloring $\sigma_1$ using $\tO((n + q) \cdot \Delta^3)$ time. In its execution, \algResample is invoked $\tO(n + q)$ times.
\end{lemma}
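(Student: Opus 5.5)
The lemma follows by plugging the parameters of Phase 1 into (A2) of \cref{thm:application}, taking the convergence guarantee from \cref{lem:tri-convergence}. Everything reduces to bounding the quantities $R'$, $d$, $\TResample$, $\TCheck$ and $\TCompare$ that appear there.

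\textbf{Step 1: the update sequence.} Initialization issues $2n$ flaw insertions. Each edge update is then translated into $8$ flaw updates: four flaws $B_u,Z_u,B_v,Z_v$ are removed and reinserted. So the number of flaw updates is $q' = 2n + 8q = O(n+q)$. This translation is a deterministic function of the edge-update sequence, and the dummy-flaw remark in \cref{sec:applications} records the graph changes the resampling oracles depend on. Hence the number of possible flaw-update sequences is at most the number of edge-update sequences, giving $\zeta \le (2n^2)^q$ and $\log \zeta = O(q\log n)$.

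\textbf{Step 2: the LLL parameters.} By \cref{lem:tri-convergence}, after every flaw update \cref{cond:general} holds with $\eps = 1/4$ and $\psi_{\min}=\psi_{\max}=2\Delta^{-3}$. Therefore
\[
\log\frac{1+\psi_{\max}}{\psi_{\min}} = O(\log\Delta).
\]
Also $\log|\Omega| = n\log(\ell+1) = O(n\log\Delta)$. Substituting these into the definition of $R'$ gives
\[
R' = O\big(n\log\Delta + (n+q)\log\Delta + q\log n\big) = \tO(n+q).
\]

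\textbf{Step 3: per-operation costs.} A resampling of a flaw at $v$ changes colours only in $N_v$, and therefore changes the lists $L_w(\sigma_1)$ only for $w\in D_v$. So it can introduce only flaws $B_w,Z_w$ with $w\in D_v$, which gives $d = O(\Delta^2)$. From the framework specification we have $\TResample = \tO(\Delta^3)$, $\TCheck = O(\Delta)$ and $\TCompare = O(1)$. Then (A2) bounds the total time by
\[
\tO\big(R'\cdot\Delta^3 + R'\cdot\Delta^2\cdot\Delta\big) = \tO((n+q)\Delta^3),
\]
with probability $1-2^{-q}$ against a clairvoyant adversary.

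\textbf{Remaining contributions.} The number of calls to \algResample is $O(R') = \tO(n+q)$; this is exactly the count controlled in the proof of (A2), which unions the bound of \cref{thm:gen-dynLLL} over all $\zeta$ sequences. Two costs lie outside the framework and are added directly: the $\tO(n\ell)$ initialization and the $\tO(1)$ per-update maintenance of $N_w$ and $L_w(\sigma_1)$. Since $\ell \le \Delta^3$, both are absorbed into $\tO((n+q)\Delta^3)$.

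The main point needing care is Step 1. Because the adversary acts on edges, we must argue that the induced flaw sequence, including the replace operations and the identities and oracles of the flaws, is determined by the edge sequence. Only then is $\zeta \le (2n^2)^q$ valid, and this relies on the dummy-flaw convention.
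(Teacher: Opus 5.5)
Your proposal is correct and follows essentially the same route as the paper. Both apply (A2) of \cref{thm:application} with $\zeta=(2n^2)^q$, $\psi_{\min}=\psi_{\max}=2\Delta^{-3}$, $\eps=1/4$, $d=O(\Delta^2)$, $\TResample=\tO(\Delta^3)$, $\TCheck=O(\Delta)$, $\TCompare=O(1)$ and $O(n+q)$ flaw updates, taking the convergence guarantee from \cref{lem:tri-convergence}; your explicit treatment of the $\tO(n\ell)$ initialization, the $(\ell+1)^n$ size of $\Omega$, and the determinism of the edge-to-flaw translation just spells out points the paper states in one sentence.
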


% We first spend $\tO(1)$ time to update $L_u(\sigma_1)$ and $L_v(\sigma_1)$.
% Next, we check whether each of $B_u, B_v, Z_u, Z_v$ should be added to $F$. This is done by examining $u, v$ and their neighbors, which requires $O(\Delta)$ time.
% We then address the flaws in $F$ until it becomes empty.

% Addressing a flaw of $v$ is done by resampling the color of each neighbor of $v$.
% When we change the color of a vertex $u$, the lists $L_w(\sigma_1)$ of all its neighbors $w$ are also changed by adding and removing at most one color.
% Therefore, a resampling operation changes the color of at most $\Delta$ vertices and modifies the list $L_w(\sigma_1)$ of at most $O(\Delta^2)$ vertices $w$.
\noindent Since the algorithm only calls \algResample $\tO(n + q)$ times, the overall recourse is bounded as follows:
\begin{itemize}
    \item In the execution of Phase 1, $\tilde{O}((n + q) \cdot \Delta)$ vertices change their colors, and
    \item $\tilde{O}((n + q) \cdot \Delta^2)$ vertices $w$ change their $L_w(\sigma_1)$.
\end{itemize}

% Changing a color requires $O(1)$ time, and updating an $L_w(\sigma_1)$ takes $\tilde{O}(1)$ time.
% Thus, the amortized running time is $\tilde{O}(\Delta^2)$.

\subsection{Details for Phase 2}
We maintain a proper coloring $\sigma_2$ obtained by completing $\sigma_1$.
Initially, $G$ is empty and $\sigma_1$ is any proper coloring, and therefore we simply set $\sigma_2 = \sigma_1$.

When an edge update comes, Phase 1 will process the change and provide a set $W$ of vertices that either change their color or color list to Phase 2.
% In addition, Phase 1 maintains the set of blank neighbors $N'_v(\sigma_1)$ for each vertex $v$.
Phase 2 processes the change by recoloring three types of vertices in $\sigma_2$:
\begin{enumerate}
    \item the vertices that change their colors in $\sigma_1$, and the new color is not $\blank$,
    \item the vertices that change their colors in $\sigma_1$, and the new color is $\blank$, and
    \item the blank vertices $v$ that do not change their colors, but $L_v(\sigma_1)$ is changed.
\end{enumerate}
These vertices form a subset of $W$.
For the first type of vertices, we simply set $\sigma_2(v) = \sigma_1(v)$.
We then iterate through the other type of vertices, and for each vertex $v$, select the first non-blank color from $L_v(\sigma_1)$ that is not used by any $u \in N(v)$.
Note that such a color must exist, because $\sigma_1$ is flawless (that is, each vertex has at least $L - 1$ non-blank colors in $L_v(\sigma_1)$ and at most $L - 2$ blank neighbors).
Recoloring a vertex takes $\tO(\Delta)$ time.
By the recourse bound for Phase 1, in the whole execution Phase 2 only recolors $\tO((n + q) \cdot \Delta^2)$, hence spending $\tO((n + q) \cdot \Delta ^3)$ time.
Combining with \cref{lem:tri-phase1}, we obtain the following.

\begin{theorem}
Given a fully dynamic triangle-free graph, a list-coloring of $\frac{6\Delta}{\ln \Delta}$ colors can be maintained under edge updates in $\tO(\Delta^3)$ amortized update time with probability $1 - 2^{-q}$ against a clairvoyant adversary.
\end{theorem}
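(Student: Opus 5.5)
The plan is to assemble the theorem from \cref{lem:tri-phase1} and the Phase~2 recourse accounting, and to spend the real effort on \cref{lem:tri-convergence}, on which \cref{lem:tri-phase1} rests. First I would check that the amortized bound follows. Each edge update is translated deterministically into $O(1)$ flaw updates, namely replacing $B_u,Z_u,B_v,Z_v$. The $2n$ initialization insertions add only $O(n)$ flaw updates. Hence the number of distinct flaw-update sequences is at most $\zeta=(2n^2)^{O(n+q)}$, so $\log\zeta=\tO(n+q)$. With $\psi_{\min}=\psi_{\max}=2\Delta^{-3}$, $\eps=1/4$, $|\Omega|=(\ell+1)^n$ and $d=O(\Delta^2)$, (A2) of \cref{thm:application} gives $R'=\tO(n+q)$ resamplings. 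The total Phase~1 time is then $\tO((n+q)\Delta^3)$, with probability $1-2^{-q}$ against a clairvoyant adversary. The $n$-dependent terms are charged to the initialization, which already costs $\tO(n\ell)$, so the amortized cost per edge update is $\tO(\Delta^3)$.

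Next, Phase~2. Each resampling recolors at most $\Delta$ vertices and changes $L_w$ for at most $O(\Delta^2)$ vertices $w$. So $|W|$ summed over all updates is $\tO((n+q)\Delta^2)$. Each greedy recoloring costs $\tO(\Delta)$ by scanning $N_w$ and $L_w(\sigma_1)$. For correctness I would argue that after every update $\sigma_2$ is proper, via the invariant that every blank vertex of $\sigma_1$ gets in $\sigma_2$ a color from $L_v(\sigma_1)\setminus\{\blank\}$ distinct from its blank neighbors' colors. Vertices that are non-blank in $\sigma_1$ keep their $\sigma_1$ color, and those colors avoid every color in $L_v(\sigma_1)$ used by neighbors. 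Flawlessness gives at least $L-1$ non-blank choices against at most $L-2$ blank neighbors, so the greedy step always succeeds. The subtle point is that a blank vertex whose list is unchanged but whose blank neighbor was recolored is never in conflict, since that neighbor picks a color avoiding it.

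The main obstacle is \cref{lem:tri-convergence}. The task is to bound $\gamma_f^S$ for the resampling that recolors $N_v$ uniformly from the current lists, and then verify $\frac1{\psi_f}\sum_{S\subseteq\Gamma_f}\gamma_f^S\Psi(S)\le 3/4$. Following Molloy and AIS, I would take $\mu$ uniform over partial colorings, where each vertex is colored from its current list. I would first show $\gamma_{Z_v}$ and $\gamma_{B_v}$ are at most roughly $\Delta^{-4}$ by Molloy's concentration argument. Triangle-freeness makes $N_v$ independent, so the colors of neighbors are independent given the outside. Then $|L_v|<L$ or at least $L-1$ blank neighbors has probability $e^{-\Omega(L)}$, or a $\Delta^{-c}$ bound from the counting in \cite{molloy2019list}. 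The incoming-state count $|\InSet|$ divided by the product of list sizes gives the charge. Next I would apply monotonicity $\gamma^S\le\gamma$ and use $|\Gamma_f|\le 2(\Delta^2+1)$. Then the sum is at most
\[
\gamma_f\,(1+2\Delta^{-3})^{2\Delta^2+2}/(2\Delta^{-3})\le \tfrac{\Delta^3}{2}\gamma_f e^{O(1/\Delta)},
\]
so it suffices to have $\gamma_f\le\Delta^{-3}$ for $\Delta\ge100$. I would first try to obtain that charge bound directly from Molloy's proof with $L=\Delta^{0.7}$ and $\ell=6\Delta/\ln\Delta$. If the constants are tight, I would use the slack in $6$ versus $1+o(1)$.
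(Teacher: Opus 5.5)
Your overall assembly is the paper's. Applying (A2) of \cref{thm:application} with $\psi_{\min}=\psi_{\max}=2\Delta^{-3}$, $\eps=1/4$ and $d=O(\Delta^2)$ gives \cref{lem:tri-phase1}. Phase~2 recolors $\tO((n+q)\Delta^2)$ vertices at $\tO(\Delta)$ each. Once every charge is at most $\Delta^{-3}$, your check that $\frac12\exp\bigl((2\Delta^2+2)\cdot 2\Delta^{-3}\bigr)\le 3/4$ is exactly the paper's.

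The gap is in the charge bound for $Z_v$. Your argument is that the neighbours' new colours are independent, so $L-1$ blank neighbours is unlikely. That needs every $u\in N_v$ to have a large list, $|L_u|\ge L$, which is the hypothesis of Molloy's Lemma~15, and nothing in your plan supplies it. Without it the bound is false. Suppose one ignores which flaw is selected, and $\ell<\Delta$. Take $\tau$ in which all of $N_v$ is blank with $L_u(\tau)=\{\blank\}$. Then $\tau\in\InSet_{Z_v}(\tau)$ with $\rho_{Z_v}(\tau,\tau)=1$, so $\gamma_{Z_v}\ge 1$.

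The missing idea is the flaw-selection rule, which your proposal never uses:
\begin{itemize}
\item $\InSet_{Z_v}(\tau)$ contains only states at which $Z_v$ is actually \emph{selected}. Since $\pi$ ranks every $B$-flaw above every $Z$-flaw, each such $\sigma$ has $|L_u(\sigma)|\ge L$ for all $u$.
\item Because $N_v$ is independent, $L_u(\sigma)=L_u(\tau)$ for $u\in N_v$. So $\InSet_{Z_v}(\tau)=\emptyset$ unless all $|L_u(\tau)|\ge L$.
\item The same list equality gives $\rho_{Z_v}(\sigma,\tau)=\Pr[\algRecolor(\tau,v)=\sigma]$. Hence the charge is at most $\Pr[\algRecolor(\tau,v)\in Z_v]\le\binom{\Delta}{L-1}/L^L\le\Delta^{-3}$.
\end{itemize}
The same reversal, followed by Molloy's Lemma~7, gives the $B_v$ bound (part (c) of \cref{lem:tri-charge}). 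This is why the algorithm requires $B$-flaws to precede $Z$-flaws.

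There are two smaller points. First, \cref{cond:general} must hold after each of the eight individual flaw updates that realize one edge update. The intermediate flaw sets are not the flaw set of any graph, so your static verification does not cover them. The paper handles this by noting that each intermediate set is a subset of the flaw set before or after the edge update, and applying \cref{obs:subset-convergence}.

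Second, your amortization step does not work as stated. The $n$-dependent part of the resampling bound comes from $\log|\Omega|$ and the $2n$ initialization insertions, and it costs $\tO(n\Delta^3)$ time. That cannot be charged to an $\tO(n\ell)$ initialization. As in the paper, the total is $\tO((n+q)\Delta^3)$. This is $\tO(\Delta^3)$ per operation only if the $\Theta(n)$ initialization insertions are counted as operations, or if $q=\Omega(n)$.
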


% \subsubsection{Extension to $(1+o(1)) \Delta / \ln \Delta$ coloring}
% In Phase 1, we need to use a more complicated flaw $Z'_v$, which is defined based on $T_{v, c}$.
% With this modification, the list coloring problem in Phase 2 satisfies the variable-setting LLL.
% However, we should note that:
% \begin{itemize}
%     \item When adding or removing an edge in Phase 1, at most $\Delta$ flaws can be changed.
%     This increases the running time by $\Delta$.
%     \item In the dynamic setting, we may change $L_v(\sigma_1)$ in Phase 1.
% This corresponds to changing the domains of some variables in Phase 2.
% When a variable changes domain, we can handle it by adding a new variable.
% It should be possible to modify the MT framework to allow additions of variables.

% \end{itemize}

% Since the amortized recourse of Phase 1 is $\tilde{O}(\Delta)$, the number of vertices in $G'$ is 

% which can be solved using the framework of Moser and Tardos.

% \subsubsection{Possible generalizations} 
% \begin{itemize}
%     \item list coloring
%     \item high-girth graphs or limited number of edges in neighborhood graphs
%     \item correct or best-known constant?
% \end{itemize}

\section{Dynamic routing schedules}
\label{sec:LMR-scheduling}

The Leighton–Maggs–Rao (LMR) \cite{Leighton1994} scheduling result addresses the problem of routing packets or messages through a network.
The network is modeled as a directed graph $G=(V,E)$, where vertices represent processors and edges represent communication links.
The input is a collection of paths, each corresponding to a packet that must be routed from a source vertex to a destination vertex along the specified path in the network.
Two critical parameters of this routing scenario are:
\begin{itemize}
    \item  Congestion ($C$): The maximum number of paths/messages that traverse any single edge in the network.
    \item  Dilation ($D$): The maximum length of any of these paths, measured as the number of edges.
\end{itemize}

The scheduling problem is to determine an efficient schedule that delivers all messages along their prescribed routes while minimizing the total completion time, assuming that each edge can transmit at most one message per time step.
Clearly, both $C$ and $D$ are lower bounds on the completion time. Leighton, Maggs, and Rao proved that any set of messages/paths with congestion $C$ and dilation $D$ can be scheduled and completed in $O(C+D)$ time steps.

They first give a randomized way to achieve an $O(C + D \log n)$ schedule via random delays and a union bound over all paths.
Then use the LLL to make the union bound ``local'' and therefore remove the dependence on $n$, giving an $O(C + D \log CD)$ schedule.
By repeated LLL applications this can be reduced further to $O(C + D \log \log CD)$ and more generally to $O((C + D \log^{(i)} CD) \cdot 2^{O(i)})$.
The arguments on how to get to the full $O(C+D)$ result are rather involved for LMR but a later drastic simplification by Rothvoss~\cite{rothvoss2013simpler} gives a beautiful proof that fits into the MT-framework with $pde < 1$ convergence and constant slack.

% (Each bad event depends on at most $OPT^2$ many other bad events where $OPT = O(C + D)$. There are $n*OPT^2$ many such events, and their probability is $OPT^{-10}$ applied only $O(\log \log n)$ times.)
% One can also think of it as a lopsided (ordered by type) single LLL instance. 

%  We should therefore be able to get the $O(C + D)$ result dynamically with some poly OPT computational overhead, i.e.: 

% Suppose an edit adds or removes a new route (of length at most D and causing congestion at most C with the currently active routes). The goal is to maintain a schedule for each packet which is valid and has low completion time, while hopefully only investing a small amount of work to update the schedule as new routes come in. 

\paragraph{Dynamic setting.}
We extend \cite{rothvoss2013simpler}'s result to the dynamic settings where it is allowed to add or remove paths.
The problem is formally defined as follows.
The input consists of a directed graph $G = (V, E)$, a parameter $D$, and a sequence of updates, where each update inserts or deletes a path.
Let $n = |V|$ and $m = |E|$.
The network may contain parallel edges and self-loops, and each path may visit each vertex multiple times.
Throughout, the congestion and dilation never exceed $D$.
The \emph{makespan} of a schedule is the last time point where a packet arrives at its destination.
The goal is to maintain a schedule with $O(D)$ makespan.
% For a sequence of $q$ updates, the amortized update time of our algorithm will be $O(\poly(D, \log q))$.
We call this problem \emph{the dynamic scheduling problem}.

After adding dummy edges, one can assume that every path has exactly $D$ edges.
% Suppose that the current problem instance contains $k$ paths.
% Recall that each path $P_i$ corresponds to a packet $i$.
% % We represent a solution by a $k \times D$ matrix $\beta$, where $\beta_{i,j}$ is the time point that packet $i$ should cross the $j$-th edge on $P_i$.
% % Such a solution is called a \emph{schedule}.
In our analysis, we allow a schedule to have multiple packets passing through a single edge simultaneously.
The \emph{load} of a schedule is the maximum number of packets crossing an edge at any time point.
A schedule $S$ with load $c$ and makespan $s$ can be converted to another schedule $S'$ with load 1 and makespan $c \cdot s$, by simulating each step in $S$ with $c$ steps in $S'$.
Therefore, it suffices to maintain a schedule with $O(D)$ makespan and $O(1)$ load.

Without loss of generality, we assume that the algorithm knows $q$.
The case of unknown $q$ can be addressed by a standard doubling argument, in which we guess an upper bound $\hat{q} \geq q$ and double the guess when $q$ exceeds $\hat{q}$.
% If $q$ is unknown, the algorithm can guess an upper bound $\hat{q}$ on $q$ and operate as if $q = \hat{q}$;
% when the number of updates exceeds $\hat{q}$, we set $\hat{q}$ as $2\hat{q}$ and rerun the algorithm through all updates from scratch.
% By a standard argument, it can be shown that the time complexity is increased by only a constant factor.

\subsection{Review of \texorpdfstring{\cite{rothvoss2013simpler}}{Rothvoss}'s construction} \label{sec:scheduling-review}
This section gives a high-level overview of \cite{rothvoss2013simpler}'s proof for the static setting, in which the input is a static set of $q$ paths on a graph $G$.
The construction is via $\Theta(\log \log n)$ applications of \LLL.

\paragraph{Path dissection.}
Consider a packet $i$.
We partition its path $P_i$ into a laminar family of \emph{blocks} such that a block in \emph{level $\ell$} contains $D_\ell = D^{0.5^\ell}$ consecutive edges in $P_i$.
The dissection is stopped when the blocks in the last level have length between $\tau$ and $\tau^2$, where $\tau$ is a sufficiently large constant.
% See {\color{red}Figure ???} for an example.
% We assume that $D > \tau^2$ so there are at least two levels.
It will be convenient to assume that each $D_\ell$ is an integer and $D_{\ell}$ divides $D_{\ell-1}$; all calculations in the analysis will have enough slack so that one can replace each $D_\ell$ by the nearest power of two if the assumption does not hold.

Denote by $L = \Theta(\log \log n)$ the index of the last level.
The resulting laminar family has $L+1$ levels, where level $0$ contains a single block ($P_i$ itself) and each block in level $\ell$ has children of length $D_{\ell+1} = \sqrt{D_\ell}$ in level $\ell + 1$.
Each block has two \emph{boundary nodes}, the \emph{start node} and the \emph{end node}.

\paragraph{Construction of schedules.}
For $\ell \in [0, L]$, define
\begin{equation} \label{eqn:def-waiting-times}
    W_\ell = \begin{cases}
        D_\ell, & \ell = 0 \\
        D_\ell^{1/4}, & \text{otherwise.}
    \end{cases}    
\end{equation}
The schedule of a packet $i$ is as follows.
% Let $\range(a, b)$ denote $\{a, a+1, \dots, b\}$ and $\range(r)$ denote $\{1, 2, \dots, r\}$.
For each level $\ell$ block $b$, we assign uniformly and independently an integer $\alp_\ell(i, b)$ from $[1, W_\ell]$.
\footnote{Here, for ease of notation, we use $[a, b]$ to denote $\{a, a+1, \dots, b\}$ instead of the set of real numbers between $a$ and $b$. All levels and times mentioned in this section are integers.}
At each level $\ell$ block $b$, the packet $i$ waits $\alp_\ell(i, b)$ units of time at the start node and $W_\ell - \alp_\ell(i, b)$ time at the end node.
% (See {\color{red}Figure ???}.)

The policy has several properties:
First, observe that the time $t$ at which a packet $i$ crosses an edge $e$ is a random variable that depends only on the random waiting times of the blocks containing $e$ -- that is, only one block from each level.
Second, the total waiting time for a packet is exactly $T = \sum_{\ell=0}^{L} W_\ell \cdot \frac{D}{D_\ell} = O(D)$.
Hence, each packet traverses its path in exactly $T + D$ time for any choice of waiting times.
This shows that any waiting times give a schedule of $O(D)$ makespan, but the load is left undetermined.

In the following, we show that there exists a proper choice of waiting times $\alp_0, \alp_1, \dots, \alp_L$ such that the resulting schedule has $O(1)$ load.
The approach is to formulate each level as an LLL instance, and fix the waiting times from level $0$ to $L - 1$.
The last level, i.e., level $L$, is a special case.
\cite{rothvoss2013simpler} showed that after all waiting times in $\alp_0, \dots, \alp_{L-1}$ are fixed, we can simply assign all level-$L$ waiting times arbitrarily.
The result will be a schedule of $O(1)$ load.
Hence, in the following, we only consider levels $\ell < L$.
% Let $\hat{\beta_\ell}$ denote the vector $(\beta_0, \beta_1, \dots, \beta_\ell)$. Define $\hat{\alp_\ell}$ similarly.

\paragraph{Variables and flaws.}
Consider a fixed level $\ell$.
The set of variables is $\{\alp_\ell(i, b) \mid i \in [q], \text{ $b$ is a } \allowbreak \text{level-$\ell$ block for packet $i$}\}$.
To ease the notation, denote this set simply by $\alp_\ell$.
The state space, denoted by $\Omega_{\ell}$, is the set of assignments for the variables in $\alp_\ell$.
Note that $|\alp_\ell| = q \cdot \frac{D}{D_\ell}$, and each variable takes value from $[1, W_\ell]$.

The definition of flaws is recursive.
Let $\cF_\ell$ denote the set of level-$\ell$ flaws.
In the construction, we will first define $\cF_0$ and find a flawless assignment $\sigma_0$ for $\alp_0$.
The set of flaws $\cF_1$ for level $1$ are defined based on $\sigma_0$.
We then find a flawless assignment $\sigma_1$ for $\alp_1$ and use $\sigma_0, \sigma_1$ to define $\cF_2$.
This process is repeated from level $0$ to $L$.
In general, we will define $\cF_{\ell}$ based on $\sigma_0, \sigma_1, \dots, \sigma_{\ell-1}$, and $\sigma_\ell$ is computed as a flawless assignment w.r.t. $\cF_{\ell}$.
After all levels are completed, the assignments form a solution, i.e., a schedule of $O(1)$ load.

The flaws are formally defined as follows.
An \textit{$\ell$-assignment} $\sigma = (\sigma_0, \sigma_1, \dots, \sigma_{\ell-1})$ is an assignment for all variables in $(\alp_0, \alp_1, \dots, \alp_{\ell-1})$, where the assignment for $\alp_x$ is denoted by $\sigma_x$.
Given an $\ell$-assignment $\sigma$, define $X_\ell(e, t, i)$ as a binary random variable indicating whether packet $i$ crosses edge $e$ at time $t$, conditioned on $\alp_x = \sigma_x$ for $x = 0, 1, \dots, \ell-1$.
Note that the definition of $X_\ell(e, t)$ depends on $\sigma$, which shall be the output of all previous levels.
In addition, the outcome of $X_\ell(e, t, i)$ depends on the random choices for $\alp_{\ell}, \alp_{\ell+1}, \dots, \alp_{L}$.
% At level $\ell \in [0, L-1]$, we assume that we have already fixed all waiting times in previous levels, i.e., $\alp_0 = \beta_0, \alp_1 = \beta_1, \dots, \alp_{\ell-1} = \beta_{\ell-1}$ for some $\beta_0, \beta_1, \dots, \beta_{\ell - 1}$.
% Denote by $X_\ell(e, t, i)$ a binary variable indicating whether packet $i$ crosses edge $e$ at time $t$, conditioned on $\alp_x = \beta_x$ for $x = 0, 1, \dots, \ell-1$.
% Note that the outcome of $X_\ell(e, t, i)$ is defined based on the random distribution for $\alp_{\ell}, \alp_{\ell+1}, \dots, \alp_{L}$.
Let $X_\ell(e, t) = \sum_{i=0}^k X_\ell(e, t, i)$ be the number of packet crossing edge $e$ at time $t$ under the same condition.
Let $\lambda_\ell \in [1, 2]$ be a real number that will be specified later.
A flaw $F_\ell(e, t)$ is defined as the set of assignments $\sigma_\ell$ for $\alp_\ell$ such that $\E[X_\ell(e, t) \mid \alp_\ell = \sigma_\ell] \geq \lambda_\ell$.
Intuitively, a flaw represents the following event: ``Suppose that we have fixed all waiting times in $\alp_0, \alp_1, \dots, \alp_{\ell-1}$ as $\sigma$, and now we attempt to choose an assignment for $\alp_\ell$. The current choice of $\alp_\ell$ is such that the expected load of edge $e$ at time $t$ is larger $\lambda_\ell$.''
For each level, the flaw choice strategy $\pi$ can be any permutation on $\cF_\ell$.

The exact value of $\lambda_\ell$ is not important for our purpose.
For completeness, we provide its definition: $\lambda_0 = 1$, and $\lambda_\ell = \lambda_{\ell-1} + \frac{1}{D_{\ell}^{-1/32}}$ for $\ell > 0$.
Note that $\lambda_\ell \in [1, 2]$.

% The analysis will focus on upper-bounding $\E[X_\ell(e, t)]$, the expected load of an edge $e$ at time $t$.
% Find the waiting times for $\alp_\ell$ such that none of the bad events happen, where the bad events are defined below.

\paragraph{Resampling procedures.}
Consider a level $\ell$ and assume that we already have flawless assignments $\sigma_0, \sigma_1, \dots, \sigma_{\ell-1}$ for all previous levels.
% Let $\cF_\ell$ be the set of flaws for level $\ell$.

% \paragraph{Proof structure.}
% The goal is to determine $\sigma_\ell$ for $\ell = 0,1,\dots,L$ such that the expected load $\E[X_\ell(e,t)]$ is low for every edge $e$ and time step $t$.
% We first derive an upper bound on the expected load \emph{before} any waiting time is fixed.
% To this end, we need the following lemma.

\begin{lemma}[{\cite[Lemma 3]{rothvoss2013simpler}}] \label{lem:scheduling-static1}
    Let $\ell \in \{0, \dots, L\}$ and condition on any fixed assignments $\sigma_0, \dots, \sigma_{\ell-1}$ for $\alp_0, \dots, \alp_{\ell - 1}$.
    Then, for any packet $i$, edge $e \in E$ and time $t \in [1, T]$, there are only $O(D_\ell)$ packets that still has a non-zero probability of passing $e$ at $t$, where the probability is taken over the random choice of $\alp_\ell, \alp_{\ell+1}, \dots, \alp_{L}$.
\end{lemma}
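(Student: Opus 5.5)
The approach is a timing argument. Conditioned on $\sigma_0,\dots,\sigma_{\ell-1}$, the time at which packet $i$ can cross $e$ is confined to a window of length $O(D_\ell)$. A load-counting argument then bounds how many packets can have such a window containing $t$.

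\textbf{Step 1 (the crossing time lives in a short window).} Fix a packet $i$ and an occurrence of $e$ on $P_i$; if $e$ occurs several times, treat each occurrence separately, which costs only a constant factor after the congestion bookkeeping. Let $b$ be the level-$\ell$ block of $P_i$ containing this occurrence. The schedule has a rigid structure: every level-$x$ block is traversed in exactly $D_x+\sum_{y\ge x} W_y D_x/D_y$ time steps, whatever its waiting times are. This is the same computation as $T+D$ in the paper, applied to a single block, and it is $O(D_x)$ because $W_y D_x/D_y$ decays geometrically in $y$. Hence the time $s_i$ at which packet $i$ reaches the start node of $b$ is a sum of fixed block durations plus the waits $\alp_x(i,\cdot)$ for $x<\ell$ of the blocks enclosing $b$. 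It is therefore a deterministic function of $\sigma_0,\dots,\sigma_{\ell-1}$. The actual crossing time of $e$ then lies in $[s_i, s_i+c D_\ell]$ for an absolute constant $c$, whatever values $\alp_\ell,\dots,\alp_L$ take. So if packet $i$ has nonzero probability of crossing $e$ at $t$, then $s_i\in[t-cD_\ell,t]$.

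\textbf{Step 2 (counting the packets).} Let $P$ be the set of packets with $s_i\in[t-cD_\ell,t]$ for some occurrence of $e$. For $\ell=0$ we have $D_0=D$, and $|P|$ is at most the congestion, which is at most $D=O(D_0)$. For $\ell\ge1$ I will use that $\sigma_{\ell-1}$ is flawless for $\cF_{\ell-1}$, since that is how the levels are built. Reading ``any fixed assignment'' literally cannot work for $\ell\ge 1$: congestion alone only gives a bound of $D$. Flawlessness of $\sigma_{\ell-1}$ gives $\E[X_{\ell}(e,t')]=\E[X_{\ell-1}(e,t')\mid \alp_{\ell-1}=\sigma_{\ell-1}]<\lambda_{\ell-1}\le 2$ for every $t'$. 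By Step 1, every packet in $P$ crosses $e$ with probability $1$ at some time in $I=[t-cD_\ell,t+cD_\ell]$. Therefore
\[ |P|\le \sum_{t'\in I}\E[X_\ell(e,t')]\le 2(2cD_\ell+1)=O(D_\ell). \]

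\textbf{Main obstacle.} The delicate point is Step 1: showing that $s_i$ really does not depend on the level-$\ge\ell$ waits of the \emph{earlier} sibling blocks. This rests on the invariant that each block's waiting times at its start and end nodes sum to exactly $W_x$, and that this holds recursively inside the block, so every block has a fixed total duration. I would prove it by induction on the level, from $L$ upward. A secondary care point is correctly identifying which conditioning makes $\E[X_\ell]$ equal the level-$(\ell-1)$ expectation fixed by $\sigma_{\ell-1}$.
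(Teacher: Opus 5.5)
Your proof is correct. The paper gives no proof of this lemma; it is imported from Rothvoss, so there is no in-paper argument to compare against. Your averaging argument is the natural one: a packet that can hit $(e,t)$ deposits one unit of expected load inside a window of length $O(D_\ell)$ around $t$, and flawlessness of $\sigma_{\ell-1}$ caps the expected load at every time step by $\lambda_{\ell-1}\le 2$.

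Your reading of the hypothesis is also right, and the added assumption is not a gap on your side: for $\ell\ge 1$ the literal ``any fixed assignments'' version is false. Take $D$ paths that all begin with the edge $e$ and set $\sigma_0\equiv 1$. Every packet then crosses $e$ at the same constant plus $\sum_{x\ge 1}\alpha_x(i,b_x(e))$, so all $D$ packets have the same support. Hence $D\gg D_1=\sqrt{D}$ of them have positive probability at a common $t$. The lemma must be read with $\sigma_{\ell-1}$ flawless for $\mathcal{F}_{\ell-1}$, which is exactly the situation in which the paper invokes it. Your count also uses that $F_{\ell-1}(e,t')$ is a flaw for every $t'$ in the window, i.e.\ that flaws range over the whole time horizon; this is the intended setup.

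Three small corrections.

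First, for $y\ge 1$ the term $W_yD_x/D_y=D_xD_y^{-3/4}$ grows, rather than decays, as $y$ increases. The block duration is still $O(D_x)$: $\sum_{y\ge1}D_y^{-3/4}$ is dominated by its deepest term and is $O(\tau^{-3/4})$, because the $D_y$ grow doubly exponentially as $y$ decreases.

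Second, what you call the main obstacle needs no induction; the paper already records it when reviewing the construction. Since each block's two waits sum to $W_x$, the crossing time of $e$ is a constant plus $\sum_x\alpha_x(i,b_x(e))$ over the blocks $b_x(e)$ containing $e$. This identity also shows the window has length at most $\sum_{x\ge\ell}W_x=O(W_\ell)$. So your averaging argument in fact gives $O(D_\ell^{1/4})$ packets for $\ell\ge 1$, which is stronger than stated.

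Third, repeated occurrences of $e$ on one path cost nothing in Step 2. Any packet in $P$ contributes at least $1$ to $\sum_{t'\in I}X_\ell(e,t',i)$ with probability one, whatever its other occurrences do.
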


\noindent
The above lemma shows that each $F_\ell(e, t)$ depends only on $O(D_\ell)$ variables in $\alp_\ell$ that can be found as follows:
Let $K_{\ell}(e, t)$ be the set of packets that $i$ that still have a non-zero probability of passing $e$ at $t$.
Then, $\vbl(F_\ell(e, t))$ is the set of waiting times $\alp(i, b)$ such that $i \in K_{\ell}(e,t)$ and $b$ is the unique level-$\ell$ block for $P_i$ that contains $e$.
The resampling procedure for $F_\ell(e, t)$ is the MT-resampling -- i.e., to resample all variables in $\vbl(F_\ell(e, t))$ uniformly at random.

\paragraph{Convergence proof for a level.}
Note that the LLL setup falls into the variable setting (\cref{def:variable}) and all resampling procedures are MT-resamplings.
Hence, by \cref{lem:prod}, for each flaw $F' \in \cF_\ell$, $\gamma_{F'} = \Pr[F']$.
\cite{rothvoss2013simpler} showed that $\Pr[F_\ell(e, t)] \leq e^{-{D_\ell^{1/16}/12}} \leq D_{\ell}^{-4}$ for a sufficiently large $\tau$.
In addition, each flaw depends only on $D_{\ell}^{3}$ other flaws.
Hence, the setup satisfies the symmetric LLL condition (\cref{cond:pde}) with $p = D_{\ell}^{-4}$, $d = D_\ell^3$, and $\eps = \Theta(1)$.
We can convert the condition to \cref{cond:general} by setting $\psi_{F'} = ep$ for each flaw $F'$.
This implies the following.

\begin{lemma} \label{lem:scheduling-convergence}
    The flaw set in level $\ell$ satisfy \cref{cond:general} with parameters $\eps = \Theta(1)$ and $\psi_{F'} = \Theta(D_\ell^{-3})$ for all $F' \in \cF_\ell$.
\end{lemma}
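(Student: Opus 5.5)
The proof is a direct conversion from the symmetric condition to \cref{cond:general}, following the chain set up in \cref{sec:condition}. Each level $\ell$ uses the variable setting (\cref{def:variable}) with MT-resamplings, so by \cref{lem:prod} every flaw $F' \in \cF_\ell$ has charge $\gamma_{F'} = \Pr[F']$, computed with respect to the uniform (product) measure on $\Omega_\ell$. Here $\Omega_\ell$ is uniform over $[1,W_\ell]$ for each variable. From Rothvoss's analysis I take as given the bound $\Pr[F_\ell(e,t)] \le e^{-D_\ell^{1/16}/12} \le D_\ell^{-4}$ for large enough $\tau$. I also take as given, via \cref{lem:scheduling-static1}, that each flaw depends on $O(D_\ell)$ variables, so that each flaw shares variables with at most $d \le D_\ell^3$ other flaws. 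This gives $\Gamma_{F'}$ of size at most $D_\ell^3+1$.

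Next I check \cref{cond:pde} with $p = D_\ell^{-4}$ and $d = D_\ell^3$. We have $ep(d+1) \le e(D_\ell^{-1} + D_\ell^{-4})$. Since $D_\ell \ge \tau$ and $\tau$ is a large constant, this is at most $1/2$, so the slack is $\eps = 1/2 = \Theta(1)$. To avoid the lossy step from the local union bound, I then verify \cref{cond:general} directly with $\psi_{F'} = ep = eD_\ell^{-4}$. Using monotonicity $\gamma_i^S \le \gamma_i$, the left-hand side is at most
\[
\frac{\gamma_i}{\psi_i}\prod_{j\in\Gamma_i}(1+\psi_j) \le \frac{p}{ep}\,(1+ep)^{d+1} \le e^{-1}\exp\bigl(ep(d+1)\bigr).
\]
This is at most $e^{-1}e^{1/2} < 1-\eps$ for a constant $\eps>0$.

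The statement asks for $\psi_{F'} = \Theta(D_\ell^{-3})$, while the natural choice above is $\Theta(D_\ell^{-4})$. I would reconcile this in one of two ways. The first is to note that any $\psi = c D_\ell^{-3}$ with small constant $c$ also works: the bound becomes $(p/\psi)(1+\psi)^{d+1} \le (D_\ell^{-1}/c)\,e^{c}$, which is $o(1)$. The second, if the intended $\psi$ reflects a looser bound $p \le D_\ell^{-3}/e$ against $d = O(D_\ell^{2})$ dependencies, is to run the same computation with those parameters. I would present the first option: pick $\psi_{F'} = D_\ell^{-3}$ and show the sum is at most $e\,D_\ell^{-1} \le 1/2$.

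The only real obstacle is confirming the dependency count $d = O(D_\ell^3)$, and in particular that this count holds for the flaw set defined from arbitrary flawless previous-level assignments. That count follows from \cref{lem:scheduling-static1}. Each flaw involves $O(D_\ell)$ variables. Each variable $\alp_\ell(i,b)$ affects only the edges of block $b$ (at most $D_\ell$ of them). For each such edge, only $O(D_\ell)$ time steps $t$ have nonzero probability, because the residual randomness shifts crossing times by at most $\sum_{\ell'\ge\ell} W_{\ell'} D_\ell/D_{\ell'} = O(D_\ell)$. Hence each variable lies in $O(D_\ell^2)$ flaws, giving $O(D_\ell^3)$ neighbors. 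Since the statement holds for every level and, by \cref{obs:subset-convergence}, for every subset of flaws, it continues to hold after each dynamic update.
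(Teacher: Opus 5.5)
Your proposal is correct and follows the paper's argument: charges equal probabilities by \cref{lem:prod}, Rothvoss's bound gives $p \le D_\ell^{-4}$, each flaw has about $D_\ell^3$ neighbours, and the symmetric condition is converted to \cref{cond:general}. The paper's conversion uses $\psi_{F'} = ep = \Theta(D_\ell^{-4})$, which does not literally match the stated $\Theta(D_\ell^{-3})$; your direct check with $\psi_{F'} = D_\ell^{-3}$, where the left-hand side is roughly $e D_\ell^{-1}$, proves the statement as written, and you also justify the $O(D_\ell^3)$ degree bound that the paper only asserts.
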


\subsection{Overview of the algorithm} \label{sec:scheduling-overview}
% \btodo{TODO:
% Fix the value of $\tau$ to make the algorithm more exact.
% (Each level decays at least 4 times, rounding of the power of 2, the super polynomial is larger than $D_\ell^3$.)
% % Also, is (P2) really needed?
% Also, add data structure setup (conversion of $O(1)$ load to unit load).
% Notation: $[1, T]$ and $[T]$ are integers instead of reals.}

This section presents an overview of our dynamic algorithm, which maintains a schedule under insertions and deletions of paths.
For simplicity, we focus on how to maintain the waiting times, and we only present updates for path insertions.
Path deletions can be handled symmetrically.

Since each level is an LLL instance, it can be turned into a dynamic algorithm, where each level $\ell$ maintains a flawless assignment $\sigma_\ell$.
The main technical difficulty is to figure out the update sequence:
Recall that the flaw set of each level depends on a flawless assignment for all previous levels.
Therefore, the algorithm proceeds as follows:
First, it turns a path insertion into several flaw updates in level 0.
These updates will trigger a series of resamplings and change $\poly(D)$ entries in $\sigma_0$.
Each entry appears in the definition of $\poly(D)$ flaws in level 1, and hence the update sequence of level 1 is to update these changed flaws.
The above process is repeated for each of the higher levels.
More details are given below.

% On a high level, the algorithm proceeds as follows.
\paragraph{Updating the first level.}
% Let $\alp = \bigcup_{\ell \in [0, L]} \{\alp_\ell\}$ be the set of variables.
When a path $P_i$ is inserted, we first dissect it into $L$ levels of blocks.
Each block should be assigned a variable.
Since the algorithm knows $q$ beforehand, it can initialize $q \cdot \poly(D)$ variables in the beginning so there are enough variables for $q$ paths.
We assign any of the unused variables to each new block.

The path $P_i$ passes $D$ edges and changes the expected load of them.
We update the flaws for these edges in level $0$ so that the load caused by $P_i$ is considered.
% (Again, an update of a flaw is represented by assigning a new flaw ID and new oracle access to the resampling procedure.)
Each edge only has $T = O(D)$ flaws associated in level $0$, so in total only $U_0(i) = \poly(D)$ updates are needed.
Let $R_0(i)$ be the number of resamplings caused by these updates.
Each resample changes $\poly(D)$ variables in $\alp_0$.
In total, the updates change $C_0(i) = R_0(i) \cdot \poly(D)$ variables.

\paragraph{Updating the higher levels.}
Consider a level $\ell \in [1, L - 1]$.
We first show that each lower-level variable affects at most $\poly(D)$ level-$\ell$ flaws.
Fix a variable $\alp_{\ell'}(i, b)$ where $\ell' < \ell$.
This variable only affects a flaw $F_\ell(e, t)$ if the block $b$ contains $e$.
Since the block contains at most $D$ edges, and there are only $T = O(D)$ choices of $t$, only $\poly(D)$ flaws are affected.

Suppose that we have finished the update for all levels $0, 1, \dots, \ell - 1$ and in total $C_{\ell-1}(i)$ variables from previous levels are changed.
Each of the changed variables affects $\poly(D)$ flaws in $\ell$.
The update sequence for level $\ell$ is generated by: (1) updating the affected flaws and (2) updating the flaws for all edges in $P_i$, as described in the update for level 0.
The number of updates is therefore $U_{\ell}(i) = C_{\ell-1}(i) \cdot \poly(D) + \poly(D)$.
The number of resamplings is denoted by $R_{\ell}(i)$, which depends on $U_{\ell}(i)$ and the internal randomness used in level $\ell$.
Since each resample changes $\poly(D)$ variables, the number of changed variables is $C_\ell(i) = R_{\ell}(i) \cdot \poly(D)$.

The update of the last level is a special case:
It does not require any flaw update because the waiting times can be arbitrary.

\paragraph{Note on adversary model.}
Our analysis works in two steps:
First, we obtain a guarantee for oblivious adversaries.
The second step extends the guarantee to clairvoyant adversaries.
To this end, we need a property of an oblivious adversary:
By definition, an oblivious adversary chooses the update sequence beforehand, without looking at the output of the algorithm.
We can, in fact, allow an oblivious adversary to choose the update sequence from any probability distribution independent of the output of the algorithm.
By a standard argument, the guarantee in \cref{thm:application} (A1) against a fixed sequence can be extended to against such a probability distribution.
Using this property, we can show that when path updates are chosen obliviously, then so are flaw updates for all levels.

Suppose that the path updates are chosen by an oblivious adversary.
Since the path updates are mapped to level-0 flaw updates deterministically, flaw updates for level 0 are also oblivious.
Consider a level $\ell > 0$.
The flaw updates of level $\ell$ depend both on the path updates and the internal randomness for levels $0, 1, \dots, \ell - 1$, but not level $\ell$.
Hence, the update sequence for level $\ell$ distribution is independent of level-$\ell$ output.
Hence, if the path updates are chosen obliviously, then each level also receives an oblivious sequence of flaw updates.

\paragraph{Total number of resamplings.}
Consider the oblivious setting and let $\delta = \frac{1}{2^{q+D}}  \cdot  (\frac{1}{m^D})^q \in (0, 1)$ be a parameter controlling the success probability.
Let $\hat{U}_\ell$ denote $\sum_{i\in[q]}{U}_\ell(i)$.
Define $\hat{R}_{\ell}$ and $\hat{C}_\ell$ similarly.
Fix a level $\ell$.
By \cref{lem:scheduling-convergence}, we can satisfy \cref{cond:general} with $\psi_F = \Theta(D_\ell^{-3})$.
By \cref{thm:application} (A1), with probability $1 - \delta$, the total number of resampling steps in level $\ell$ is $\hat{R}_\ell = O(\log |\Omega_\ell| + \hat{U}_\ell + \log \delta^{-1}) = \tO(\hat{U}_\ell + qD)$, where the second equality holds because $|\Omega_\ell| = O(D^{qD})$ and $\log \delta^{-1} = \tO(q + qD)$.
Let $B = \log(m + q)$ be the log factors hidden in $\tO(\cdot)$.
In summary, we upper bound the number of updates, resamplings, and variable changes as follows:
\begin{itemize}
    \item $\hat{U}_0 \leq q \cdot \poly(D)$, and $\hat{U}_\ell  \leq \hat{C}_{\ell-1} \cdot \poly(D)$ for $\ell > 0$.
    \item $\hat{R}_\ell \leq (\hat{U}_\ell + qD) \cdot \poly(B) \leq \hat{U}_\ell \cdot \poly(D, B)$ for $\ell \geq 0$.
    \item $\hat{C}_\ell \leq \hat{R}_\ell \cdot \poly(D)$ for $\ell \geq 0$.
\end{itemize}
All above inequalities hold with probability at least $1 - \delta$.
By a union bound, all of them hold with probability $1 - 3L \cdot \delta$.
By induction, $\hat{R}_\ell \leq q \cdot (BD)^{\Theta(\ell)}$, which holds with probability $1 - 3L \cdot \delta$ against an oblivious adversary.

The total number of path update sequences that an clairvoyant adversary can choose from is $\zeta = O((2m^D)^q)$, because each path can be specified by a series of $D$ edges.
Note that the probability guarantee of our algorithm against any fixed sequence is $1 - 3L \delta \geq 1 - \frac{1}{2^q \cdot \zeta}$.
By repeating the union bound argument in the proof of \cref{thm:application} (A2), we can obtain the following guarantee against a clairvoyant adversary.

\begin{lemma} \label{lem:scheduling-resampling}
With probability at least $1 - 2^{-q}$, the total number of resampling steps over all levels is $q \cdot (D \cdot \log(m + q))^{\Theta(\ell)}$ against a clairvoyant adversary.
\end{lemma}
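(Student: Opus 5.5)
The plan has two stages. First, for a fixed path-update sequence (the oblivious case), I will bound the resamplings level by level through a recursion. Then I will lift the bound to a clairvoyant adversary with a union bound over all path-update sequences, exactly as in the proof of \cref{thm:application} (A2).

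\textbf{Oblivious case.} The first step is to check that every level $\ell < L$ receives an oblivious flaw-update sequence. The level-$\ell$ updates are a deterministic function of the path updates and of the random bits used at levels $0,\dots,\ell-1$. They are therefore independent of the level-$\ell$ resampling table. Conditioning on the lower-level randomness, the level-$\ell$ input is a fixed sequence, so \cref{thm:application} (A1) applies to it with failure probability $\delta$. Here I use \cref{lem:scheduling-convergence}: $\psi_{\min}=\psi_{\max}=\Theta(D_\ell^{-3})$ and $\eps=\Theta(1)$. This yields
\[
\hat R_\ell = O(\log|\Omega_\ell| + \hat U_\ell + \log\delta^{-1}),
\]
since the $q\log\frac{1+\psi_{\max}}{\psi_{\min}}$ term becomes $O(\hat U_\ell\log D)$ when measured in flaw updates.

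Next I plug in the counts. We have $|\Omega_\ell|\le D^{qD}$. Choosing $\delta = 2^{-(q+D)} m^{-Dq}$ gives $\log\delta^{-1}=O(qD\log m + q)$. Together these give $\hat R_\ell \le \hat U_\ell\cdot \poly(D,B)$ with $B=\log(m+q)$, using $\hat U_\ell \ge q$. The other two relations are deterministic:
\begin{itemize}
\item $\hat C_\ell \le \hat R_\ell\poly(D)$, because each MT-resampling changes $|\vbl| = O(D_\ell)$ variables by \cref{lem:scheduling-static1};
\item $\hat U_\ell \le \hat C_{\ell-1}\poly(D) + q\poly(D)$, because each changed lower-level variable lies in a block of at most $D$ edges and touches $O(D)$ time slots each.
\end{itemize}
Unrolling these from $\hat U_0\le q\poly(D)$ gives $\hat R_\ell \le q(DB)^{\Theta(\ell)}$ for every $\ell$. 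Summing over the $L$ levels, the geometric growth is dominated by the top level, so the total is $q(DB)^{\Theta(L)}$. This holds with probability $1-3L\delta$ by a union bound over the per-level events.

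\textbf{Clairvoyant case.} I fix the whole infinite random string in advance. A path is specified by $D$ edges, so there are at most $\zeta=O((2m^D)^q)$ update sequences. Since $3L\delta \le 2^{-q}/\zeta$ (with $L=O(\log\log)$ absorbed by the $2^{-D}$ factor), a union bound shows that with probability $1-2^{-q}$ the bound holds simultaneously for every sequence. A clairvoyant adversary must pick one of these sequences, so it cannot exceed the bound.

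\textbf{Main obstacle.} The delicate step is the obliviousness and conditioning argument across levels. The level-$\ell$ update count $\hat U_\ell$ is itself random, so (A1) must be applied conditionally on levels $<\ell$, with a failure bound $\delta$ that is uniform over the conditioning. I would make this precise by giving each level its own independent random table and using that (A1) holds for every fixed input sequence. Averaging over the lower-level randomness then preserves the $1-\delta$ guarantee.
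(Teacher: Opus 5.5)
Your proposal follows the paper's own argument essentially step for step. Each level $\ell<L$ is treated as an oblivious instance, since its flaw updates depend only on the path updates and the randomness of levels $<\ell$, and (A1) is applied with $\psi_{\min}=\psi_{\max}=\Theta(D_\ell^{-3})$ and the same $\delta$. The recursion $\hat U_\ell\to\hat R_\ell\to\hat C_\ell$ is then unrolled, the $3L$ per-level events are union-bounded, and the (A2)-style union bound over the $\zeta=O((2m^D)^q)$ path-update sequences is taken with a fixed random string.

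The one step that does not go through as written is your final check $3L\delta\le 2^{-q}/\zeta$. The paper's text has the same slip. With $\delta=2^{-(q+D)}m^{-Dq}$ and $\zeta\le c\,2^q m^{Dq}$, the union bound only gives failure probability $\zeta\cdot 3L\delta=O(L\,2^{-D})$, which is not at most $2^{-q}$ once $q>D$. The extra $2^q$ in $\zeta$ is never paid for; the $2^{-D}$ factor only absorbs $L$. Choosing $\delta=2^{-q}/(3L\zeta)$, or $\delta=2^{-(2q+D)}m^{-Dq}$, fixes this. It changes $\log\delta^{-1}$ only by $O(q)$, so $\hat R_\ell\le \hat U_\ell\cdot\poly(D,B)$ is unaffected.

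A smaller point: level-$\ell$ flaws depend on all of $\alpha_0,\dots,\alpha_{\ell-1}$, so the update recursion should be $\hat U_\ell\le \poly(D)\bigl(q+\sum_{\ell'<\ell}\hat C_{\ell'}\bigr)$ rather than using $\hat C_{\ell-1}$ alone. This is harmless because the bounds grow geometrically in $\ell$, and the conclusion $q\,(DB)^{\Theta(L)}$ stands.
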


\subsection{Implementation details}
It remains to specify the details and the implementation of $\algResample, \algCompare$, and $\algCheck$.
% Let $k$ denote the current number of paths in the problem instance.
We first specify our data structure setup.

\begin{itemize}
    \item \textbf{Path dissection:}
    We store a path as a sequence of edges.
    The blocks are stored as an array of size at most $\poly(D)$, where each block is represented by its first edge, last edge, and waiting time.
    Overall, the number of blocks is at most $\poly(D)$
    \item \textbf{Packet lists:} Each edge $e$ stores a BBST $T_e$ that contains all paths passing $e$.
    \item \textbf{Flawless assignments:}
    Each level $\ell$ maintains a set $\cF_\ell$ of flaws and a flawless assignment $\sigma_\ell$ for $\cF_\ell$.
    The update of $\cF_\ell$ is specified in \cref{sec:scheduling-overview}.
\end{itemize}

\noindent
Maintaining path dissections and packet lists can be done in $\tO(\poly(D))$ time per update.
The flawless assignments are maintained by our LLL framework as follows.
Consider a level $\ell$ and assume that we have access to flawless assignments $\sigma_0, \sigma_1, \dots, \sigma_{\ell-1}$ for previous levels.
We now state the implementations of \algResample, \algCheck, \algCompare for level $\ell$.

\begin{itemize}
    \item We need a subroutine $\textsc{Calculate}(\ell', i, e, t)$: Given a packet $i$, edge $e$, and time $t$, find the probability that $i$ passes $e$ at time $t$ on the condition that $\alp_x = \sigma_x$ for $x \in [0, \ell]$.
    The procedure also assumes access to waiting time assignments $\sigma_0, \sigma_1, \dots, \sigma_{\ell'}$ for the first $\ell'$ levels.
    To do this, we find the $O(\log \log D)$ blocks containing $e$ in the path dissection of $i$.
    Then, we enumerate all possible assignments of waiting times for all these blocks of level higher than $\ell'$.
    For each such assignment, we check in $O(D)$ time whether it makes $i$ pass $e$ at $t$.
    Let $W'$ be the number of enumerated assignments, and $x$ be the number of assignments that make $i$ pass $e$ at $t$.
    The desired probability is $x / W'$.    
    Note that $W' = \prod_{\ell'' > \ell'} W_{\ell'} = O(D^2)$.
    Hence, this procedure runs in $\poly(D)$ time.
    
    \item Given two flaws $F_1, F_2$, \algCompare compares them in $O(1)$ time by comparing their edge ID.
    (Recall that the permutation $\pi$ can be arbitrary.)
    
    \item Given a flaw $F' = F_\ell(e, t)$ and a state $\sigma_\ell \in F_\ell(e, t)$, \algCheck first invoks $\textsc{Calculate}(\ell, i, e, t)$ on all $O(D)$ packets that pass $e$.
    This step computes their respective probabilities of passing $e$ at $t$.
    The expected load $\E[X_\ell(e, t)]$ is the sum of those probabilities.
    The procedure then checks whether $\E[X_\ell(e, t)] > \lambda_\ell$.
    In total, $\poly(D)$ time is spent.
    
    \item Given a flaw $F' = F_\ell(e, t)$ and a state $\sigma_\ell \in F_\ell(e, t)$, \algResample simulates the resampling as follows.
    First, extract all $O(D)$ packets passing $e$ from its packet list.
    Next, find the $O(D_\ell)$ packets that still have a non-zero chance of passing $e$ at $t$.
    This can be done by calling $\textsc{Calculate}(\ell - 1, i, e, t)$ on each packet.
    For each of these packets $i$, we resample $\sigma_\ell(i, b_i)$, where $b_i$ is the level-$\ell$ block containing $e$ in the path dissection of $i$.
    In total, the resampling takes $\poly(D)$ time.

    The list of introduced flaws is found as follows.
    The resampling changes $O(D_\ell)$ blocks, each of which contains at most $D_{ell}$ edges.
    For each of these edges $e'$ and for all time $t' \in [0, T]$, we invoke $\algCheck(F_\ell(e', t'), \sigma_\ell)$ to check if the flaw is introduced.
    In total, this takes $\poly(D)$ time.
\end{itemize}

By our analysis, all key procedures can be implemented in $\poly(D)$ time.
Combining with the number of resampling steps $q \cdot (D \cdot \log(m + q)^{\Theta(\ell)})$, we know that the total running time for all levels is $q \cdot (D \cdot \log(m + q))^{\Theta(\log \log D)}$.
Hence, we have the following.

\begin{theorem}
There is an algorithm for the dynamic scheduling problem running in $(D \cdot \log(m + q))^{\Theta(\log \log D)}$ amortized time per update with probability $1- 2^{-q}$ against a clairvoyant adversary.
\end{theorem}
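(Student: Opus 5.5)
The theorem follows by combining \cref{lem:scheduling-resampling} with the per-resampling cost of the implementation just described. The approach is to bound the total work over all $q$ updates and divide by $q$.

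\textbf{Correctness.} After every update, each level $\ell<L$ holds an assignment $\sigma_\ell$ that is flawless with respect to $\cF_\ell$. The flaws $\cF_\ell$ are defined from $\sigma_0,\dots,\sigma_{\ell-1}$, and levels are processed in increasing order. So after the cascade of \cref{sec:scheduling-overview} terminates, the tuple $(\sigma_0,\dots,\sigma_{L-1})$ is a valid sequential outcome of Rothvoss's construction for the current path set. Rothvoss's argument then applies verbatim.
\begin{itemize}
\item Any choice of level-$L$ waiting times yields a schedule of makespan $T+D=O(D)$.
\item That schedule has $O(1)$ load.
\item Dilating each step by the load converts it into a unit-load schedule of makespan $O(D)$, which is asymptotically optimal since $C\le D$.
\end{itemize}

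\textbf{Time bound.} The work splits into three parts.
\begin{enumerate}
\item Maintaining path dissections, packet lists $T_e$, and dummy-flaw bookkeeping costs $\tO(\poly(D))$ per path update.
\item Generating the flaw updates of each level costs $\poly(D)$ per changed lower-level variable or per edge of the inserted path. This totals $\sum_\ell \hat U_\ell\cdot\poly(D)$, which is dominated by the resampling bound below since $\hat U_\ell\le \hat C_{\ell-1}\poly(D)\le \hat R_{\ell-1}\poly(D)$.
\item All calls to \algResample, \algCheck, and \algCompare. These cost $\poly(D)$ each, with $d=\poly(D)$ checks and comparisons per resampling, as in the proof of \cref{thm:application}.
\end{enumerate}
Multiplying the $q\cdot(D\log(m+q))^{\Theta(\ell)}$ resamplings of \cref{lem:scheduling-resampling}, summed over $\ell\le L$, by this $\poly(D)$ overhead gives total time $q\cdot(D\log(m+q))^{O(L)}$.

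\textbf{Main obstacle.} The delicate step is the value of $L$. The recursion $\hat R_\ell\le q(BD)^{\Theta(\ell)}$ compounds across levels, so the exponent must be $O(\log\log D)$ rather than $O(\log\log n)$. Since $D_\ell=D^{2^{-\ell}}$ and dissection stops once $D_\ell\in[\tau,\tau^2]$, we get $L=\log_2\log_\tau D+O(1)=O(\log\log D)$. The $\Theta(\log\log n)$ stated in \cref{sec:scheduling-review} is just an upper bound, since $D\le n$-scale. With $L=O(\log\log D)$, the total time becomes $q\cdot(D\log(m+q))^{\Theta(\log\log D)}$.

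\textbf{Probability.} The failure probability is that of \cref{lem:scheduling-resampling}, at most $2^{-q}$, against a clairvoyant adversary. That lemma already union-bounds over all $\zeta=O((2m^D)^q)$ path-update sequences, as in (A2) of \cref{thm:application}. Dividing by $q$ yields the claimed amortized bound.
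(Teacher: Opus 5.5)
Your proposal is correct and follows the paper's own argument: multiply the resampling count of \cref{lem:scheduling-resampling} by the $\poly(D)$ cost of \textsc{Calculate}, \algResample, \algCheck, and \algCompare, then divide by $q$. Your explicit computation $L=\log_2\log_\tau D+O(1)$ supplies the $\log\log D$ exponent that the paper's one-line proof leaves implicit; the aside that $\Theta(\log\log n)$ is an upper bound on $L$ is not justified, since paths may revisit vertices and $D$ need not be at most $n$, but your direct derivation from $D_\ell=D^{0.5^\ell}$ makes that remark unnecessary.
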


\section{\texorpdfstring{$(1+\eps) \Delta$}{ } Edge coloring}
\label{sec:edge-coloring}
Let $G = (V, E)$ be a graph with maximum degree $\Delta$, and $\eps > 0$ be a parameter.
The task of $(1+\eps) \Delta$-edge coloring is to assign colors to the edges -- each color chosen from $\{1, 2, \ldots, (1+\eps)\Delta\}$ -- so that no two incident edges receive the same color.
This section discusses the application of our LLL result to the fully dynamic version of this problem.
Our dynamic implementation for $(1+\eps) \Delta$-edge coloring builds on the distributed algorithm by Chang, He, Li, Pettie, and Uitto~\cite{chang2018complexity}.
First, we recall their approach, primarily focusing on their arXiv version~\cite[Section 3]{chang2017complexityarxiv}. 
Then, we discuss how to solve a dynamic version of that problem.

\subsection{Review of \texorpdfstring{\cite{chang2017complexityarxiv}}{CHL+}'s approach}
At a very high level, \cite{chang2017complexityarxiv}'s approach is composed of two steps: a R\"odl Nibble type step, and a simple finishing one (\cref{sec:edge-coloring-finishing-step}).\footnote{The finishing step we present differs than the one presented in \cite{chang2017complexityarxiv}.}
The nibble step consists of several $t_\eps = O(\log 1/\eps)$ phases.
The goal of these phases is to use $(1+\xi) \Delta$ many colors, for a $\xi < \eps$, to properly color a subset of the edges such that at most $\frac{1}{5} \cdot (\eps - \xi) \Delta$ edges incident to an edge remain uncolored.
Then, in the finishing step, the uncolored edges are colored by the $(\eps - \xi) \Delta$ unused colors in an (almost) trivial way.
The main result of \cite{chang2017complexityarxiv,chang2018complexity} is the following.
\begin{theorem}[Rephrased Theorem~3 of \cite{chang2017complexityarxiv}]
\label{theorem:chang-coloring}
    Let $\eps = \omega\rb{\frac{\log^{2.5} \Delta}{\sqrt{\Delta}}}$ be a function of $\Delta$. 
    If $\Delta > \Delta_\eps$ is sufficiently large, then each of the $t_\eps$ coloring phases can be performed by invoking an LLL instance with parameters: $m$ random variables; $d = \poly(\Delta)$ maximum degree in the flaw-dependence graph; and, probability at most $p = \exp\rb{-\eps^2 \Delta / \log^{4+o(1)}\Delta}$ that a given flaw is caused by a random assignment.
    %\stodo{Do we even need $\Delta > \Delta_\eps$ condition, as it is already imposed by the condition on $\eps$?}
\end{theorem}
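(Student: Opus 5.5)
The statement is a rephrasing of \cite[Theorem~3]{chang2017complexityarxiv}, so the plan is to recall their nibble analysis and check that each phase fits the variable setting with MT-resampling (\cref{def:variable}). Concretely, I would take their phase parameters: a palette of $(1+\xi)\Delta$ colors, and for each uncolored edge $e$ a current palette $\Psi(e)$ and degree $\deg(e)$. In a phase, every uncolored edge independently activates with a small probability and, if activated, picks a uniform color from its palette. It keeps the color if no incident activated edge picked the same color. Following \cite{chang2017complexityarxiv}, I would add their equalizing coin flips so that every color survives in $\Psi(e)$ with exactly the same probability. The random variables are these per-edge choices, so there are $m$ of them.

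The flaws for a phase are, for each edge $e$, the events that $|\Psi(e)|$ or $\deg(e)$ (and the color degrees $\deg_c$) after the phase deviate from their target trajectories by more than the allowed error terms. Each such quantity is determined by the variables on edges within distance $O(1)$ of $e$. This gives $\poly(\Delta)$ variables per flaw, so two flaws share variables only if their edges are within distance $O(1)$. That yields the dependence degree $d=\poly(\Delta)$.

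The main work, and the main obstacle, is the bound $p=\exp(-\eps^2\Delta/\log^{4+o(1)}\Delta)$. I would compute the expectations using the equalizing coins, so that the expected palette and degree shrink by the same factor. For concentration, I would express each quantity as a function of the independent local variables. A naive Lipschitz constant is too large, because a single edge's choice can affect many terms. So I would first apply the Chernoff bound to the number of activated neighbors, then use Talagrand's inequality, or a bounded-differences inequality on a typical event, with certificate size $O(1)$ per unit change. This gives deviation $\lambda$ with probability $\exp(-\Omega(\lambda^2/(\text{mean}\cdot\polylog\Delta)))$. Setting $\lambda$ to the per-phase error budget $\approx \eps\Delta/\polylog\Delta$ over the $t_\eps=O(\log 1/\eps)$ phases gives the stated tail. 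The hypothesis $\eps=\omega(\log^{2.5}\Delta/\sqrt\Delta)$ ensures two things: the accumulated errors stay below $\tfrac15(\eps-\xi)\Delta$, and $p$ is superpolynomially small in $\Delta$, so $ep(d+1)<1-\Omega(1)$ (\cref{cond:pde}) once $\Delta>\Delta_\eps$.

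Finally, I would verify by induction over phases that a flawless outcome of phase $j$ satisfies the preconditions needed for phase $j+1$. After $t_\eps$ phases, every edge then has at most $\tfrac15(\eps-\xi)\Delta$ uncolored incident edges. Since this is exactly the accounting of \cite{chang2017complexityarxiv}, I would cite their lemmas for the expectation and trajectory computations. I would only re-derive the locality of the flaws, since that is what \cref{sec:edge-coloring} needs for the dynamic reduction.
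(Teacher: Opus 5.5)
Your plan matches what the paper does. The paper gives no independent proof of this statement: it imports \cite[Theorem~3]{chang2017complexityarxiv} and adds only the locality observations. The flaws for \eqref{eq:deg-i-ub} and \eqref{eq:phi-i-lb} depend on 2-hop neighborhoods, and those for \eqref{eq:deg-i-c-ub} on 3-hop ones. That is exactly the part you propose to re-derive, and it gives $d=\poly(\Delta)$.

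\textbf{Correction: the per-phase process.} The process you reconstruct is not the one being cited. The phase is \algOneShot:
\begin{itemize}
\item every edge of $G_i$ participates;
\item each edge picks a color uniformly from $\Phi_i(e)$ and keeps it if no incident edge picked the same color;
\item there is no activation probability and there are no equalizing coin flips.
\end{itemize}
Non-uniform palettes are handled by the explicit color-degree invariant $\deg_{i,c}(v)\le t_i$, which is one of the three flaw families, not by equalization. As written, you would be citing the expectation and trajectory lemmas of \cite{chang2017complexityarxiv} for a process they do not analyze.

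There is also a phase-count problem. If your activation probability $\eta$ is genuinely small rather than a constant, each phase colors only an $O(\eta)$-fraction of the remaining edges. You would then need $\Theta(\eta^{-1}\log(1/\eps))$ phases instead of $t_\eps=O(\log 1/\eps)$. This matters because the dynamic bound in \cref{thm:edge-coloring} is exponential in the number of phases.

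The fix is easy: drop the activation and the coins, and take the flaws to be the one-sided violations of \eqref{eq:deg-i-ub}--\eqref{eq:phi-i-lb} for the next phase. The citation then applies as is.

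A minor point: $\eps=\omega(\log^{2.5}\Delta/\sqrt{\Delta})$ only gives an exponent $\eps^2\Delta/\log^{4+o(1)}\Delta=\omega(1)\cdot\log^{1-o(1)}\Delta$. So concluding that $p$ is superpolynomially small, and hence $ep(d+1)<1$ for $d=\poly(\Delta)$, needs the $o(1)$ to be absorbed by the $\omega(\cdot)$. This looseness is inherited from the statement itself.
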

This result suffices to obtain $\Delta^{O(t_\eps)}$ amortized update time. 
We now discuss details, and begin by recalling some details from the approach developed in \cite{chang2017complexityarxiv}.
% \begin{corollary}
% \label{corollary:edge-coloring-slower}
%     Let $\eps = \omega\rb{\frac{\log^{2.5} \Delta}{\sqrt{\Delta}}}$ be a function of $\Delta$. 
%     If $\Delta > \Delta_\eps$ is sufficiently large, then the $t_\eps$ coloring phases can be dynamically maintained in the $\Delta^{O(\log 1/\eps)}$ amortized update time.
% \end{corollary}
% Our goal is to obtain $\poly(\Delta)$ update time.
% To explain how to go beyond \cref{corollary:edge-coloring-slower}, we first recall more details from the approach developed in \cite{chang2017complexityarxiv}.

Each phase performs partial coloring of the so-far uncolored edges.
Let $G_i$ denote the subgraph of uncolored edges of $G$ at the beginning of phase $i$.
Let $\cC = \{1, \ldots, (1+\xi) \Delta \}$ be the colors used throughout the $t_\eps$ phases.
This coloring is maintained to balance several quantities:
\begin{itemize}
    \item $\Phi_i(e) \subseteq \cC$ is the palette of colors available to $e$ at the beginning of iteration $i$;
    \item $\deg_i(v)$ denotes the number of edges incident to $v$ in $G_i$;
    \item $\deg_{i, c}(v)$ denotes the number of edges incident to $v$ in $G_i$ that have color $c$ in their palettes.
\end{itemize}
For the base case, we let $G_1 = G$ and $\Phi_1(e) = \cC$ for each edge $e$.
Then, in graph $G_i$ is maintained the following invariant:
\begin{align}
    \deg_i(v) & \le d_i, \label{eq:deg-i-ub} \\
    \deg_{i, c}(v) & \le t_i, \label{eq:deg-i-c-ub} \\
    |\Phi_i(e)| & \ge p_i. \label{eq:phi-i-lb}
\end{align}
The precise values of $d_i$, $t_i$, and $p_i$ are not important for our purposes; a reader is referred to \cite[Section 3.1]{chang2017complexityarxiv} for the exact values.
%where $d_1 \eqdef \Delta$, $t_1 \eqdef \Delta$, $p_1 \eqdef (1+\xi)\Delta$
For the coloring, the authors use algorithm \algOneShot, described in \cref{alg:one-shot}.
\begin{algorithm}[htbp]
    \Fn{\algOneShot}{
        \KwIn{Graph $G_i$}
        Each edge $e$ select a color $\Color(e) \in \Phi_i(e)$ uniformly at random.

        An edge $e$ successfully colors itself in $\Color(e)$ if no edge incident to $e$ also selects $\Color(e)$.
    }
    
    \caption{Description of algorithm \algOneShot.\label{alg:one-shot}}
\end{algorithm}

Based on \cref{eq:deg-i-ub}, \cref{eq:deg-i-c-ub}, and \cref{eq:phi-i-lb}, we naturally define the following flaws:
\begin{itemize}
    \item $F_{j}^{\deg}(u)$ occurs when \eqref{eq:deg-i-ub} is \textbf{not} satisfied for $i = j$
    %, but \cref{eq:deg-i-ub,eq:deg-i-c-ub,eq:phi-i-lb} are satisfied for $i < j$ for each vertex and edge in the graph.
    \item $F_{j, c}^{\deg}(u)$ occurs when \eqref{eq:deg-i-c-ub} is \textbf{not} satisfied for $i = j$
    %, but \cref{eq:deg-i-ub,eq:deg-i-c-ub,eq:phi-i-lb} are satisfied for $i < j$ for each vertex and edge in the graph.
    \item $F_{j}^{\Phi}(e)$ occurs when \eqref{eq:phi-i-lb} is \textbf{not} satisfied for $i = j$
    %, but \cref{eq:deg-i-ub,eq:deg-i-c-ub,eq:phi-i-lb} are satisfied for $i < j$ for each vertex and edge in the graph.
\end{itemize}
These flaws capture the $t_\eps$ nibble phases we discussed above.
For the implementation details, we need to discuss the radius of the neighborhood affecting the outcome of \cref{eq:deg-i-ub,eq:deg-i-c-ub,eq:phi-i-lb} in \textbf{one random color assignment}.
Recall that $G_i$ represents the graph induced by the uncolored edges of $G$ at the beginning of phase $i$.
\begin{itemize}
    \item \cref{eq:deg-i-ub}, i.e., the number of uncolored edges incident to a vertex $v$, after a random coloring depends only on the $2$-hop neighborhood of $v$.
    \item \cref{eq:deg-i-c-ub}, i.e., the number of edges incident to a vertex $v$ that have color $c$ in their palettes, depends only on the $3$-hop neighborhood of $v$.
    To see why it holds, consider an edge $e$ incident to $v$.
    Then, whether $e$ keeps $c$ in its palette or not depends on whether any edge $f$ incident to $e$ has been successfully colored by $c$ or not.
    However, a successful coloring of $f$ depends on the edges incident to it. All such edges have both endpoints within the $3$-hop neighborhood of $v$.
    \item \cref{eq:phi-i-lb}, i.e., the palette of colors available to $e = \{u, v\}$, after a random coloring depends on the colors of the edges in the $2$-hop neighborhood of $u$ and $v$.
\end{itemize}

\subsection{Finishing step}
\label{sec:edge-coloring-finishing-step}
Let $\eps'$ be the input parameter for which our goal is to maintain a $(1+\eps') \Delta$-edge coloring. 
Invoke the $t_\eps$ phases described above for $\eps = \eps'/12$.
Let $\Gfinal$ be the graph induced by the uncolored edges after those $t_\eps$ phases.
The process guarantees that the maximum degree of $\Gfinal$ is less than $\Deltafinal \eqdef \eps' \Delta / 12$.
Moreover, at least $(1+\eps') \Delta - (1+\eps)\Delta = 11 \cdot \eps' \Delta / 12 = 11 \cdot \Deltafinal$ colors remain unused after the $t_\eps$ phases, which we use to color $\Gfinal$.
Let $\cCfinal = \{(1+\eps)\Delta + 1, (1+\eps)\Delta + 2, \ldots, (1+\eps')\Delta \}$ be a set of those remaining colors.

To color $\Gfinal$, we invoke LLL with the following setup:
\begin{itemize}
    \item Each edge samples a color from $\cCfinal$ uniformly at random.
    \item We define a bad even $B_{e, f}$ if two incident edges $e$ and $f$ sampled the same color.
\end{itemize}
We now prove why it is indeed an LLL instance.
First, it holds that,
\[
    \prob{\text{$B_{e, f}$ occurs}} = \frac{1}{|\cCfinal|} = \frac{1}{11 \cdot \Deltafinal}.
\]
Second, a bad event $B_{e, f}$ shares variables with the bad events containing $e$ or $f$. 
Hence, there are less than $4 \Deltafinal$ -- including $B_{e, f}$ itself -- such bad events.
Since
\[
    \prob{\text{$B_{e, f}$ occurs}} \cdot 4 \Deltafinal \cdot e < 1,
\]
this setup can indeed be solved with a single LLL instance.

\subsection{Implementation details}
We instantiate $t_\eps + 1$ many LLL instances: the first $t_\eps$ of them represent the nibble phases, while the last one implements the finishing step. 
We refer to those instances by $L_1, \ldots, L_{t_\eps + 1}$.
A graph update is directly passed to $L_1$, while $L_k$ defines updates passed to $L_{k+1}$.
Each instance $L_k$ maintains a vector of edge colors $c_k$, and also the palette of colors available to each edge/vertex.

We now describe how are updates processed:
\begin{itemize}
    \item In our process, a single update to $L_1$, triggers many updates to $L_2$. Generally, a single update to $L_i$ triggers many updates to $L_{i+1}$. This chain of reactions is all caused by a single (adversarial) update to $L_1$.
    The updates are processed in the order of $i$: only after all updates to $L_i$ are processed, then $L_{i+1}$ starts processing its updates.
    \item Instance $L_i$ maintains graph $G_i$; as a reminder, $G_i$ is the graph consisting of unsuccessfully colored edges at the beginning of phase $i$.
    In particular, when an edge $e$ becomes successfully colored in $L_i$, it triggers removal of $e$ from $L_j$, for $j > i$.
    Likewise, if $e$ was successfully colored in $L_i$ before, but have become unsuccessfully colored due to flaw resampling in $L_i$, then $L_i$ triggers addition of $e$ to $L_{i + 1}$. 
    %If $e$ remains unsuccessfully colored in $L_{i+1}$, $L_{i+1}$ will trigger addition of $e$ to $L_{i+2}$, and so on.

    \item When an edge $e$ changes its color at $L_i$, $e$ tests whether any edge incident to it has become successfully or unsuccessfully colored.

    \item When an edge $e$ in $L_i$ becomes successfully colored, then $\Phi_{i+1}(h)$ for all edges $h$ incident to $e$ gets updated.
    If any of those edges $h$ was (unsuccessfully or successfully) colored in a color that got removed from $\Phi_{i+1}(h)$, then we trigger two updates to $L_{i+1}$: remove $h$; add $h$.

    \item Similarly to the previous step, when an edge $e$ in $L_i$ becomes unsuccessfully colored, it potentially updates $\Phi_{i + 1}(h)$ for each edge $h$ incident to $e$.
    In particular, $\Phi_{i + 1}(h)$ gets augmented by $c_i(e)$ if no other edge incident to $h$ has been successfully colored by $c_i(e)$ at $L_j$ for $j \le i$.
    Since this does not reduce $\Phi_{i + 1}(h)$, no further update is needed.
    
    \item When edge $e$ is added into $L_i$, it selects a random color from its palette.
    
    \item Let $e$ be a removed edge.
    Let $L_i$ be the instance at which $e$ is unsuccessfully colored.
    Removing $e$ from $L_i$ might result in at most two edges -- say $h$ and $f$ -- incident to $e$ to become successfully colored. Edges $h$ and $f$ are those which had the same color as $e$ in $L_i$, and $e$ was the only edge incident to $h$ and/or $f$ with color $c_i(e)$.
    Hence, our update also tests whether any such $h$ and $f$ exist. If they do, then corresponding update is passed to $L_{i+1}$ as already described.
    
    Let $L_k$ be the instance at which $e$ is successfully colored.
    Removing $e$ triggers potential augmentation of $\Phi_{k+1}(h)$ by $c_k(e)$ of the edges $h$ incident to $e$.

    \item If an edge is removed from $L_i$, after $L_i$ process all its updates, that edge removal is also triggered at $L_{i+1}$.
    Similarly, an addition of an edge $e$ to $L_i$ is also propagated to $L_{i+1}$, until $e$ becomes successfully colored.

    \item After each edge addition, edge removal, an edge-color change, or the palette change, we test whether the update creates a new flaw.
\end{itemize}

Each of the updates and tests described above can be performed in $\poly(\Delta)$ time. 
Moreover, since each flaw depends on at most $3$-hop neighborhood, an edge $e$ is part of $O(\Delta^3)$ many flaws in a given $L_i$.

Next, we want to analyze the running time for processing updates and fixing flaws by all the LLL instances.
We will leverage \cref{thm:gen-dynLLL2} in the analysis.
Consider an instance $L_i$, and assume it receives $q_i$ updates and resamples $s_i$ many flaws.
Based on our discussion, it spends $r_i = (q_i + s_i) \poly(\Delta)$ time processing those updates and flaws.
In the process, it also triggers updates to $L_{i+1}$. 
However, every triggered update is a consequence of the work spent for $L_i$. 
Therefore, $L_{i+1}$ receives $O(r_i)$ updates from $L_i$.

Now, using \cref{thm:gen-dynLLL2}, we would like to upper-bound $s_i$ in terms of $q_i$. 
To that end, we invoke \cref{thm:gen-dynLLL2} with $k = q_i \log q_i$, $\epsilon$ is a constant, $\delta^{-1} = \rb{n^2}^{q_i}$. 
The parameter $\psi$ for our application is $O(1/\Delta)$. 
% It is not hard to verify that letting $\psi = \Theta(1)$ suffices.
That implies that
\[
    s_i = O\rb{q_i \log (\Delta q_i) + q_i \log n},
\]
and hence
\[
    r_i = O\rb{q_i \log (\Delta q_i) + q_i \log n} \cdot \poly(\Delta).
\]
Given that $r_1 = q$, we derive that
\begin{align*}
    r_{t_\eps} &= \tO\rb{q \cdot \rb{\log (\Delta q) + \log n}^{1 + \log 1/\eps} \cdot \Delta^{O(\log 1/\eps)}} \\
    &= \tO\rb{q \cdot \rb{\log q + \log n}^{1 + \log 1/\eps} \cdot \Delta^{O(\log 1/\eps)}}
\end{align*}
This proves \cref{thm:edge-coloring}.

\section{Conclusion and future work} \label{sec:future-work}
In this paper, we introduce the study of \LLL\ in the dynamic setting and obtain resampling algorithms whose costs match the static setting up to polylogarithmic factors against even an adaptive adversary.
Several directions remain open.
A particularly interesting question concerns the efficiency of composing multiple LLL instances.
Such compositions are essential in at least two of our applications -- the LMR-scheduling scheme and the $(1+\eps)\Delta$-edge-coloring algorithm.
In these settings, our query complexities currently exhibit exponential dependence on the number of composed LLLs.
Is this exponential blow-up inherent, or can it be avoided?
More generally, is it possible to design a black-box composition of multiple LLL instances whose resampling complexity scales only polynomially with the number of compositions, at least for some natural classes of problems?
Alternatively, can one prove lower bounds showing that such efficient composition is impossible?
\bibliographystyle{alpha}
\bibliography{ref}

\appendix
\section{Proof of \texorpdfstring{\cref{lem:prod}}{Lemma }} \label{sec:missing-proof}
Without loss of generality, assume that $\vbl(F_j) = \{v_1, v_2, \dots, v_k\}$ for some $k \leq n$.
Let $v_i(\tau)$ be the value of $v_i$ in the state $\tau$.
Consider a fixed state $\tau$.
Then $\InSet(\tau)$ is the set of states $\sigma$ such that:
\begin{enumerate}
    \item $\sigma$ must have the flaw $F_j$. That is, $(v_1(\sigma), v_2(\sigma), \dots, v_k(\sigma))$ must violate $F_j$.
    \item The transition $\sigma \rightarrow \tau$ is possible by resampling $F_j$. That is, $v_i(\sigma) = v_i(\tau)$ for all $i > k$.
\end{enumerate}
Therefore, $|\InSet(\tau)| = \vio(F_j)$.
For each $\sigma \in \InSet(\tau)$, the transition probability is $\rho_j(\sigma, \tau) = \frac{1}{\prod_{i=1}^k c_i}$, because the resampling must set $v_i$ as $v_i(\tau)$ for $i \leq k$.
Therefore, we have

\begin{align*}
    \gamma_j &= \max_{\tau \in \Omega} \left\{ \sum_{\sigma \in \InSet(\tau)} \frac{\mu(\sigma)}{\mu(\tau)} \cdot \rho_i(\sigma, \tau) \right\} \\
    &= \max_{\tau \in \Omega} \left\{ |\InSet(\tau)| \cdot \frac{1}{\prod_{i=1}^k c_i} \right\} \\
    &= \max_{\tau \in \Omega} \left\{  \frac{\vio(F_j)}{\prod_{i=1}^k c_i} \right\} = \frac{\vio(F_j)}{\prod_{i=1}^k c_i} = \Pr[F_j].
\end{align*}
This completes the proof.

\section{Proof of \texorpdfstring{\cref{lem:tri-convergence}}{Lemma }} \label{sec:app-tri-convergence}
We first show that the flaws $F = \bigcup_{v \in V} \{B_v, Z_v\}$ satisfies \cref{cond:general} for any \emph{static} graph $G = (V,E)$.
To lighten notation, in the following we write $\gamma^S(f)$ instead of $\gamma^S_f$.
For each vertex $v$, let $\algRecolor(v, \sigma)$ denote the resampling procedure for $B_v$ and $Z_v$.
Recall that $D_v$ is the set of vertices of distance at most 2 from $v$.
Let $S_v = \{B_u\}_{u \in D_v} \cup \{Z_u\}_{u\in D_v}$.
Recall that the permutation $\pi$ prioritizes all $B$-flaws over $Z$-flaws.
The number of colors is $\ell = \frac{6\Delta}{\ln \Delta}$, and $L = \Delta^{0.7}$.
% \cite{achlioptas2019beyond} showed that the charges for $B_v$ are bounded.

% \begin{lemma}[\cite{achlioptas2019beyond}] \label{lem:Bv-charge}
% For every vertex $v \in V$, the following bounds on charges hold:
%     \emph{(a)} $\gamma^S(B_v) = 0$ if $S \nsubseteq D_v$.
%     \emph{(b)} $\max_{S \subseteq F, v \in V} \gamma^S(B_v) \leq 2e^{-\frac{L}{6}}$.
% \end{lemma}
% \btodo{Cannot cite because our $L$ is different.}

% We now upper bound the charges for $Z_v$.

\begin{lemma} \label{lem:tri-charge}
For every vertex $v \in V$, the following hold:
    \emph{(a)} $\Gamma_{B_v}$ and $\Gamma_{Zv}$ are subsets of $S_v$.
    \emph{(b)} $\max_{S \subseteq F, v \in V} \gamma^S(Z_v) \leq \Delta^{-3}$, and
    \emph{(c) $\max_{S \subseteq F, v \in V} \gamma^S(B_v) \leq \Delta^{-3}$, where the charges are computed with respect to the uniform distribution.}
\end{lemma}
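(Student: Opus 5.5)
The plan has three parts, one for each item of \cref{lem:tri-charge}, all following the Molloy / Achlioptas--Iliopoulos--Sinclair analysis of this resampling.

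\textbf{Part (a).} The procedure $\algRecolor(v,\sigma)$ changes only the colors of vertices in $N_v$. Whether $\sigma\in B_u$ or $\sigma\in Z_u$ depends only on the colors of $N_u$ and their neighbors. So a flaw of $u$ can be introduced only if $N_u\cup N(N_u)$ meets $N_v$, which forces $u\in D_v$. Hence $\Gamma_{B_v},\Gamma_{Z_v}\subseteq S_v$.

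\textbf{Common setup for (b) and (c).} Since $\gamma^S\le\gamma^\emptyset$, it suffices to bound $\gamma(f)=\max_\tau \sum_{\sigma\in \InSet_f(\tau)}\rho_f(\sigma,\tau)$ under the uniform $\mu$. I fix the target $\tau$. Every $\sigma\in\InSet(\tau)$ agrees with $\tau$ outside $N_v$. Because $N_v$ is independent (triangle-freeness), each $u\in N_v$ is resampled from a list $L_u(\sigma)$. This list depends only on colors outside $N_v$, so $L_u(\sigma)=L_u(\tau)$.

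It follows that the transition probability is $\rho(\sigma,\tau)=\prod_{u\in N_v}|L_u(\tau)|^{-1}$, which is the same for every $\sigma$. Also, every $\sigma$ in the sum is an assignment of $N_v$ drawn from $\prod_u L_u(\tau)$. So the charge is at most the probability that a uniformly random assignment $\omega$ of $N_v$, with $\omega(u)\in L_u(\tau)$ independently, places the state in the flaw. That is, $\gamma(f)\le \Pr_\omega[\text{the resulting state lies in } f]$. Both (b) and (c) then reduce to tail bounds for this product experiment.

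\textbf{Part (b).} For $Z_v$ the event is that at least $L-1$ vertices of $N_v$ receive $\blank$. Each vertex of $N_v$ picks $\blank$ with probability $1/|L_u(\tau)|$, independently across vertices. The hard part is that $|L_u(\tau)|$ may be small. I would handle this by noting that the flaw selection rule addresses $Z_v$ only when no $B$-flaw is present. The definition of $\InSet$ requires $Z_v$ to be selected at $\sigma$, so $\sigma\notin B_u$ for all $u$, which gives $|L_u(\tau)|=|L_u(\sigma)|\ge L$.

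With that, the number of blank neighbors is stochastically dominated by $\mathrm{Bin}(\Delta,1/L)$. Its mean is $\Delta^{0.3}$, and a Chernoff/union bound gives
\[
\Pr\bigl[\ge L-1\bigr]\le \binom{\Delta}{L-1}L^{-(L-1)}\le (e\Delta/L^2)^{L-1}=(e\Delta^{-0.4})^{L-1}\ll\Delta^{-3}.
\]

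\textbf{Part (c), the main obstacle.} For $B_v$ the event is $|L_v(\omega)|<L$. Here $L_v$ is $\blank$ together with the colors of $C_v$ not used by any vertex of $N_v$ under $\omega$. This is the core estimate from Molloy's argument. For each $c\in C_v$, the probability that $c$ survives is
\[
\prod_{u\in N_v,\ c\in L_u}\bigl(1-1/|L_u|\bigr).
\]
The difficulty is that $|L_u|$ is not bounded below in this case (a $B$-flaw can be selected while other $B$-flaws are present), so the lower-bound argument from (b) is unavailable.

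I would first try the standard tactic, which works in the weighted setting. Write
\[
\sum_c \Pr[c \text{ survives}]\ \ge\ \ell\cdot\exp\Bigl(-\tfrac{1}{\ell}\sum_u\sum_{c\in L_u}\tfrac{1}{|L_u|-1}\Bigr)
\]
by AM--GM and $1-x\ge e^{-x/(1-x)}$. This is at least $\ell e^{-2\Delta/\ell}=\ell\Delta^{-1/3}\approx 6\Delta^{2/3}/\ln\Delta$. That is not $\gg L=\Delta^{0.7}$, so I would also exploit the $\blank$ option: use $\ell e^{-\Delta/\ell}\ge 6\Delta^{5/6}/\ln\Delta$ after handling vertices with tiny lists separately, since they are likely blank.

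To get concentration, the survival indicators are negatively correlated functions of the independent $\omega(u)$. I would apply a Talagrand- or McDiarmid-type bound: changing one $\omega(u)$ changes the count by at most $1$, and the count is certifiable. This gives failure probability $e^{-\Omega(\Delta^{c})}\le\Delta^{-3}$. I expect fixing the constants so that the expectation genuinely exceeds $2L$ to be the step needing the most care.
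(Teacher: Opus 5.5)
Parts (a) and (b) follow the paper's argument. You fix $\tau$ and use triangle-freeness to get $L_u(\sigma)=L_u(\tau)$ on $N_v$. The transition is then symmetric, so the charge is at most the probability that recoloring $N_v$ from $\tau$ lands in the flaw. For $Z_v$, the $B$-before-$Z$ priority gives $|L_u|\ge L$, and a union bound finishes.

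There are two small slips. In (a), the flaws of $u$ depend only on the colors on $N_u$, and you need exactly that. If they depended on $N(N_u)$ as you wrote, meeting $N_v$ would only give $\mathrm{dist}(u,v)\le 3$, not $u\in D_v$. In (b), the union bound gives $(e\Delta/(L(L-1)))^{L-1}$ rather than $(e\Delta/L^2)^{L-1}$; this is harmless.

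The genuine gap is the expectation estimate in (c). The outer sum runs over $c\in C_v$, i.e.\ over non-blank colors. Since $\blank\in L_u$ always, $|C_v\cap L_u|\le |L_u|-1$. Hence each neighbor contributes $\sum_{c\in C_v\cap L_u}\frac{1}{|L_u|-1}\le 1$ to the exponent, and nothing if $|L_u|=1$ (it is then forced blank). Your inner sum runs over all of $L_u$, including $\blank$. That is the source of the spurious factor $|L_u|/(|L_u|-1)\le 2$, and hence of the bound $\ell e^{-2\Delta/\ell}=6\Delta^{2/3}/\ln\Delta=o(L)$, which cannot give $\E X\ge 2L$.

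Your proposed repair, treating small-list neighbors separately because they are ``likely blank'', is not an argument as stated. A neighbor with $|L_u|=2$ is blank only with probability $1/2$, and your plan never quantifies how such vertices are handled. The repair is also unnecessary. The same AM--GM/Jensen computation, with the correct count, gives
$\E X\ge 1+\ell e^{-\Delta/\ell}=1+6\Delta^{5/6}/\ln\Delta\ge 2L$
from \emph{any} state $\tau$, with no lower bound on the $|L_u|$. So the ``main obstacle'' you identify does not exist. This uniformity in $\tau$ is exactly what lets the paper apply Molloy's Lemma~7 from an arbitrary state. The paper simply cites that lemma for both the expectation and the tail $\Pr[X<\frac12\E X]<e^{-\E X/8}$.

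Once the expectation is fixed, your concentration step is fine and differs from the paper only in the tool. $X$ is $1$-Lipschitz in at most $\Delta$ independent choices, and the deviation needed is $\E X-L=\Omega(\Delta^{0.7})$. McDiarmid's inequality therefore gives $\exp(-\Omega(\Delta^{0.4}))\le\Delta^{-3}$, with no need for negative correlation or a Talagrand-type bound.
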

\begin{proof}
Consider a fixed vertex $v$.

\vspace{6pt}
\noindent \emph{Proof of part (a).}
Recall that the resampling procedure for $B_v$ or $Z_v$ can only change the colors of vertices in $N_v$.
Hence, it never introduces any flaw $B_u$ or $Z_u$ with $u \notin S_v$.
This proves part (a).
% This shows that $\rho^S_{B_v}(\sigma, \tau) = \rho^S_{Z_v}(\sigma, \tau) = 0$ if $S \nsubseteq S_v$.
% Therefore, the charges $\gamma^S(B_v) = \gamma^S(Z_v) = 0$ and part (a) holds.

\vspace{6pt}
\noindent \emph{Proof of part (b).}
It suffices to upper bound $\gamma(Z_v) = \gamma^{\emptyset}(Z_v) = \max_{\tau \in \Omega} \sum_{\sigma \in \InSet_{Z_v}(\tau)} \rho_{Z_v}(\sigma, \tau)$.
Fix a state $\tau$.
Observe that each state $\sigma \in \InSet_{Z_v}(\tau)$ satisfies the following:
\begin{itemize}
    \item (P1) $\sigma \in Z_v$ and $\sigma$ does not contain any $B$-flaw because $Z_v$ should be flaw with the lowest index in $\pi$. This shows that $\InSet_{Z_v}(\tau)$ is a subset of $Z_v \setminus \bigcup_{u \in V} B_u$. 
    \item (P2) By (P1), $|L_u(\sigma)| \geq L$ for all $u \in V$.
    \item (P3) The transition $\sigma \rightarrow \tau$ is possible by resampling $Z_v$ at $\sigma$.
    Since $\algRecolor(\sigma, v)$ only changes the colors of $N_v$, we know that $\sigma(u) = \sigma(v)$ for all $u \notin N_v$.
    Furthermore, since $N_v$ is an independent set, $L_u(\sigma) = L_v(\sigma)$ for all $u \in N_v$.
\end{itemize}
By (P1) and (P3), if any neighbor $u \in N_v$ has $|L_u(\tau)| < L$, then $\InSet_{Z_v}(\tau) = \emptyset$ and $\sum_{\sigma \in \InSet_{Z_v}(\tau)} \rho_{Z_v}(\sigma, \tau) = 0$.
Hence, we assume that $|L_u(\tau)| \geq L$ for all $u \in N_v$.
(P3) further implies that $\Pr[\algRecolor(\sigma, v) = \tau]$, the probability that recoloring $v$ in $\sigma$ results in $\tau$, is equal to $\Pr[\algRecolor(\tau, v) = \sigma]$.
Therefore, we have
\begin{align*}
\sum_{\sigma \in \InSet_{Z_v}(\tau)} \rho_{Z_v}(\sigma, \tau)
&= \sum_{\sigma \in \InSet_{Z_v}(\tau)} \Pr[\algRecolor(\sigma, v) \text{ outputs } \tau] \\
&= \sum_{\sigma \in \InSet_{Z_v}(\tau)} \Pr[\algRecolor(\tau, v) \text{ outputs } \sigma] \\
&\leq \Pr[\algRecolor(\tau, v) \text{ outputs a state $\sigma \in Z_v$}]. & (\text{by (P1)})
\end{align*}
In \cite[Lemma 15]{molloy2019list}, it has been shown that the latter probability is at most $\frac{\binom{\Delta}{L-1}}{L^L}$, assuming that $|L_u(\tau)| \geq L$ for all $u \in N_v$.
Notice that
\[
    \frac{\binom{\Delta}{L - 1}}{L^L} \leq \left( \frac{e\Delta}{L-1} \right)^L / L^L = \left( \frac{e\Delta}{L(L-1)} \right)^L \leq \left( \frac{2e}{\Delta^{0.4}} \right)^{\Delta^{0.7}} \leq \Delta^{-3},
\]
where the first inequality uses $\binom{a}{b} \leq (\frac{ea}{b})^b$ \cite{cormen2009introduction}, and the last two inequalities hold for $\Delta \geq 100$.
This completes the proof of part (b).

% We complete part (b) by showing that the latter probability is at most $\Delta^{-3}$.
% % To upper bound the latter probability, consider the following setup:
% % We are in the state $\tau$, where every neighbor $u \in N_v$ has $|L_u(\tau)| \geq L$.
% % $\algRecolor(\tau, v)$ is invoked to recolor all neighbors of $v$, and we want to upper bound the probability that $v$ has at least $L$ blank neighbors.

% Given any $L$ neighbors of $v$, the probability that all of them are blank is $1/L^L$, because $|L_u(\tau)| \geq L$ for all $u \in N_v$.
% By union bounding over all $\binom{\Delta}{L}$ possible subsets of $L$ neighbors, the probability that $v$ has $L$ blank neighbors is at most 
% \[
%     \frac{\binom{\Delta}{L}}{L^L} \leq \left( \frac{e\Delta}{L} \right)^L / L^L = \left( \frac{e\Delta}{L^2} \right)^L = \left( \frac{e}{\Delta^{0.4}} \right)^{\Delta^{0.7}} \leq \Delta^{-3},
% \]
% where the first inequality uses $\binom{a}{b} \leq (\frac{ea}{b})^b$ \cite{cormen2009introduction}, and the second inequality holds for $\Delta \geq 100$.
% This completes the proof of part (b).

\vspace{6pt}
\noindent \emph{Proof of part (c).}
Similar to part (b), our goal is to upper bound $\max_{\tau \in \Omega}  \sum_{\sigma \in \InSet_{B_v}(\tau)} \rho_{B_v}(\sigma, \tau)$.
Fix a state $\tau$.
By the same argument, $\sum_{\sigma \in \InSet_{B_v}(\tau)} \rho_{B_v}(\sigma, \tau) \leq \Pr[\algRecolor(\tau, v) \text{ outputs a state in $B_v$}]$.
To bound the latter probability, consider the following setup:
We start from an arbitrary state $\tau$ and perform $\algRecolor(\tau, v)$ on $v$.
Let $\sigma$ be the resulting state and let $X$ denote $|L_v(\sigma)|$.
We want to show that $\Pr[X < L] \leq \Delta^{-3}$.

Recall that $q$ is the number of colors.
For any $q$, \cite[Lemma 7]{molloy2019list} gave the following concentration bound for $X$:
\begin{itemize}
    \item (Q1) $\E[X] \geq qe^{-\Delta/q}$, and
    \item (Q2) $\Pr[X < \frac{1}{2} \E[X]] < \exp(-\frac{1}{8} \E[X])$.
\end{itemize}
\noindent By our choice of $q$ and $L$, $\E[X] \geq \frac{5\Delta}{\ln \Delta} \cdot e^{-\ln \Delta / 6} = \frac{5\Delta^{5/6}}{\ln \Delta} \geq 2L$ for $\Delta \geq 10$.
By (Q2), $\Pr[X < L] < \exp(-\frac{L}{4}) \leq \Delta^{-3}$ for $\Delta \geq 100$.
This proves part (c).
\end{proof}

We now examine \cref{cond:general} using the bounds in \cref{lem:tri-charge}.
Let $\gamma = \Delta^{-3}$ be an upper bound on all charges.
Let $\psi_F = 2\gamma$ for a flaw $F$, and thus we have

\begin{align*}
    \frac{1}{\psi_F} \sum_{S \subseteq \Gamma_F} \gamma_F^S \prod_{j \in S} \psi_j
    \leq &~\frac{1}{2\gamma} \sum_{S \subseteq \Gamma_F} \gamma \prod_{j \in S} \psi_j \\ 
    = &~\frac{1}{2} \prod_{j \in \Gamma_F} (1 + \psi_j) \\
    % \leq &\frac{1}{2} \prod_{j \in \Gamma_F} \exp(2\gamma) \\
    \leq &~\frac{1}{2} \exp(\sum_{j \in \Gamma_F} 2\gamma) & (1+x \leq e^x \text{ and } \psi_j = 2\gamma)\\
    \leq &~\frac{1}{2} \exp((2\Delta^2+2) \cdot 2\Delta^{-3}) & (|\Gamma_F| \leq 2(\Delta^2+1) \text{ by (P1)})
    % \leq &~\frac{1}{2} \exp(4\Delta^{-1}+4\Delta^{-2})
\end{align*}

\noindent 
The last term is at most $3/4$ for $\Delta \geq 100$.
Thus, \cref{cond:general} is satisfied with slack parameter $\eps = 1/4$ and $\psi_F = 2\Delta^{-3}$ for each flaw $F$.

\paragraph{Dynamic setting.}
We now show that \cref{cond:general} is satisfied at any point of our algorithm, even when the current set of flaws does not correspond to any static graph.

In our algorithm, each edge update $(u, v)$ corresponds to eight flaw updates:
We first remove the old version of $B_u, Z_u, B_v,$ and $Z_v$, and then add the new version of them.
Let $\cF_1$ and $\cF_2$ be, respectively, the sets of flaws before and after the eight flaw updates.
Since each of $\cF_1$ and $\cF_2$ corresponds to a static graph, they satisfy \cref{cond:general}.
The sets of flaws between the eight updates do not correspond to any static graph.
However, these sets are either subsets of $\cF_1$ or of $\cF_2$.
Hence, by \cref{obs:subset-convergence}, \cref{cond:general} is satisfied after each flaw update.

\end{document}